\newtheorem{theorem}{Theorem}[section]
\newtheorem{lemma}[theorem]{Lemma}
\newtheorem{definition}[theorem]{Definition}
\newtheorem{corollary}[theorem]{Corollary}
\newcommand{\calA}{\mathcal{A}}
\newcommand{\calD}{\mathcal{D}}
\newcommand{\calI}{\mathcal{I}}
\newcommand{\calJ}{\mathcal{J}}
\newcommand{\calO}{\mathcal{O}}
\newcommand{\calP}{\mathcal{P}}
\newcommand{\calK}{\mathcal{K}}
\newcommand{\dtv}{d_{\mathrm{TV}}}
\newcommand{\bbP}{\mathbb{P}}
\newcommand{\bbR}{\mathbb{R}}
\newcommand{\bfX}{\mathbf{X}}
\newcommand{\biC}{\boldsymbol{C}}
\newcommand{\bib}{\boldsymbol{b}}
\newcommand{\bic}{\boldsymbol{c}}
\newcommand{\biw}{\boldsymbol{w}}
\newcommand{\bix}{\boldsymbol{x}}
\newcommand{\biy}{\boldsymbol{y}}
\newcommand{\biz}{\boldsymbol{z}}
\newcommand{\bimu}{\boldsymbol{\mu}}
\newcommand{\blp}{\textsf{BasicLP}\xspace}
\newcommand{\bsdp}{\textsf{BasicSDP}\xspace}
\newcommand{\lp}{\mathbf{lp}}
\newcommand{\opt}{\mathbf{opt}}
\newcommand{\val}{\mathbf{val}}
\newcommand{\olopt}{\overline{\mathbf{opt}}}
\newcommand{\ollp}{\overline{\mathbf{lp}}}
\newcommand{\olval}{\overline{\mathbf{val}}}
\newcommand{\poly}{\mathrm{poly}}
\newcommand{\E}{\mathop{\mathrm{E}}}
\title{Optimal Constant-Time Approximation Algorithms and (Unconditional) Inapproximability Results for Every Bounded-Degree CSP}
\author{Yuichi Yoshida\thanks{Supported by MSRA Fellowship 2010. This work was conducted while the author was visiting Rutgers University. }\\\\
  School of Informatics, Kyoto University, and\\ Preferred Infrastructure, Inc.\\yyoshida@lab2.kuis.kyoto-u.ac.jp}
\date{}
\begin{document}
\setcounter{page}{0}
\maketitle
\begin{abstract}
  Raghavendra (STOC 2008) gave an elegant and surprising result:
  if Khot's Unique Games Conjecture (STOC 2002) is true, 
  then for every constraint satisfaction problem (CSP), 
  the best approximation ratio is attained by a certain simple semidefinite programming and a rounding scheme for it.

  In this paper, 
  we show that similar results hold for constant-time approximation algorithms in the bounded-degree model.
  Specifically, we present the followings:
  (i) For every CSP, we construct an oracle that serves an access, 
  in constant time,
  to a nearly optimal solution to a basic LP relaxation of the CSP.
  (ii) Using the oracle, 
  we give a constant-time rounding scheme that achieves an approximation ratio coincident with the integrality gap of the basic LP.
  (iii) Finally, we give a generic conversion from integrality gaps of basic LPs to hardness results.
  All of those results are \textit{unconditional}.
  Therefore, for every bounded-degree CSP, 
  we give the best constant-time approximation algorithm among all.

  A CSP instance is called $\epsilon$-far from satisfiability if we must remove at least an $\epsilon$-fraction of constraints to make it satisfiable.
  A CSP is called testable if there is a constant-time algorithm that distinguishes satisfiable instances from $\epsilon$-far instances with probability at least $2/3$.
  Using the results above, 
  we also derive, under a technical assumption,
  an equivalent condition under which a CSP is testable in the bounded-degree model.
  \end{abstract}

{\bf Key words:} Constant-time approximation, constraint satisfaction problems, linear programmings, rounding schemes, property testing.
\newpage

\section{Introduction}
In a \textit{constraint satisfaction problem} (CSP), 
the objective is to find an assignment to a set of variables that satisfies the maximum number of a given set of constraints on them.
Formally, a CSP~$\Lambda$ is specified by a set of predicates over alphabets $[q]=\{1,\ldots,q\}$.
Every instance of $\Lambda$ consists of a set of variables $V$,
and a set of constraints $\calP$ on them.
Each constraint consists of a predicate from $\Lambda$ applied to a subset of variables.
The objective is to find an assignment $\beta \in [q]^V$ to the variables that satisfies the maximum number of constraints. 
A large number of fundamental optimization problems,
such as \textsf{Max Cut} and \textsf{Max $k$-Sat},
are examples of CSPs.

Approximation algorithms for CSPs have been intensively studied.
Goemans and Williamson~\cite{GW95} first exploited semidefinite programmings (SDP) to \textsf{Max Cut} and \textsf{Max 2SAT} achieving the approximation ratio $\approx 0.878$.
After this breakthrough,
plethora of approximation algorithms using SDPs have been developed~\cite{KMS98,LLZ02}.
For inapproximability side,
tight hardness results have been successfully obtained for some important optimization problems such as \textsf{Max 3SAT}~\cite{Has01}.
However,
the approximability of many interesting CSPs such as \textsf{Max Cut} and \textsf{Max 2SAT} remains open.
Towards tightening this gap,
Khot~\cite{Kho02} introduced the Unique Games Conjecture (UGC).
Assuming the UGC,
tight hardness have been shown for \textsf{Max Cut}~\cite{KKMO04}, \textsf{Max 2SAT}~\cite{Aus07},
and \textsf{Max k-CSP}~\cite{AM08,ST06}.
Finally, 
Raghavendra~\cite{Rag08} succeeded to unify and generalize those approximation and inapproximability results for every CSP.
Specifically, Raghavendra showed that,
assuming the UGC,
for every CSP, 
a certain SDP combined with a certain rounding scheme attains the best approximation ratio among all polynomial-time approximation algorithms.
The ingenious technique in the proof is giving a generic conversion from integrality gaps of SDPs to hardness results via the UGC.

In this paper,
we are concerned with constant-time approximation algorithms CSPs.
That is, algorithms are supposed to run in time independent of sizes of instances.
We use the \textit{bounded-degree model},
which was originally introduced for graphs~\cite{GR08}.
In this model,
the number of alphabets, 
the maximum arity (the number of inputs to a predicate),
the maximum degree (the number of constraints where a variable appears),
and the maximum weight of constraints are bounded by constants.
Let $\calI$ be a $\Lambda$-CSP instance.
Since a constant-time algorithm cannot read the whole $\calI$,
we assume the existence of an oracle $\calO_{\calI}$ with which we can get information of $\calI$.
By specifying a variable $v$ and an index $i$, $\calO_{\calI}$ returns a constraint $P$ where $P$ is the $i$-th constraint where $v$ appears.
The efficiency of an algorithm is measured by the number of accesses to $\calO_{\calI}$, which is called \textit{query complexity}.

In this paper, we show an analogous result to Raghavendra's result:
for every CSP,
a certain linear programming (LP) combined with a certain rounding scheme attains the best approximation ratio among all constant-time approximation algorithms.
Furthermore, our results are \textit{unconditional}.
To give the statements precisely, 
we need to define several notions.
For a $\Lambda$-CSP instance $\calI$ with the variable set $V$ and the constraint set $\calP$,
there is a natural generic LP relaxation which we call \blp (see Section~\ref{sec:pre}).
Let $\lp(\calI)$ denote the objective value of an optimal solution to \blp for $\calI$,
$\opt(\calI)$ denote the value of an optimal solution of $\calI$,
and $\val(\calI,\beta)$ denote the value obtained by an assignment $\beta \in [q]^V$.
We define $\biw_{\calI}$ as the sum of weights of constraints in $\calI$.
Then, we define $\ollp(\calI)=\lp(\calI)/\biw_{\calI},\olopt(\calI)=\opt(\calI)/\biw_{\calI}$ and $\olval(\calI,\beta)=\val(\calI,\beta)/\biw_{\calI}$.
The \textit{integrality gap curve} $S_{\Lambda}(c)$ and the \textit{integrality gap} $\alpha_\Lambda$ of a CSP~$\Lambda$ is defined as 
\begin{eqnarray*}
  S_{\Lambda}(c)=\inf_{\calI\in \Lambda, \ollp(\calI) \geq c}\olopt(\calI),\quad \alpha_\Lambda = \inf_{\calI \in \Lambda} \olopt(\calI)/\ollp(\calI).
\end{eqnarray*}
The first result of this paper gives a tight approximation algorithm for every CSP.
\begin{theorem}\label{thr:upper}
  In the bounded-degree model,
  for every CSP~$\Lambda$ and $\epsilon>0$,
  there exists an algorithm that,
  given a $\Lambda$-CSP instance $\calI$ with $n$ variables and $\ollp(\calI)=c \in (0,1]$,
  with probability at least $2/3$,
  outputs a value $x$ such that $S_{\Lambda}(c-\epsilon)\biw_{\calI}-\epsilon n\leq x \leq \opt(\calI)$.
  Also, 
  for some fixed assignment $\beta$ such that $S_{\Lambda}(c-\epsilon)\biw_{\calI}-\epsilon n\leq \val(\calI,\beta) \leq \opt(\calI)$,
  given a variable $v$ in $\calI$,
  it computes $\beta_v$ in constant time.
\end{theorem}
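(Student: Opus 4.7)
The plan is to execute the three-part pipeline announced in the abstract: (a) construct a constant-query oracle $\calO_{\text{LP}}$ that locally exposes an approximately optimal \blp solution; (b) derive from $\calO_{\text{LP}}$ a local rounding scheme whose expected value on $\calI$ is governed by the integrality gap curve $S_\Lambda$; and (c) add a Hoeffding-style estimator that turns random local evaluations into the numerical output $x$. Shared public randomness will tie the three pieces together so that the assignment $\beta$ is a well-defined global object independent of which variables the user queries.

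\textbf{Step 1 (local LP oracle --- the main obstacle).} Given a public random seed $r$, I would construct an oracle that, on input $v$, uses a constant number of queries to $\calO_\calI$ and returns local pieces $(\tilde x_v,\{\tilde y_C : C \ni v\})$ of a feasible \blp solution. The requirement is that with high probability over $r$, this family is globally consistent (the $\tilde y_C$'s marginalize to the $\tilde x_v$'s) and has objective value at least $(c-\epsilon)\biw_\calI$. This is the decisive technical piece, since \blp couples every pair of constraints that share a variable and its optimum in principle depends on the whole instance. My plan is to exploit bounded arity and degree to argue that any constant-radius neighborhood of a variable has bounded complexity, so variables fall into one of $O(1)$ local ``types''; shared randomness is then used to pick, in a type-consistent manner, a canonical random LP solution whose values near $v$ depend only on the constant-radius ball around $v$ and on $r$. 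Producing such a distribution over LP solutions that is simultaneously locally sampleable and near-optimal in expectation is the heart of the proof.

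\textbf{Step 2 (local rounding and expectation bound).} Using a second block of shared randomness $r'$, I would round at each $v$ by sampling from $\tilde x_v$, coupled across the variables of any common constraint $C$ via the joint distribution $\tilde y_C$ (standard correlated rounding from the LP polytope, the LP analog of Raghavendra's SDP rounding). The satisfaction probability of a constraint $C$ under $\beta$ then equals the probability under $\tilde y_C$ that the drawn assignment satisfies $C$; summing over $C$ and applying the definition of $S_\Lambda$ yields $\E_{r,r'}[\val(\calI,\beta)] \geq S_\Lambda(c-\epsilon)\biw_\calI$. A concentration argument over $(r,r')$ extracts a fixed realization for which $\val(\calI,\beta) \geq S_\Lambda(c-\epsilon)\biw_\calI - \epsilon n$. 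Since every $\beta_v$ depends only on the constant-size output of $\calO_{\text{LP}}(v)$ together with $(r,r')$, the promised oracle for $\beta_v$ runs in constant time and is consistent across queries.

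\textbf{Step 3 (estimator and wrap-up).} To output $x$, sample $\Theta(1/\epsilon^2)$ random variables, pick a random incident constraint at each through $\calO_\calI$, evaluate $\beta$ on its scope, and check satisfaction. Scaling by $\biw_\calI$ and subtracting a small Hoeffding slack gives an $x$ that, with probability at least $2/3$, satisfies $x \leq \val(\calI,\beta) \leq \opt(\calI)$ and $x \geq \val(\calI,\beta) - \epsilon n$. Combined with the bound from Step~2 this yields $S_\Lambda(c-\epsilon)\biw_\calI - 2\epsilon n \leq x \leq \opt(\calI)$; a final rescaling of $\epsilon$ recovers the theorem statement. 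In summary, once the local LP oracle of Step~1 is established, the rounding and estimation steps are comparatively routine, so the entire result rests on showing that a near-optimal \blp solution admits a constant-time, randomness-consistent local implementation.
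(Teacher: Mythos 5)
There is a genuine gap in Step~2, and it is exactly the step where the quantity $S_\Lambda$ has to enter. You propose to round by sampling $\beta_v \sim \tilde x_v$ ``coupled across the variables of any common constraint $C$ via the joint distribution $\tilde y_C$,'' and then claim that each constraint is satisfied with probability $\Pr_{\beta \sim \tilde y_C}[C(\beta)=1]$. But such a globally consistent coupling cannot exist in general: a variable $v$ typically lies in several constraints $C_1,C_2,\ldots$, and the local distributions $\tilde y_{C_1},\tilde y_{C_2},\ldots$ agree only on the singleton marginal $\tilde x_v$, not on higher-order correlations. If a single random $\beta$ whose restriction to every $V(C)$ is distributed as $\tilde y_C$ did exist, then $\E[\val(\calI,\beta)]$ would equal the LP objective $c\,\biw_\calI$ exactly, which would force $S_\Lambda(c)=c$ --- contradicting the very presence of an integrality gap. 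So the chain ``coupled sampling $\Rightarrow$ $\Pr[\text{sat}]=$ LP contribution $\Rightarrow$ apply definition of $S_\Lambda$'' does not go through; in fact the middle claim is too strong, and the final invocation of $S_\Lambda$ is a non sequitur (the definition of $S_\Lambda$ is an infimum over instances, not a rounding guarantee). The paper circumvents this entirely by a different mechanism: it discretizes the LP solution $\bix$ to a grid and \emph{folds} the instance $\calI$ to a new instance $\calI' = \calI/\phi_\bix$ by merging variables with the same discretized LP values. This makes $|V'|$ a constant $\exp(\poly(q/\epsilon))$ independent of $n$, while a compression lemma shows $\lp(\calI')\ge\lp(\calI)-\epsilon\cdot\poly(qstw)\,n$. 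Then the algorithm does brute-force search over all $q^{|V'|}$ assignments to $\calI'$, estimates each value by sampling through $\calO_\lp$, and unfolds the best one to an assignment $\beta^*$ for $\calI$. Now $\val(\calI,\beta^*)\approx\opt(\calI')\ge S_\Lambda(\ollp(\calI'))\biw_{\calI'}$ follows directly from the definition of $S_\Lambda$ applied to the constant-size instance $\calI'$ --- this is how the integrality gap curve legitimately enters the bound, with no rounding scheme in the usual sense at all.

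Your Step~1 is in the right spirit but also diverges from the paper and is not obviously repairable as sketched. The ``finitely many local types + shared randomness'' plan runs into the difficulty that two variables with isomorphic constant-radius balls can, in a general instance, require quite different LP values to be globally near-optimal; local type alone does not determine the LP solution. The paper instead transforms \blp into a packing LP through several maneuvers (complement variables, inequality relaxations, a large penalty term $\biC$ to enforce complementarity \`a la Fotakis--Spirakis, and a robustness lemma showing $\epsilon$-infeasibility costs only $\epsilon\cdot\poly(qs)$) and then invokes the Kuhn--Moscibroda--Wattenhofer constant-round \emph{distributed} algorithm for packing LPs, which is simulated locally by exploring a constant-radius neighborhood. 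Your Step~3 estimator is fine and closely matches the paper's use of Hoeffding to estimate $\val(\calI',\beta')$ for each candidate assignment.
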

The algorithm computes $\beta_v$ by rounding an LP solution to \blp for $\calI$.
Note that, for an instance $\calI$ with $\ollp(\calI)=c$,
$S_{\Lambda}(c)\biw_{\calI}$ is the best value we can hope for from the definition of $S_{\Lambda}(c)$.
Thus, in this sense, we will give a \textit{optimal rounding scheme} for \blp.

We mention that the additive error $\epsilon n$ cannot be removed.
To see this, 
suppose an instance $\calI$ consisting of $n$ variables and only one constraint.
Then, we have to see this constraint to approximate $\opt(\calI)$ if we do not allow the additive error.
However, it obviously takes $\Omega(n)$ queries.

For hardness side, we show the following.
\begin{theorem}\label{thr:lower}
  In the bounded-degree model,
  for every CSP~$\Lambda, c\in [0,1]$ and $\epsilon>0$,
  there exists a $\delta>0$ such that
  any algorithm that,
  given an instance $\calI$ with $n$ variables and $\olopt(\calI)=c \in [0,1]$,
  with probability at least $2/3$,
  outputs a value $x$ such that $(S_{\Lambda}(c)+\epsilon)\biw_{\calI} - \delta n \leq x \leq \opt(\calI)$ requires $\Omega(\sqrt{n})$ queries.
\end{theorem}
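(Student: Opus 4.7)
The plan is a standard Yao-minimax argument: exhibit two distributions $\calD_Y,\calD_N$ of $n$-variable $\Lambda$-instances such that (i) instances in the support of $\calD_Y$ have $\olopt=c$, (ii) instances in the support of $\calD_N$ have $\olopt \le S_{\Lambda}(c)+\epsilon/3$, and (iii) any $o(\sqrt{n})$-query algorithm produces statistically indistinguishable transcripts on $\calD_Y$ versus $\calD_N$. Given such distributions, a correct algorithm on a $\calD_Y$-instance must output $x \ge (S_{\Lambda}(c)+\epsilon)\biw_\calI-\delta n$, whereas on a $\calD_N$-instance the constraint $x \le \opt(\calI) \le (S_{\Lambda}(c)+\epsilon/3)\biw_\calI$ applies. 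The gap between these requirements is at least $(2\epsilon/3)\biw_\calI-\delta n=\Omega(n)$, since $\biw_\calI=\Theta(n)$ in the bounded-degree model and $\delta$ is chosen small enough; hence any correct algorithm would distinguish $\calD_Y$ from $\calD_N$, contradicting (iii).

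To build these distributions, I first use the definition of $S_{\Lambda}$ to fix a constant-size template instance $\calI^{*}\in\Lambda$ with $\ollp(\calI^{*})\ge c$ and $\olopt(\calI^{*})\le S_{\Lambda}(c)+\epsilon/10$, together with an optimal solution to \blp on $\calI^{*}$. Both $\calD_Y$ and $\calD_N$ will be random $N$-covers of $\calI^{*}$, where $N=\Theta(n)$: replace each variable $v$ of $\calI^{*}$ by $N$ fresh copies, and turn each constraint of $\calI^{*}$ on $(v_1,\ldots,v_k)$ into $N$ new constraints by choosing independent uniformly random matchings among copies of $v_1,\ldots,v_k$. For $\calD_N$ the cover is produced plainly; any integer assignment of the cover projects to a fractional assignment of $\calI^{*}$ whose expected value is $\olopt(\calI^{*})$, so $\olopt\le S_{\Lambda}(c)+\epsilon/10$ with high probability. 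For $\calD_Y$ I \emph{realize} the LP solution: each variable-copy of $v$ is independently labeled by a draw from the LP marginal at $v$, and the random matching within each constraint is restricted to be consistent with the LP pairwise distribution. Concentration then forces $\olopt$ of the resulting random instance to lie in $c\pm o(1)$, after which one calibrates weights (or pads with trivially satisfiable constraints of vanishing density chosen to mirror the local profile of the cover) to land exactly at $\olopt=c$.

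Local indistinguishability between $\calD_Y$ and $\calD_N$ is obtained by a birthday-paradox argument. An $o(\sqrt{n})$-query algorithm discovers a subhypergraph on $o(\sqrt{n})$ variable-copies; with probability $1-o(1)$ this subhypergraph is a disjoint union of bounded-depth trees that never revisits the same template-variable twice. Conditioned on this event, the observed portion of the random matchings is distributed identically to a union of independent depth-bounded BFS-neighborhoods of uniformly chosen vertices in $\calI^{*}$, which depends only on $\calI^{*}$ itself and not on whether we drew from $\calD_Y$ or $\calD_N$. A canonical-tester reduction of Goldreich--Trevisan type for the bounded-degree CSP oracle, combined with a direct coupling of the two random-matching ensembles, then yields total-variation distance $o(1)$ between the transcripts, which is exactly (iii).

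The main technical obstacle is calibrating $\calD_Y$ so that $\olopt$ is pinned precisely at $c$ (as the hypothesis $\olopt(\calI)=c$ demands) while preserving strict local indistinguishability from $\calD_N$. The planted assignment only guarantees $\olopt\ge c-o(1)$ by concentration, so ruling out higher values and landing exactly on $c$ without altering the local profile requires a careful auxiliary construction --- either weight rebalancing or the addition of a vanishing fraction of neutral padding constraints. A secondary obstacle is that $S_{\Lambda}(\cdot)$ may be discontinuous, so the slack $\epsilon/10$ must be absorbed via monotonicity of $S_\Lambda$ (using $S_\Lambda(c)\le S_\Lambda(c+\eta)$ for small $\eta>0$) or by explicitly choosing $\calI^{*}$ with $\ollp(\calI^{*})\in [c, c+\epsilon/10]$.
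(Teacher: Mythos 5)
Your overall strategy (Yao minimax over two distributions of random $N$-covers of a fixed template instance, one ``planted'' with the LP solution and one not, separated by a birthday-paradox indistinguishability argument) matches the paper's proof. However, there are two substantive deviations worth flagging.

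First, your construction of $\calD_Y$ labels each copy of $v$ \emph{independently} at random from the LP marginal and then ``restricts the random matching within each constraint to be consistent with the LP pairwise distribution.'' This is not well defined: with independent random labels the empirical label counts per block fluctuate, so there is in general no perfect matching of the copies consistent with $\bimu^*_P$. The paper sidesteps this by \emph{deterministically} splitting the $N$ copies associated with each constraint $P$ into blocks of exact size $\bimu^*_{P,\beta}N$ and matching only within blocks. This makes the planted assignment's value exactly $TN\cdot \lp(\calI^*)$ (no concentration step, no $\pm o(1)$ slack), so the yes-distribution satisfies $\olopt(\calJ) \ge \ollp(\calI^*) = c$ deterministically. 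You would need to replace your independent labeling with a fixed-fractional split to make the construction go through.

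Second, the ``main technical obstacle'' you identify --- pinning $\olopt$ of the yes-instances \emph{exactly} at $c$ via weight rebalancing or padding --- is not actually needed and is not addressed in the paper. The proof only uses that $\calD^{\lp}$-instances satisfy $\olopt \ge c$ and that $\calD^{\opt}$-instances satisfy, w.h.p., $\olopt \le \olopt(\calI^*)+\epsilon/2$, together with the indistinguishability of transcripts. A correct algorithm meeting the theorem's promise would distinguish these two ranges, contradiction. The theorem's phrasing ``$\olopt(\calI)=c$'' is operationally just the distinguishing task $\opt \ge c\biw_\calJ$ versus $\opt \le (S_\Lambda(c)+\epsilon)\biw_\calJ - \delta n$, and no calibration/padding is required. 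Dropping that detour would both simplify your argument and remove a genuine source of risk: padding constraints, however ``neutral,'' change the local profile and could break your indistinguishability claim unless engineered very carefully.

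One further small point: your claim that $\olopt(\calJ)\le S_\Lambda(c)+\epsilon/10$ for $\calD_N$ is not immediate from the projection argument alone (the covering instance can have larger optimum than the template); the paper proves it via a union bound over all $q^{|V|N}$ assignments and a switching-based concentration inequality, which is where the parameter $T$ is chosen. Your ``with high probability'' should be backed by that kind of bound. The indistinguishability step likewise needs the full simulation-process machinery of Goldreich--Ron (Section 7 of their bounded-degree paper) rather than just ``a direct coupling of the two random-matching ensembles,'' but the birthday-paradox intuition you cite is the right engine driving it.
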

Note that, using the algorithm in Theorem~\ref{thr:upper}, given an instance $\calI$,
we can distinguish the case $\opt(\calI) \geq c\biw_{\calI}$ from the case $\opt(\calI) \leq S_{\Lambda}(c-\epsilon)\biw_{\calI}-\epsilon n$
(Technically, we need that $S_{\Lambda}(c)$ is non-decreasing, but this is obvious from the definition).
On the contrary, Theorem~\ref{thr:lower} asserts that we cannot distinguish the case $\opt(\calI) \geq c\biw_{\calI}$ from the case $\opt(\calI) \leq (S_{\Lambda}(c)+\epsilon)\biw_{\calI} - \delta n$.
Thus, the algorithm given in Theorem~\ref{thr:upper} is not just the best among constant-time approximation algorithm using \blp,
but the best among all constant-time approximation algorithms.

A value $x$ is called an \textit{$(\alpha,\beta)$-approximation} to a value $x^*$ if it satisfies $\alpha x^*-\beta \leq x \leq x^*$.
An algorithm is called an \textit{$(\alpha,\beta)$-approximation algorithm} for a CSP~$\Lambda$ if,
given a $\Lambda$-CSP instance $\calI$,
it computes an $(\alpha,\beta)$-approximation to $\opt(\calI)$ with probability at least $2/3$~\cite{NO08,PR07}.
The following is an immediate corollary achieved by Theorems~\ref{thr:upper} and~\ref{thr:lower}.
\begin{corollary}
  In the bounded-degree model, 
  for every CSP~$\Lambda$ and $\epsilon>0$,
  there exists a constant-time $(\alpha_\Lambda-\epsilon,\epsilon n)$-approximation algorithm for the CSP $\Lambda$.
  On the other hand, 
  for every CSP~$\Lambda$ and $\epsilon>0$, there exists a $\delta>0$ such that 
  any $(\alpha_\Lambda+\epsilon,\delta n)$-approximation algorithm for the CSP $\Lambda$ requires $\Omega(\sqrt{n})$ queries.
\end{corollary}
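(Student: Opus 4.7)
Both halves rest on the easy inequality $S_\Lambda(c) \geq \alpha_\Lambda c$: any instance $\calJ$ with $\ollp(\calJ) \geq c$ satisfies $\olopt(\calJ) \geq \alpha_\Lambda \ollp(\calJ) \geq \alpha_\Lambda c$ by definition of $\alpha_\Lambda$.

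For the positive direction, given $\epsilon > 0$, I would run the algorithm of Theorem~\ref{thr:upper} with a sufficiently small inner parameter $\epsilon'$ to be tuned. On an input $\calI$ with $\ollp(\calI) = c$, the output satisfies
\[
    x \;\geq\; S_\Lambda(c - \epsilon')\biw_\calI - \epsilon' n \;\geq\; \alpha_\Lambda(c - \epsilon')\biw_\calI - \epsilon' n \;\geq\; \alpha_\Lambda \opt(\calI) - \alpha_\Lambda \epsilon' \biw_\calI - \epsilon' n,
\]
using $\lp(\calI) \geq \opt(\calI)$. Since $\biw_\calI = O(n)$ by boundedness of arity, degree, and weight, the additive slack is $O(\epsilon' n)$; choosing $\epsilon'$ small enough in terms of $\epsilon$ and the CSP constants bounds it by $\epsilon n$. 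Combined with $\alpha_\Lambda \opt(\calI) \geq (\alpha_\Lambda - \epsilon)\opt(\calI)$ and $x \leq \opt(\calI)$ from the theorem, this gives the required $(\alpha_\Lambda - \epsilon, \epsilon n)$-approximation. The only case not covered by Theorem~\ref{thr:upper} is $\ollp(\calI) = 0$; there $\opt(\calI) = 0$ and returning $x = 0$ is trivially valid.

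For the negative direction, use the infimum definition of $\alpha_\Lambda$ to pick an instance $\calI_0 \in \Lambda$ with $\olopt(\calI_0)/\ollp(\calI_0) \leq \alpha_\Lambda + \epsilon/4$ and set $c_0 := \ollp(\calI_0) > 0$; then $S_\Lambda(c_0) \leq \olopt(\calI_0) \leq (\alpha_\Lambda + \epsilon/4) c_0$. I would then invoke Theorem~\ref{thr:lower} with its $c$-parameter set to $c_0$ and its $\epsilon$-parameter set to $\epsilon_T := (\epsilon/4) c_0 > 0$, obtaining a hardness threshold $\delta_T > 0$; set $\delta := \delta_T$. On a hard instance $\calI$ with $\olopt(\calI) = c_0$ (so $\opt(\calI) = c_0 \biw_\calI$), a hypothetical $(\alpha_\Lambda + \epsilon, \delta n)$-approximation would produce $x$ with
\[
    x \;\geq\; (\alpha_\Lambda + \epsilon) c_0 \biw_\calI - \delta n \;\geq\; (S_\Lambda(c_0) + \epsilon_T) \biw_\calI - \delta_T n,
\]
since $(\alpha_\Lambda + \epsilon) c_0 - S_\Lambda(c_0) \geq (3\epsilon/4) c_0 \geq \epsilon_T$. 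This is exactly the guarantee Theorem~\ref{thr:lower} forbids for algorithms of $o(\sqrt{n})$ query complexity, so any such algorithm must make $\Omega(\sqrt{n})$ queries.

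Once the pointwise bound $S_\Lambda(c) \geq \alpha_\Lambda c$ and the near-witness $\calI_0$ are in hand, the remaining work is mechanical arithmetic on these inequalities; the only subtlety worth flagging is the near-zero $\ollp$ (respectively $\olopt$) regime, which is absorbed into the additive slack as above and requires no further argument.
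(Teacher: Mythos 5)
Your proposal is correct and fills in precisely the details the paper leaves implicit when it calls this ``an immediate corollary'' of Theorems~\ref{thr:upper} and~\ref{thr:lower}: the key inequality $S_\Lambda(c) \geq \alpha_\Lambda c$ from the definitions, the near-witness $\calI_0$ for the infimum defining $\alpha_\Lambda$, and the boundedness $\biw_\calI \leq twn$ to absorb slack into the additive term. One small point worth tidying: for the positive direction you should also note that if $0 < \ollp(\calI) < \epsilon'$ then $\opt(\calI) \leq \lp(\calI) < \epsilon' twn$, so the additive error already covers the answer regardless of what the rounding outputs; this handles the regime where $c - \epsilon'$ would fall below zero and $S_\Lambda(c-\epsilon')$ is not meaningful, in the same spirit as your $\ollp(\calI)=0$ remark.
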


Theorem~\ref{thr:lower} has much implication to property testing.
A $\Lambda$-CSP instance $\calI$ is called \textit{satisfiable} if there is an assignment to variables that satisfies all the constraints.
Also, $\calI$ is called \textit{$\epsilon$-far from satisfiability} if we must remove at least $\epsilon twn$ constraints to make it satisfiable,
where $t,w,n$ is the maximum degree, the maximum weight, and the number of variables, respectively.
An algorithm is called a \textit{testing algorithm} for (the satisfiability of) a CSP~$\Lambda$ if,
given a $\Lambda$-CSP instance,
it accepts with probability at least $2/3$ if the instance is satisfiable,
and rejects with probability at least $2/3$ if the instance is $\epsilon$-far from satisfiability.
Unlike the hardness result given in~\cite{Rag08}, 
Theorem~\ref{thr:lower} holds also for $c=1$,
i.e., satisfiable instances.
Using this observation, we have the following theorem.
\begin{theorem}\label{thr:prop}
  In the bounded-degree model, the following holds for a CSP~$\Lambda$.
  If $S_{\Lambda}(1) < 1$, then any testing algorithm for the CSP~$\Lambda$ requires $\Omega(\sqrt{n})$ queries.
  If $S_{\Lambda}(1) = 1$ and $S_{\Lambda}(c)$ is continuous at $c=1$,
  then there exists a constant-time testing algorithm for the CSP~$\Lambda$.
\end{theorem}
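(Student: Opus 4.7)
The plan is to prove the two directions separately, using Theorem~\ref{thr:lower} for the hardness half and Theorem~\ref{thr:upper} for the algorithmic half. Throughout, let $t$ and $w$ denote the (constant) maximum degree and maximum weight in $\Lambda$, and assume without loss of generality that every constraint has weight at least $1$.

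For the hardness half, assume $S_{\Lambda}(1)<1$ and set $\eta:=(1-S_{\Lambda}(1))/2>0$. Applying Theorem~\ref{thr:lower} with $c=1$ and the above $\eta$ yields a $\delta>0$ such that no $o(\sqrt{n})$-query algorithm can, on satisfiable inputs, output a value $x$ with $(1-\eta)\biw_{\calI}-\delta n\le x\le\biw_{\calI}$. Given any purported constant-time $\epsilon_{0}$-tester $A$ (for any fixed $\epsilon_{0}>0$), I would convert it into such an approximator as follows: estimate $\biw_{\calI}$ by sampling random variables in constant time to additive error $\delta n/2$, obtaining $\widehat{\biw}$; run $A$; if $A$ accepts, output $\widehat{\biw}-\delta n/2$, and otherwise output $0$. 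On a satisfiable input, $A$ accepts with probability at least $2/3$, and then the output lies in $[\biw_{\calI}-\delta n,\biw_{\calI}]$ and satisfies the required window. After boosting $A$'s success probability by constant-factor repetition and taking a union bound with the $\widehat{\biw}$ estimator, this contradicts Theorem~\ref{thr:lower} and hence forces $A$ to use $\Omega(\sqrt{n})$ queries.

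For the positive half, assume $S_{\Lambda}(1)=1$ and that $S_{\Lambda}$ is continuous at $1$. Given a testing parameter $\epsilon>0$, put $\eta:=\epsilon/4$, use continuity to pick $\gamma>0$ with $S_{\Lambda}(1-\gamma)\ge 1-\eta$, and set $\epsilon':=\min(\gamma,\epsilon tw/4)$. My tester runs the algorithm of Theorem~\ref{thr:upper} with parameter $\epsilon'$ to obtain $x$, estimates $\biw_{\calI}$ to additive error $\epsilon' n/2$ by sampling, and accepts iff $\widehat{\biw}-x\le 3\epsilon twn/4$. If $\calI$ is satisfiable then $\ollp(\calI)=1$, and Theorem~\ref{thr:upper} gives $x\ge S_{\Lambda}(1-\epsilon')\biw_{\calI}-\epsilon' n\ge(1-\eta)\biw_{\calI}-\epsilon' n$; combined with $\biw_{\calI}\le twn$ this yields $\widehat{\biw}-x\le\eta twn+O(\epsilon' n)$, which lies below the threshold for our choices. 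If $\calI$ is $\epsilon$-far, every assignment violates at least $\epsilon twn$ constraints each of weight $\ge 1$, so $\opt(\calI)\le\biw_{\calI}-\epsilon twn$ and $\widehat{\biw}-x\ge\epsilon twn-O(\epsilon' n)$, which lies above the threshold. A union bound over the failure events of Theorem~\ref{thr:upper} and of the $\widehat{\biw}$ estimator delivers the required $2/3$ success probability.

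The main obstacle I expect is the reliance on continuity of $S_{\Lambda}$ at $c=1$ in the positive half: the additive slack $\epsilon' n$ inherent to Theorem~\ref{thr:upper} can only be dominated by the multiplicative $\eta\cdot twn$ gap once $S_{\Lambda}(1-\epsilon')$ can be pushed arbitrarily close to $1$ by shrinking $\epsilon'$, which is precisely left-continuity. If $S_{\Lambda}$ had a jump at $1$, even a satisfiable instance could force $x$ as small as $S_{\Lambda}(1^{-})\biw_{\calI}$, possibly well below the $\biw_{\calI}-\epsilon twn$ cutoff used to certify $\epsilon$-farness, and the tester would fail to separate the two cases. The remaining bookkeeping---choosing constants so that $\eta twn+O(\epsilon' n)<3\epsilon twn/4<\epsilon twn-O(\epsilon' n)$, and verifying that $\biw_{\calI}$ is estimable in constant time within $O(\epsilon' n)$ additive error by random-variable sampling---is routine in the bounded-degree model.
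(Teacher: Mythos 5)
Your overall strategy mirrors the paper's: the hardness direction follows from Theorem~\ref{thr:lower} with $c=1$, and the algorithmic direction follows from Theorem~\ref{thr:upper} together with the continuity assumption. The positive half of your proof is essentially the same as the paper's, with the welcome addition of an explicit acceptance threshold on $\widehat{\biw}-x$ and a careful check of the bookkeeping; the paper leaves the threshold implicit.

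The hardness half, however, rests on a misreading of Theorem~\ref{thr:lower}. You interpret it as saying that no $o(\sqrt{n})$-query algorithm can, \emph{on satisfiable inputs}, output a value in the window $[(1-\eta)\biw_{\calI}-\delta n,\biw_{\calI}]$. Taken literally, that claim is false: the trivial algorithm that samples a few variables, estimates $\widehat{\biw}$, and always outputs $\widehat{\biw}-\delta n/2$ (never running any tester) lands in that window on every satisfiable input with constant queries. The actual content of Theorem~\ref{thr:lower} is a \emph{distinguishing} lower bound: no $o(\sqrt{n})$-query algorithm can tell the $\calD^{\lp}$ instances (with $\olopt\geq c$) apart from the $\calD^{\opt}$ instances (with $\olopt\leq S_\Lambda(c)+\epsilon$). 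To derive a contradiction from a purported tester, you therefore also need the second leg: that the low-$\opt$ instances on the $\calD^{\opt}$ side of the hard distribution are $\epsilon_0$-far from satisfiability, so that the tester rejects them and your constructed algorithm outputs $0$ there. This step is what turns your algorithm into a distinguisher; without it the contradiction does not go through. (To be fair, the paper's Appendix E also glosses over this: it asserts that the tester ``distinguishes the case $\opt(\calI)=\biw_{\calI}$ from the case $\opt(\calI)\leq(1-\gamma/3)\biw_{\calI}$,'' whereas testing only guarantees rejection of far instances; the missing implication ``$\opt$ small and $\delta n$ slack $\Rightarrow$ far'' needs to be supplied, e.g., via the observation that $\opt\leq\biw_{\calI}-\delta n$ forces at least $\delta n/w$ unsatisfied constraints, hence $\delta/(tw^2)$-farness. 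Your proof would be complete with the same patch, but you should not present Theorem~\ref{thr:lower} as ruling out approximation on the satisfiable side alone.)
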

We mention that Theorem~\ref{thr:prop} gives an ``if and only if'' condition of the testability of CSPs when their integrality gap curves are continuous at the point one
while we are not aware of any CSP for which the curve is not continuous at that point.

We give two direct applications of Theorem~\ref{thr:prop}.
An instance of \textsf{2-SAT} is a CNF formula where each constraint consists of at most two literals.
It is known that $S_{\textsf{Max 2-SAT}}(1)=1/2$, and it follows that we need $\Omega(\sqrt{n})$ queries to test \textsf{2-SAT}.
On the contrary, \textsf{2-SAT} is known to be testable with $\tilde{O}(\sqrt{n})$ queries~\cite{GR99}.
This fact implies that the lower bound in Theorem~\ref{thr:lower} is almost tight.
An instance of \textsf{Horn Sat} is a CNF formula where each constraint has at most one positive literal,
From~\cite{Zwi98}, it is easy to derive that $S_{\textsf{Max Horn SAT}}(1)=1$ and $S_{\textsf{Max Horn SAT}}(c)$ is continuous at $c=1$.
Thus, \textsf{Horn SAT} is testable in constant time.

\vspace{-10pt}
\paragraph{Related Work:}
Subsequent to Raghavendra's work~\cite{Rag08},
under the UGC,
certain SDPs and LPs are shown to be the best approximation algorithms for several classes of problems,
such as 
graph labeling problems (including \textsf{$k$-Way Cut}, \textsf{$0$-Extension}, and \textsf{Metric Labeling})~\cite{MTRS08},
kernel clustering problems~\cite{KN09},
ordering CSPs (including \textsf{Maximum Acyclic Subgraph})~\cite{GMR08}, 
and strict monotone CSPs (including \textsf{Minimum Vertex Cover})~\cite{KMTV09}.

There have been many studies on constant-time approximation algorithms in the bounded-degree model.
For algorithmic side,
mainly graph problems have been studied,
e.g., \textsf{Minimum Spanning Tree}~\cite{CRT01},
\textsf{Minimum Vertex Cover}~\cite{NO08,PR07,YYI09}, 
\textsf{Maximum Matching}~\cite{NO08,YYI09}, 
\textsf{Maximum Independent Set}~\cite{Alo10},
and \textsf{Minimum Dominating Set}~\cite{NO08,YYI09}.
For inapproximability results of graph problems,
\textsf{Minimum Dominating Set}~\cite{Alo10} and \textsf{Maximum Independent Set}~\cite{Alo10,Yos10} have been considered.
For CSPs,
it is known that, for every $\epsilon>0$, 
there exists $\delta>0$ such that any $(1/2+\epsilon,\delta n)$-approximation algorithm for \textsf{Max E2LIN2} and $(7/8+\epsilon,\delta n)$-approximation algorithm for \textsf{Max E3SAT} require linear number of queries~\cite{BOT02}.

We can compute the optimal value of a CSP instance within an additive error $O(\epsilon n^s)$ by sampling $\poly(1/\epsilon)$ variables and by solving the induced problem,
where $n$ is the number of variables and $s$ is the maximum arity~\cite{AdlVKK03,AE02}.
Thus, it is easy to approximate the solution of a dense instance in constant time.
Hence, we are concerned with the bounded-degree model in this paper.

\vspace{-10pt}
\paragraph{Proof Overview:}
We describe a proof sketch of Theorem~\ref{thr:upper}.
Let $\calO_\calI$ be the oracle access to a $\Lambda$-CSP instance $\calI$.
First, 
we construct an oracle access $\calO_{\lp}$ to a nearly optimal solution to $\blp$ for $\calI$,
Namely, 
if we specify a variable in \blp, 
$\calO_{\lp}$ outputs its value by accessing $\calO_{\calI}$ constant number of times.
To this end, we use a distributed algorithm for packing/covering LP given in~\cite{KMW06}.
In the distributed setting,
a linear programing is bound to a graph $G=(V,E)$.
Each primal variable $\bix_i$ and each dual variable $\biy_j$ is associated with a vertex $v_i^p\in V$ and $v_j^d\in V$, respectively. 
There are edges between primal and dual vertices wherever the respective variables occur in the corresponding inequality.
Thus, $(v_i^p, v_j^d)\in E$ if and only if $\bix_i$ occurs in the $j$-th inequality of the primal.
Let $G_{v,k}$ denote the graph induced by vertices whose distance from $v$ is at most $k$.
Then, a \textit{distributed algorithm in $k$ rounds} works in such a way that each vertex outputs a value of the corresponding variable based on $G_{v,k}$.
In~\cite{KMW06},
it is shown that if the matrix in the LP is ``sparse,''
then there is a distributed algorithm that computes a nearly optimal solution to the LP in $k$ rounds,
where $k$ is an integer determined by the sparsity of the LP.
Suppose that the degree of the graph is bounded by $\Delta$.
Then, given a variable,
we can compute the value of it by performing $\Delta^k$ queries to $\calO_{\calI}$ by simulating the process of the distributed algorithm.
With this method, we achieve $\calO_{\lp}$.
Though \blp is not a packing/covering LP,
after applying several number of transformations,
we get a packing LP that has essentially the same behavior under approximation.
Technically, we need to show that \blp is robust in the sense that
even if we violate each constraint by small amount,
the optimal value does not significantly increase.
We finally mention that, a predicate can return values in $[-1,1]$ in \cite{Rag08} while it can only return $0$ or $1$ in this paper.
This restriction comes from that we cannot transform \blp to a packing LP anymore if we allow negative values.

Next, we exhibit a solution to the original instance by rounding the LP solution given by $\calO_{\lp}$.
In~\cite{RS09}, Raghavendra and Steurer considered a certain SDP relaxation, which we call \bsdp,
and showed an optimal rounding scheme for it.
That is, it achieves an approximation ratio coincident with the integrality gap of \bsdp.
Our proof is based on their work.
First, from an instance $\calI$ and its LP solution,
we create another instance $\calI'$ by merging variables of $\calI$ that are close in the LP solution
so that the number of variables in $\calI'$ become constant.
Though we cannot explicitly construct the whole $\calI'$ since the number of constraints is not constant,
we can enumerate variables in $\calI'$.
Then, we perform brute force search on $\calI'$.
Specifically, we estimate the value obtained by each assignment to variables in $\calI'$ by accessing the oracle $\calO_{\lp}$.
Let $\beta'^*$ be the assignment for $\calI'$ that takes the maximum among them.
Note that $\beta'^*$ can be unfolded to an assignment $\beta^*$ for $\calI$.
Then, with high probability, we have $S_{\Lambda}(c-\epsilon)-\epsilon n \leq \val(\calI,\beta^*) \leq \opt(\calI)$.
Since, from a variable $v$ in $\calI$, we can get the corresponding variable in $\calI'$ in constant time,
we can compute $\beta_v^*$ in constant time.
The crucial fact used here is that the LP optimum does not change significantly after merging variables.

Now, we describe a proof sketch of Theorem~\ref{thr:lower}.
Let $\calI$ be a $\Lambda$-CSP instance such that $\lp(I)=c\biw_{\calI}$ while $\opt(I)$ is arbitrarily close to $S_{\Lambda}(c)\biw_{\calI}$.
Also, let $(\bix^*,\bimu^*)$ be the optimal LP solution to $\blp$ for $\calI$.
First, we create a distribution of instances $\calD^{\opt}$ by blowing up variables of $\blp$.
With high probability, 
an instance $\calJ$ generated by $\calD^{\opt}$ satisfies that $\olopt(\calJ) \leq \olopt(\calI)+\epsilon$ where $\epsilon$ is  an arbitrarily small constant.
Next, using the LP solution $(\bix^*,\bimu^*)$,
we create another distribution of instances $\calD^{\lp}$, 
which has the property that for all $\calJ$ generated by $\calD^{\lp}$, 
$\olopt(\calJ)\geq \ollp(\calI)$.
From Yao's minimax principle,
by showing that any deterministic algorithm that distinguishes $\calD^{\opt}$ from $\calD^{\lp}$ with high probability requires $\Omega(\sqrt{n})$ queries,
we have the desired result.

\vspace{-10pt}
\paragraph{Organization:}
In Section~\ref{sec:pre},
we give notations and basic technical tools used in this paper.
In Section~\ref{sec:lp},
we present an oracle access $\calO_{\lp}$ to a (nearly) optimal solution to \blp.
Section~\ref{sec:round} is devoted to describe how to round the LP solution optimally and to prove Theorem~\ref{thr:upper}.
We give proofs of Theorems~\ref{thr:lower} and~\ref{thr:prop} in Section~\ref{sec:lower} and Appendix~\ref{apx:prop}, respectively.

\section{Preliminaries}\label{sec:pre}
\subsection{Definitions}
For an integer $k$, $[k]$ denotes the set $\{1,\ldots,k\}$.
The \textit{arity} of a predicate $P:[q]^k\to \{0,1\}$ is the number of inputs to $P$, i.e., $k$ here.
The \textit{degree} of a variable is the number of constraints where the variable appears.
For a constraint $P$ in a CSP instance,
$V(P)$ denotes the set of variables in $P$.
Let $\beta$ be a vector or a set indexed by elements of a set $V$.
For a subset $S\subseteq V$,
we define $\beta_{|S}=\{\beta_v\}_{v\in S}$.

\begin{definition}
A \textit{bounded-degree constraint satisfaction problem} $\Lambda$ is specified by $\Lambda=([q],s,t,w,\bbP)$,
where $[q]$ is a finite domain,
$s$ is the maximum arity of predicates,
$t$ is the maximum degree of variables, 
$w$ is the maximum weight of predicates,
and $\bbP=\{P:[q]^k\to \{0,1\} \mid k \leq s\}$ is a set of predicates.
\end{definition}
\begin{definition}
An instance $\calI$ of a CSP~$\Lambda=([q],s,t,w,\bbP)$ is given by $\calI=(V,\calP,\biw)$,
where 
\begin{itemize}
\setlength{\itemsep}{0pt}
\item $V=\{v_1,\ldots,v_n\}$ is a set of variables taking values over $[q]$,
\item $\calP$ is a set of constraints,
  consisting of predicates $P\in \bbP$ applied to sequences $S$ of variables $V$ of size at most $s$.
  More precisely, 
  when a predicate $P$ is applied to a sequence $S=\{i_1,\ldots,i_k\}\subseteq [n]^k$,
  $P$ takes variables $V_{|S}=\{v_{i_1},\ldots,v_{i_k}\}$ as the input. 
\item $\biw$ is a set of weights $\{\biw_P\}_{P\in \calP}$ assigned to each constraint $P\in \calP$, 
  where $1\leq \biw_P\leq w$.
\end{itemize}
The objective is to find an assignment to variables $\beta \in [q]^V$ that maximizes the total weight of satisfied constraints,
i.e., \( \sum_{P\in \calP}\biw_P P(\beta) \).
\end{definition}
\begin{definition}[Bounded-degree Model]
  In \textit{the bounded-degree model},
  an algorithm is given a CSP~$\Lambda = ([q],s,t,w,\bbP)$ and the set of variables $V$ beforehand.
  A $\Lambda$-CSP instance $\calI = (V,\calP,\biw)$ is represented by an oracle $\calO_{\calI}$ such that $\calO_{\calI}$, on two numbers $v \in V, i\in [t]$,
  returns $P \in \calP$ where $P$ is the $i$-th constraint where $v$ appears.
  If no such constraint exists, it returns a special character $\bot$.
  The \textit{query complexity} of an algorithm is the number of accesses to $\calO_{\calI}$.
\end{definition}


In this paper, when there is no ambiguity,
symbols $q,s,t$ and $w$ are used to denote the parameters of a considered CSP.
Also, symbols $n,\calO_{\calI},\biw_{\calI}$ are used to denote the number of variables, the oracle access, and the total weight of an input instance $\calI$, respectively.

We consider an LP relaxation for a CSP~$\Lambda$ as follows, which we call \blp.
\begin{eqnarray*}
  \begin{array}{lll}
    \max & \sum\limits_{P\in \calP}\biw_P \sum\limits_{\beta \in [q]^{V(P)}}P(\beta)\bimu_{P,\beta}\\  
    \mbox{s.t.} & \sum\limits_{a\in [q]}\bix_{v,a}=1 & \forall v \in V\\    
    & \sum\limits_{\beta \in [q]^{V(P)}, \beta_v=a}\bimu_{P,\beta}=\bix_{v,a} & \forall P\in \calP, v \in V(P), a\in [q]\\
    & \bix_{v,a}\geq 0 & \forall v \in V, a\in [q]\\
    & \bimu_{P,\beta}\geq 0 & \forall P\in \calP, \beta\in [q]^{V(P)}.
  \end{array}
\end{eqnarray*}
Here, $\bix_{v}=\{\bix_{v,a}\}_{a\in [q]}$ (resp., $\bimu_P=\{\bimu_{P,\beta}\}_{\beta\in [q]^{V(P)}}$) can be seen as a distribution over assignments to a variable $v\in V$ (resp., a constraint $P\in \calP$), 
and we often identify them as distributions.
For an LP solution $(\bix,\bimu)$,
we define $\val(\calI,\bix,\bimu)$ as the value of the objective function of \blp obtained by $(\bix,\bimu)$.
We call an LP solution $(\bix,\bimu)$ \textit{$\epsilon$-infeasible} if it satisfies constraints of the form $\bix_{v,a}\geq 0$ and $\bimu_{P,\beta}\geq 0$ and violates other constraints by at most $\epsilon$.
We call a solution to an LP \textit{$(\alpha,\beta)$-approximate} if the objective value obtained by the solution is an $(\alpha,\beta)$-approximation to the optimal value of the LP.

\subsection{Basic Tools}

As a simple application of Hoeffding's inequality, we obtain the following.
\begin{lemma}\label{lmm:stat}
  Suppose that we have an oracle access to a function $f:[n]\to [0,w]$.
  That is, by specifying $x\in [n]$ as a query, 
  we can see the value of $f(x)$.
  Then, by querying $O(\frac{w^2}{\epsilon^2}\log \frac{1}{\delta})$ times, 
  with probability at least $1-\delta$, 
  we can compute a $(1,\epsilon n)$-approximation to $\sum_i f(i)$.
  \qed
\end{lemma}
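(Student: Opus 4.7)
The plan is to use standard Monte Carlo sampling together with Hoeffding's inequality. I would draw $k=\Theta(\frac{w^2}{\epsilon^2}\log\frac{1}{\delta})$ indices $x_1,\ldots,x_k$ independently and uniformly from $[n]$, query the oracle to obtain $f(x_1),\ldots,f(x_k)$, and form the natural unbiased estimator
\[
\widehat{S}= \frac{n}{k}\sum_{j=1}^k f(x_j)
\]
for $S=\sum_{i=1}^n f(i)$. Since each $f(x_j)/n \in [0,w/n]$ and $\E[\widehat{S}]=S$, Hoeffding's inequality applied to $\widehat{S}/n$ yields
\[
\bbP\bigl[|\widehat{S}-S|\ge \tfrac{\epsilon n}{2}\bigr]\le 2\exp\!\bigl(-k\epsilon^2/(2w^2)\bigr),
\]
which is at most $\delta$ for the chosen $k$.

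To match the one-sided convention of $(1,\epsilon n)$-approximation from the definition (namely $S-\epsilon n\le x\le S$), I would not output $\widehat{S}$ directly but rather the shifted value $x:=\widehat{S}-\epsilon n/2$. On the Hoeffding event $|\widehat{S}-S|\le \epsilon n/2$, which holds with probability at least $1-\delta$, we have both $x\le S-\epsilon n/2+\epsilon n/2 = S$ and $x\ge S-\epsilon n/2-\epsilon n/2 = S-\epsilon n$, giving the desired approximation. There is no real obstacle here; the only small subtlety is this one-sided shift, which is needed because $(\alpha,\beta)$-approximation in this paper requires the estimator not to exceed the true value.
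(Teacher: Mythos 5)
Your proof is correct and follows the route the paper intends: the paper gives no written argument (it marks the lemma \qed after saying it is "a simple application of Hoeffding's inequality"), and your sampling-plus-Hoeffding argument with $k=\Theta(\frac{w^2}{\epsilon^2}\log\frac{1}{\delta})$ uniform samples is exactly the standard instantiation. Your additional observation about shifting the estimator down by $\epsilon n/2$ to respect the one-sided $(1,\epsilon n)$-approximation convention ($x^*-\epsilon n\le x\le x^*$) is a genuine subtlety that the paper silently elides, and it is handled correctly.
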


Let $\calI$ be a $\Lambda$-CSP instance.
Not surprisingly,
we cannot compute the optimal solution $(\bix^*,\bimu^*)$ of \blp for $\calI$ in constant time.
Even worse, it is also hard to obtain a feasible solution in constant time.
Instead, we will compute a feasible (nearly) optimal solution $(\bix,\bimu)$ of an LP obtained by relaxing equality constraints.
Though this is an infeasible solution in the original LP,
The following lemma states that $\val(\calI,\bix,\bimu)$ is close to $\lp(\calI)$.
The proof, which needs Fourier analysis, is given in Appendix~\ref{apx:robust}.
\begin{lemma}[Robustness of \blp]\label{lmm:robust}
  Let $\calI$ be a $\Lambda$-CSP instance.
  Suppose that $(\bix,\bimu)$ is an $\epsilon$-infeasible LP solution for $\calI$ of value $c\biw_{\calI}$.
  Then, it holds that
  \begin{eqnarray*}
    \ollp(\calI)\geq c-\epsilon\cdot\poly(qs).
  \end{eqnarray*}
\end{lemma}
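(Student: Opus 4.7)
The plan is to convert the given $\epsilon$-infeasible $(\bix,\bimu)$ of value $c\biw_\calI$ into a truly feasible LP solution $(\bix',\bimu')$ whose value differs by at most $\epsilon\cdot\poly(qs)\cdot\biw_\calI$. Since $(\bix',\bimu')$ is feasible, this immediately yields $\lp(\calI)\geq(c-\epsilon\cdot\poly(qs))\biw_\calI$, i.e.\ the desired bound on $\ollp(\calI)$.

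First I would normalize. Setting $\bix'_{v,a}=\bix_{v,a}/Z_v$ with $Z_v=\sum_a\bix_{v,a}\in[1-\epsilon,1+\epsilon]$ makes each $\bix'_v$ an honest distribution on $[q]$. Summing the almost-marginal equalities $\sum_{\beta:\beta_v=a}\bimu_{P,\beta}=\bix_{v,a}$ over $a\in[q]$ shows $\sum_\beta\bimu_{P,\beta}\in[1-O(q\epsilon),1+O(q\epsilon)]$; rescaling gives a genuine distribution $\tilde{\bimu}_P$ on $[q]^{V(P)}$ whose induced marginals $\tilde{\bix}_v^P$ satisfy $\|\tilde{\bix}_v^P-\bix'_v\|_1=O(q\epsilon)$ for every $v\in V(P)$.

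The heart of the proof is fixing these marginals exactly without losing much $\ell_1$ mass, and this is where Fourier analysis enters. For a fixed $P$ of arity $k\leq s$, let $\Delta_v:=\bix'_v-\tilde{\bix}_v^P$, which satisfies $\sum_a\Delta_v(a)=0$ and $\|\Delta_v\|_1=O(q\epsilon)$. Consider the degree-one correction
\[
\bimu^\circ_P(\beta)\;=\;\tilde{\bimu}_P(\beta)\;+\;\sum_{v\in V(P)}\Delta_v(\beta_v)\prod_{u\in V(P)\setminus\{v\}}\bix'_u(\beta_u).
\]
Marginalizing at any coordinate $w\in V(P)$, the cross terms for $v\neq w$ vanish because $\sum_{\beta_v}\Delta_v(\beta_v)=0$, and a direct calculation gives marginal exactly $\bix'_w$; moreover $\|\bimu^\circ_P-\tilde{\bimu}_P\|_1\leq\sum_v\|\Delta_v\|_1=O(qs\epsilon)$. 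The only obstruction is that $\bimu^\circ_P$ may have mildly negative entries, which I fix by first mixing $\tilde{\bimu}_P$ with the uniform distribution on $[q]^{V(P)}$ with weight $\eta=\epsilon\cdot\poly(q,s)$, installing a pointwise floor of $\eta q^{-s}$ large enough to absorb the perturbation; call the resulting nonnegative distribution $\bimu'_P$.

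Feasibility of $(\bix',\bimu')$ is by construction. Using $P(\beta)\in\{0,1\}$, $\biw_P\leq w$, and $|\calP|\leq\biw_\calI$ (each weight is at least $1$), the drop in objective value is bounded by $\sum_P\biw_P\|\bimu_P-\bimu'_P\|_1=\epsilon\cdot\poly(qs)\cdot\biw_\calI$, which closes the proof. The main obstacle is the marginal-correction step: the perturbation must simultaneously pin every coordinate's marginal, remain nonnegative, and blow up only polynomially in $q$ and $s$ (exponential dependence on $s$ would be useless here). The explicit product form of the degree-one perturbation above makes the cross-marginal interactions cancel cleanly, and the uniform-mixing trick supplies the slack needed for nonnegativity at polynomial cost in $q$ and $s$.
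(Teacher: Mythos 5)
Your overall architecture (normalize $\bix$, correct $\bimu_P$ to have exact marginals, smooth to nonnegativity, pay an $\ell_1$ cost in the objective) matches the paper, but the correction step you propose has a genuine gap, and it is exactly the place where the paper's Fourier machinery is doing real work.

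The issue is the pointwise size of your degree-one perturbation
\[
\bimu^\circ_P(\beta)-\tilde{\bimu}_P(\beta)=\sum_{v\in V(P)}\Delta_v(\beta_v)\prod_{u\in V(P)\setminus\{v\}}\bix'_u(\beta_u).
\]
You bound its $\ell_1$ norm by $O(qs\epsilon)$, which is correct, but to absorb negativity by mixing with the uniform distribution (floor $\eta q^{-k}$) you need to control the $\ell_\infty$ norm, and that is not $O(\epsilon/q^{k-1})$. In fact it can be as large as $\Theta(\epsilon)$ at a point where $\tilde{\bimu}_P$ vanishes. Concretely, take $q=2$, arity $k$, $\bix'_u=(1,0)$ for all $u$, and $\tilde{\bimu}_P$ supported on $(1,\dots,1)$ with mass $1-\epsilon$ and on $(2,1,\dots,1,2)$ with mass $\epsilon$. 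Then $\Delta_1=\Delta_k=(\epsilon,-\epsilon)$ and $\Delta_v=0$ otherwise, and at $\beta=(2,1,\dots,1)$ one finds $\tilde{\bimu}_P(\beta)=0$ while the correction equals $\Delta_1(2)\cdot 1=-\epsilon$ (the $v=k$ term is killed by $\bix'_1(2)=0$). So $\bimu^\circ_P(\beta)=-\epsilon$, and a uniform floor of $\eta q^{-k}$ absorbs it only if $\eta\geq q^k\epsilon$, which is exponential in $k\leq s$. That exponential blowup propagates directly into your final bound on the objective drop, so the lemma's $\poly(qs)$ dependence is not obtained. Switching the mixing measure to the product $\prod_u\bix'_u$ does not rescue this either, since the offending term has $\bix'_v(\beta_v)=0$ and hence is not dominated by the product measure at $\beta$.

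What the paper's Fourier argument buys is precisely a pointwise bound of the form $|f'(\beta)-f(\beta)|\leq\delta/q^k$ with $\delta=\poly(q,k)\epsilon$. Unwinding the paper's construction in real space, the Fourier correction is the degree-one perturbation with the \emph{uniform} background,
\[
f'(\beta)-f(\beta)=\frac{1}{q^{k-1}}\sum_{v\in V(P)}\Delta_v(\beta_v),
\]
rather than with $\prod_{u\neq v}\bix'_u$. This still fixes every marginal exactly (the cross terms vanish because $\sum_a\Delta_v(a)=0$), but now the normalization $q^{-(k-1)}$ is built in, so the perturbation is pointwise at most $O(sq\epsilon)/q^k$ and uniform mixing with weight $\delta=\poly(q,s)\epsilon$ suffices. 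So your instinct that a degree-one correction is the right move is sound; the background measure is the part you need to change, and that change is exactly what the paper's orthonormal-basis computation (its Lemma on smoothing) delivers.
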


\section{A $(1-\epsilon,\epsilon n)$-approximation algorithm for \blp}\label{sec:lp}
In this section, we show the following theorem.
\begin{theorem}\label{thr:lp}
  In the bounded-degree model, 
  given a $\Lambda$-CSP instance $\calI$,
  for any $\epsilon>0$,
  we can construct an oracle $\calO_{\lp}$ that gives an access to an $\epsilon$-feasible $(1-\epsilon,\epsilon n)$-approximate solution to \blp for $\calI$.
  For each query, the number of queries performed to $\calO_{\calI}$ is at most $\exp(\exp(\poly(qstw/\epsilon)))$.
\end{theorem}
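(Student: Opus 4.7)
The plan is to implement $\calO_{\lp}$ by locally simulating a distributed packing/covering LP algorithm of Kuhn--Moscibroda--Wattenhofer~\cite{KMW06}. Associate to \blp the bipartite constraint graph whose primal vertices are the LP variables ($\bix_{v,a}$ and $\bimu_{P,\beta}$) and whose dual vertices are the LP inequalities, with an edge whenever a variable appears in an inequality. Because every variable has degree at most $t$ and every predicate has arity at most $s$, each primal vertex is adjacent to at most $O(tq^s)$ dual vertices and each dual vertex to at most $O(q^s)$ primal vertices, so the bipartite graph has maximum degree $\Delta=\poly(q,s,t)$. This constraint graph is locally reconstructible from $\calO_{\calI}$: the $k$-hop neighborhood around any vertex can be discovered by at most $\Delta^{O(k)}$ queries, which is precisely how a query to $\calO_{\lp}$ will be answered.

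Before we can invoke~\cite{KMW06} we must massage \blp into the mixed packing/covering form it assumes. Split each equality $\sum_a \bix_{v,a}=1$ and each marginal equation $\sum_{\beta:\beta_v=a}\bimu_{P,\beta}=\bix_{v,a}$ into the pair $\le,\ge$. To clear mixed signs in the marginal inequalities, substitute $\bix_{v,a}$ away using one canonical constraint $P_v^*$ that contains $v$, i.e., set $\bix_{v,a}=\sum_{\beta:\beta_v=a}\bimu_{P_v^*,\beta}$, and express marginal consistency at every other constraint $P\ni v$ through a pair of packing/covering inequalities in the $\bimu$ variables, after introducing nonnegative slacks where necessary. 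All coefficients are constants bounded by $1$, each resulting inequality involves only $O(q^s)$ variables, so the sparsity parameter $\Delta$ is preserved and the transformed LP has the same optimum as \blp.

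Apply~\cite{KMW06} with approximation parameter $\epsilon'=\epsilon/\poly(qstw)$ to obtain a distributed algorithm running in $k=k(\epsilon',\Delta)$ rounds and producing a $(1-\epsilon')$-approximate solution to the transformed LP. Simulating the algorithm locally, a single query to $\calO_{\lp}$ costs $\Delta^{O(k)}=\exp(\exp(\poly(qstw/\epsilon)))$ queries to $\calO_{\calI}$, matching the claimed bound. Re-expressing the distributed output in terms of the original $\bix$ and $\bimu$ variables produces an $\epsilon$-infeasible LP solution for \blp (the equality constraints now hold only up to the relaxation slack) whose objective value is at least $(1-\epsilon')\lp(\calI)-\epsilon'\biw_{\calI}$; bounding $\biw_{\calI}\le wtn$ and absorbing constants into $\epsilon$ yields the promised $(1-\epsilon,\epsilon n)$-approximation.

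The main obstacle is the transformation step, where we must simultaneously preserve bounded sparsity (so that the $\Delta^{O(k)}$ simulation cost stays as stated), preserve the optimum up to additive $\epsilon\biw_{\calI}$, and keep the output solution $\epsilon$-infeasible in the sense defined for \blp rather than infeasible in some stronger way. Lemma~\ref{lmm:robust} is the safety net that makes the small infeasibility harmless: it guarantees that the $\epsilon'$-infeasible, near-optimal solution cannot overshoot $\lp(\calI)$ by more than $\epsilon'\cdot\poly(qs)\cdot\biw_{\calI}$, so after choosing $\epsilon'$ a small enough polynomial function of $\epsilon$ and the CSP parameters, the overall $(1-\epsilon,\epsilon n)$-approximation guarantee on \blp follows.
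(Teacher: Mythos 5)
Your plan of locally simulating Kuhn--Moscibroda--Wattenhofer's distributed LP algorithm is the same starting point as the paper, but the transformation you propose does not actually reach a form that KMW handles. The tool invoked (Lemma~\ref{lmm:kmw06}) requires a \emph{packing} LP: a non-negative matrix, a non-negative right-hand side, and all constraints of the form $A^T\biz\le\bic$. Splitting each equality into a $\le,\ge$ pair and then eliminating $\bix_{v,a}$ via the substitution $\bix_{v,a}=\sum_{\beta:\beta_v=a}\bimu_{P_v^*,\beta}$ produces marginal-consistency constraints of the form $\sum_{\beta:\beta_v=a}\bimu_{P,\beta} \ge \sum_{\beta:\beta_v=a}\bimu_{P_v^*,\beta}$; moving everything to one side gives a constraint with both positive and negative coefficients, which is neither packing nor covering, and ``introducing nonnegative slacks'' cannot repair the sign structure. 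The paper's fix for this is the nontrivial device in LP~\eqref{lp:blp-relaxed-again}: first replace $\bix$ by $1-\bix$, then introduce explicit complement variables $\overline{\bix},\overline{\bimu}$ with packing constraints $\biz+\overline{\biz}\le 1$, and add a large penalty $\biC^T(\biz+\overline{\biz})$ to the objective so that Lemma~\ref{lmm:fs97} forces the optimal solution to have $\biz+\overline{\biz}=\mathbf{1}$, thereby recovering the discarded $\ge$ directions implicitly. Your sketch contains no mechanism that plays this role.

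There is a second gap even granting a good approximate solution to some transformed LP. A $(1-\epsilon')$-approximate solution to LP~\eqref{lp:blp-relaxed-again} is only guaranteed to satisfy $\biz+\overline{\biz}=\mathbf{1}$ at exact optimality; an approximate solution can have $\biz_i+\overline{\biz}_i$ far from $1$ for a small but nonzero fraction of coordinates, and for those coordinates the corresponding \blp constraints can be violated by an amount that is not $O(\epsilon)$. You assert that ``re-expressing the distributed output \ldots produces an $\epsilon$-infeasible LP solution'' but this is exactly where the paper does additional work: it uses the Markov bound on $\mathbf{1}^T(\mathbf{1}-\biz-\overline{\biz})$ to identify the bad set $S$, resets all $\bix_{v,\cdot}$ and $\bimu_{P,\cdot}$ touching $S$ to uniform distributions, and then accounts for the resulting loss in objective value. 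Without this surgery step you cannot conclude $\epsilon$-infeasibility, and Lemma~\ref{lmm:robust} alone does not substitute for it --- the robustness lemma bounds how much an $\epsilon$-infeasible solution can overshoot $\lp(\calI)$, but it does not convert a solution with a few badly violated constraints into an $\epsilon$-infeasible one.
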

A packing LP is a problem of maximizing $\bib^T\biz$ subject to $A^T\biz \leq \bic$ and $\biz \geq 0$,
where $A\in \bbR_+^{m \times n}$ is a non-negative matrix and $\bib, \bic\in \bbR_+^{n}$ are non-negative vectors.
There is a constant-round distributed algorithm to compute a nearly optimal solution to the packing LP (see Appendix~\ref{sec:packing-lp} for a formal statement).
When a variable $\bix_{v,a}$ or $\bimu_{P,\beta}$ is specified as a query,
we locally simulate the distributed algorithm and output the value for it.
The only issue is that \blp is not a packing LP.
In this section, we transform \blp to a packing LP, 
and we will show that we can restore a good approximation to \blp from an approximation to the resulting packing LP.
First, we substitute $\bix_{v,a}$ by $1-\bix_{v,a}$ and relax each equality constraint by $\epsilon$.
Then, we obtain the following LP.
\begin{eqnarray}
  \begin{array}{lll}
    \max & \sum\limits_{P\in \calP}\biw_P \sum\limits_{\beta\in [q]^{V(P)}}P(\beta)\bimu_{P,\beta} \\
    \mbox{s.t.} & |\sum\limits_{a\in[q]}\bix_{v,a} - (q-1) | \leq \epsilon & \forall v \in V\\    
    & | \bix_{v,a}+\sum\limits_{\beta \in [q]^{V(P)}, \beta_v=a}\bimu_{P,\beta} - 1 | \leq \epsilon & \forall P\in \calP, v\in V(P), a\in [q]\\
    & \bix_{v,a}\geq 0 & \forall v\in V,a\in[q] \\
    & \bimu_{P,\beta}\geq 0 & \forall P\in \calP, \beta\in [q]^{V(P)}.
  \end{array}
  \label{lp:blp-relaxed}
\end{eqnarray}
\begin{lemma}
  Let $\calI$ be a $\Lambda$-CSP instance and $(\bix,\bimu)$ be an $\epsilon$-infeasible solution to LP~\eqref{lp:blp-relaxed} of value $c\biw_{\calI}$.
  Then, $\ollp(\calI)\geq c-\epsilon\cdot \poly(qs)$ holds.
\end{lemma}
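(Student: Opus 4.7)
The plan is to reduce the statement to Lemma~\ref{lmm:robust} by inverting the substitution $\bix_{v,a}\leftarrow 1-\bix_{v,a}$ that produced LP~\eqref{lp:blp-relaxed}. Concretely, I would define a candidate solution $(\tilde\bix,\bimu)$ for the original \blp by setting $\tilde\bix_{v,a}=\max(1-\bix_{v,a},\,0)$ and leaving $\bimu$ unchanged. Note that the objective function of \blp depends only on $\bimu$, so the value is preserved: $\val(\calI,\tilde\bix,\bimu)=c\biw_{\calI}$, and the constraints $\tilde\bix_{v,a}\geq 0$, $\bimu_{P,\beta}\geq 0$ hold by construction.

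The clipping to $\max(\cdot,0)$ is needed because an $\epsilon$-infeasible solution to LP~\eqref{lp:blp-relaxed} may have $\bix_{v,a}>1$, making $1-\bix_{v,a}$ slightly negative. The key preliminary observation is that this excess is only $O(\epsilon)$: from the second family of relaxed constraints of LP~\eqref{lp:blp-relaxed}, combined with $\bimu\geq 0$ and the $\epsilon$-infeasibility slack, we obtain $\bix_{v,a}\leq 1+2\epsilon$ for every $(v,a)$ such that $v$ appears in some constraint. For an isolated variable that occurs in no constraint, the choice of $\tilde\bix_v$ does not influence the objective nor any equality of \blp linking it to $\bimu$, so we may assign it any valid distribution at no cost.

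Given this bound, I would verify constraint by constraint that $(\tilde\bix,\bimu)$ is $O(q\epsilon)$-infeasible for the original \blp. For the first family of equalities, without clipping we have $\sum_a (1-\bix_{v,a})=q-\sum_a \bix_{v,a}\in[1-2\epsilon,1+2\epsilon]$ using the first relaxed constraint plus $\epsilon$-infeasibility, and clipping changes each of the $q$ summands by at most $2\epsilon$, yielding $|\sum_a\tilde\bix_{v,a}-1|\leq O(q\epsilon)$. For the second family, the relaxed constraint plus infeasibility gives $|\sum_{\beta:\beta_v=a}\bimu_{P,\beta}-(1-\bix_{v,a})|\leq 2\epsilon$, and the clipping perturbs $1-\bix_{v,a}$ into $\tilde\bix_{v,a}$ by at most $2\epsilon$, so the overall violation is at most $4\epsilon$.

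With $(\tilde\bix,\bimu)$ established as an $O(q\epsilon)$-infeasible solution to \blp of value $c\biw_{\calI}$, Lemma~\ref{lmm:robust} directly yields $\ollp(\calI)\geq c-O(q\epsilon)\cdot\poly(qs)=c-\epsilon\cdot\poly(qs)$, which is the desired conclusion after absorbing the extra factor of $q$ into $\poly(qs)$. The only mildly delicate point is the non-negativity repair: all of the actual work is a bookkeeping check that clipping introduces at most $O(\epsilon)$ per variable, and this rests entirely on the a~priori bound $\bix_{v,a}\leq 1+O(\epsilon)$ above.
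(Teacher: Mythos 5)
Your proof is correct and follows the same core approach as the paper: undo the $\bix\leftarrow 1-\bix$ substitution and invoke Lemma~\ref{lmm:robust}. You are in fact somewhat more careful than the paper's one-line argument, which asserts directly that $(1-\bix,\bimu)$ is $2\epsilon$-infeasible for \blp without addressing that $1-\bix_{v,a}$ may be slightly negative when $\bix_{v,a}>1$; your clipping step together with the a~priori bound $\bix_{v,a}\leq 1+2\epsilon$ (from the second family of relaxed constraints and $\bimu\geq 0$) handles this explicitly, at the harmless cost of an extra factor of $q$ absorbed into $\poly(qs)$.
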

\begin{proof}
  Clearly, $(1-\bix,\bimu)$ is an $2\epsilon$-infeasible solution to \blp of value $c\biw_{\calI}$.
  From Lemma~\ref{lmm:robust},
  the lemma holds.
\end{proof}

Next, 
to make the directions of the inequalities the same,
we introduce a complement variable for each variable, i.e., we define $\overline{\bix}_{v,a}=1-\bix_{v,a}$ and $\overline{\bimu}_{P,\beta}=1-\bimu_{P,\beta}$.
However, such equality constraints cannot be used in a packing LP.
Thus, we relax those equality constraints again.
That is, we introduce constraints of the form $\bix_{v,a}+\overline{\bix}_{v,a}\leq 1$ and $\bimu_{P,\beta}+\overline{\bimu}_{P,\beta}\leq 1$.
Instead, to discourage them to become much smaller than one, 
we add additional terms to the objective function.
By letting $\biC=(C,C,\ldots,C)$ where $C=q^{O(s)}\poly(tw)/\epsilon^2$ is a large constant, we get the following LP.
\begin{eqnarray}
  \begin{array}{lll}
    \max & \sum\limits_{P\in \calP}\biw_P \sum\limits_{\beta\in [q]^{V(P)}}P(\beta)\bimu_{P,\beta} + \biC^T(\bix+\overline{\bix})+ \biC^T(\bimu+\overline{\bimu}),\\
    \mbox{s.t.} & \sum\limits_{a\in[q]}\bix_{v,a} \leq q-1 + \epsilon & \forall v\in V\\    
    & \sum\limits_{a\in[q]}\overline{\bix}_{v,a} \leq 1+\epsilon & \forall v\in V\\    
    & \bix_{v,a}+\sum\limits_{\beta \in [q]^{V(P)}, \beta_v=a}\bimu_{P,\beta} \leq 1+\epsilon & \forall P\in \calP, v\in V(P), a\in [q]\\
    & \overline{\bix}_{v,a}+\sum\limits_{x \in [q]^{V(P)}, \beta_v=a}\overline{\bimu}_{P,\beta} \leq q^{|V(P)|-1}+\epsilon & \forall P\in \calP, v\in V(P), a\in [q]\\
    & \bix_{v,a}+\overline{\bix}_{v,a} \leq 1, \quad \bix_{v,a}\geq 0, \quad \overline{\bix}_{v,a}\geq 0 & \forall v \in V, a\in[q]\\
    & \bimu_{P,\beta}+\overline{\bimu}_{P,\beta} \leq 1, \quad \bimu_{P,\beta}\geq 0, \quad \overline{\bimu}_{P,\beta}\geq 0 & \forall P\in \calP, \beta\in [q]^{V(P)}.\\
  \end{array}
  \label{lp:blp-relaxed-again}
\end{eqnarray}

Fortunately, 
the optimal solutions to LP~\eqref{lp:blp-relaxed-again} and LP~\eqref{lp:blp-relaxed} are essentially the same.
\begin{lemma}[Theorem~7 of~\cite{FS97}, in a special form]\label{lmm:fs97}
  Let $\calI$ be a $\Lambda$-CSP instance and $(\bix^*,\overline{\bix}^*,\bimu^*,\overline{\bimu}^*)$ be the optimal solution to LP~\eqref{lp:blp-relaxed-again} with value $c\biw_{\calI}+CN$ where $N$ is the number of variables in LP~\eqref{lp:blp-relaxed-again}.
  Then, $\bix^*+\overline{\bix}^*=\mathbf{1}$ and $\bimu^*+\overline{\bimu}^*=\mathbf{1}$ hold.
  Also, $(\bix^*,\bimu^*)$ is the optimal solution to LP~(\ref{lp:blp-relaxed}) with value $c\biw_{\calI}$.
  \qed
\end{lemma}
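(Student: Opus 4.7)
The plan is to exploit the large penalty coefficient $C$ in the objective of LP~\eqref{lp:blp-relaxed-again} to force the complementarity equalities $\bix^*+\overline{\bix}^*=\mathbf{1}$ and $\bimu^*+\overline{\bimu}^*=\mathbf{1}$ at any optimum, and then to translate feasibility and optimality between LP~\eqref{lp:blp-relaxed-again} and LP~\eqref{lp:blp-relaxed} via the change of variables $\overline{\bix}\mapsto\mathbf{1}-\bix$, $\overline{\bimu}\mapsto\mathbf{1}-\bimu$.

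For the saturation step, I would split the objective as $\mathrm{CSP}(\bimu^*)+C\cdot T$, where $T:=\sum_{v,a}(\bix^*_{v,a}+\overline{\bix}^*_{v,a})+\sum_{P,\beta}(\bimu^*_{P,\beta}+\overline{\bimu}^*_{P,\beta})\leq N$ by the two ``complementarity'' inequalities of LP~\eqref{lp:blp-relaxed-again}. The assumed form $c\biw_{\calI}+CN$ of the optimal value then reads $\mathrm{CSP}(\bimu^*)+C\cdot T=c\biw_{\calI}+CN$, and the appeal to Theorem~7 of~\cite{FS97} is used to show that, for the stated choice $C=q^{O(s)}\poly(tw)/\epsilon^2$, any slack in a single complementarity inequality would be penalized more in the $C$-term than the largest possible local gain in the CSP part (which is controlled by the weights $\biw_P\leq w$ of the at most $t$ constraints touching a given variable). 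Hence $T=N$; since each of the $N$ summands is at most $1$, each must equal~$1$, giving $\bix^*_{v,a}+\overline{\bix}^*_{v,a}=1$ and $\bimu^*_{P,\beta}+\overline{\bimu}^*_{P,\beta}=1$, and thereby also $\mathrm{CSP}(\bimu^*)=c\biw_{\calI}$.

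Once the complementarity identities are available, the remainder of the proof is a direct substitution. Plugging $\overline{\bix}^*=\mathbf{1}-\bix^*$ and $\overline{\bimu}^*=\mathbf{1}-\bimu^*$ into the second and fourth constraint blocks of LP~\eqref{lp:blp-relaxed-again} yields $\sum_a\bix^*_{v,a}\geq q-1-\epsilon$ and, using that exactly $q^{|V(P)|-1}$ assignments $\beta$ satisfy $\beta_v=a$, $\bix^*_{v,a}+\sum_{\beta:\beta_v=a}\bimu^*_{P,\beta}\geq 1-\epsilon$; combined with the existing ``$\leq$'' sides these reproduce exactly the two-sided constraints of LP~\eqref{lp:blp-relaxed}. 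Thus $(\bix^*,\bimu^*)$ is feasible for LP~\eqref{lp:blp-relaxed} with objective value $c\biw_{\calI}$. For optimality, the reverse substitution sends any feasible $(\bix,\bimu)$ of LP~\eqref{lp:blp-relaxed}, whose coordinates lie in $[0,1+\epsilon]$ by virtue of the constraints already present, to the feasible point $(\bix,\mathbf{1}-\bix,\bimu,\mathbf{1}-\bimu)$ of LP~\eqref{lp:blp-relaxed-again} whose objective exceeds its LP~\eqref{lp:blp-relaxed} value by exactly $CN$; so any improvement over $c\biw_{\calI}$ in LP~\eqref{lp:blp-relaxed} would contradict the assumed optimality in LP~\eqref{lp:blp-relaxed-again}.

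The main obstacle I anticipate is the saturation step: a priori $C$ is only a constant while $\mathrm{CSP}(\bimu^*)$ can be $\Theta(\biw_{\calI})=\Theta(ntw)$, so a crude global comparison ``$C\gg\biw_{\calI}$'' fails. The point of Theorem~7 of~\cite{FS97} is that the relevant comparison is \emph{local}: a perturbation of a single complementarity pair affects only the $O(tq^s)$ CSP-terms it participates in, and this is what the scaling $C=q^{O(s)}\poly(tw)/\epsilon^2$ is designed to dominate.
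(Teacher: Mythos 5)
The paper does not actually prove Lemma~\ref{lmm:fs97}; the \texttt{\textbackslash qed} sits inside the lemma statement, and the result is simply cited as Theorem~7 of~\cite{FS97}. So there is no ``paper's own proof'' to compare against, and the relevant question is whether your sketch is a self-contained argument. It is not, for two reasons.

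First, the saturation step --- showing $T=N$ and hence $\bix^*+\overline{\bix}^*=\mathbf{1}$, $\bimu^*+\overline{\bimu}^*=\mathbf{1}$ --- is exactly the content of the cited Theorem~7, and you explicitly outsource it: ``the appeal to Theorem~7 of~\cite{FS97} is used to show that \dots Hence $T=N$.'' That is circular as a proof of the lemma. Moreover, the heuristic you give for it is not the right picture. The $\bix$-complementarity penalty never competes with the CSP term at all (the CSP objective depends only on $\bimu$), so increasing $\bix^*_{v,a}$ or $\overline{\bix}^*_{v,a}$ to close the gap gives a pure $C\delta$ gain; what can block it is feasibility of other constraints, and when one of those is tight, shifting mass to another $\overline{\bix}$- or $\overline{\bimu}$-coordinate costs the same $C\delta$ that the move gains --- a zero-net-change cycle, not a sub-$C$ local loss. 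So the claim that ``the $C$-term dominates the local CSP gain'' does not establish $T=N$; the actual argument (which is what~\cite{FS97} supplies) must rule out such stalemates, and your sketch does not do so.

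Second, the optimality translation has a gap you notice but do not close. You correctly observe that a feasible $(\bix,\bimu)$ of LP~\eqref{lp:blp-relaxed} only has coordinates in $[0,1+\epsilon]$, so $\mathbf{1}-\bix$ and $\mathbf{1}-\bimu$ can be negative in some coordinates (and for an isolated variable $v$, $\bix_{v,a}$ is only bounded by $q-1+\epsilon$). The point $(\bix,\mathbf{1}-\bix,\bimu,\mathbf{1}-\bimu)$ is then infeasible for LP~\eqref{lp:blp-relaxed-again}, so the comparison ``any improvement in LP~\eqref{lp:blp-relaxed} would contradict optimality in LP~\eqref{lp:blp-relaxed-again}'' does not follow as written. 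Truncating $\overline{\bix}_{v,a}:=\max(0,1-\bix_{v,a})$ does not obviously preserve the constraint $\sum_a\overline{\bix}_{v,a}\le 1+\epsilon$ either. You would need to repair this, e.g.\ by first arguing that LP~\eqref{lp:blp-relaxed} has an optimal solution with all coordinates in $[0,1]$, or by adjusting the $\epsilon$ budget between the two LPs.
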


Now, using the distributed algorithm given by~\cite{KMW06}, 
we have the following lemma.
The analysis of the query complexity is tedious and the proof is given in Appendix~\ref{sec:packing-lp}.
\begin{lemma}\label{lmm:packing-lp}
  In the bounded-degree model,
  given a $\Lambda$-CSP instance $\calI$,
  for any $\epsilon>0$,
  we can construct an oracle that serves an access to $(\bix,\overline{\bix},\bimu,\overline{\bimu})$, 
  which is a feasible $(1-\epsilon,0)$-approximate solution to LP~\eqref{lp:blp-relaxed-again}.
  For each query, the number of queries performed to $\calO_{\calI}$ is at most $\exp(\poly(qstw/\epsilon))$.
\end{lemma}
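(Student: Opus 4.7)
The plan is to view LP~\eqref{lp:blp-relaxed-again} as a packing LP and to answer each oracle query by locally simulating a constant number of rounds of the distributed $(1-\epsilon)$-approximation algorithm of~\cite{KMW06}. The points to verify are: (a) the bipartite primal/dual constraint graph of LP~\eqref{lp:blp-relaxed-again} has bounded degree depending only on $q,s,t$; (b) the width and the ratio between the largest and smallest nonzero coefficients of the constraint matrix, which govern the round complexity of the algorithm of~\cite{KMW06}, are bounded in terms of $q,s,t,w,\epsilon$; and (c) one round of the distributed algorithm, performed at a given vertex, can be simulated by a constant number of calls to $\calO_\calI$.

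First, I set up the bipartite constraint graph. Primal vertices are the variables $\bix_{v,a},\overline{\bix}_{v,a},\bimu_{P,\beta},\overline{\bimu}_{P,\beta}$; dual vertices are the inequalities of LP~\eqref{lp:blp-relaxed-again}; and there is an edge whenever a primal variable appears with nonzero coefficient in the corresponding constraint. A direct count shows primal degree at most $O(t+s)$ and dual degree at most $O(q^{s-1})$: for example, $\bix_{v,a}$ appears only in the degree constraint at $v$, in the coupling constraint $\bix_{v,a}+\overline{\bix}_{v,a}\le 1$, and in at most $t$ constraints of the form $\bix_{v,a}+\sum_{\beta:\beta_v=a}\bimu_{P,\beta}\le 1+\epsilon$, while a constraint of the latter type involves at most $1+q^{s-1}$ variables. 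The constraint graph therefore has maximum degree $\Delta=\poly(q^s,t)$.

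Next, I invoke the distributed packing-LP algorithm of~\cite{KMW06}. Every nonzero entry of the constraint matrix equals one, and the right-hand sides lie in $[1,q^{s-1}+\epsilon]$, so the relevant width and coefficient-ratio parameters are $\poly(q^s)$. The artificially large constants $C$ appearing in the objective term $\biC^T(\bix+\overline{\bix})+\biC^T(\bimu+\overline{\bimu})$ live only in the objective, not in any constraint, so they affect neither the width nor the round complexity of~\cite{KMW06}. Combining this with the degree bound, the algorithm of~\cite{KMW06} outputs a feasible primal solution that is a $(1-\epsilon,0)$-approximation to the LP optimum in $k=\poly(qstw/\epsilon)$ rounds of the LOCAL model.

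Finally, I describe the local simulation. To answer a query for the value of $\bix_{v,a}$, I reconstruct the $k$-neighborhood of the corresponding primal vertex in the bipartite constraint graph by BFS. A single outward step from a primal vertex, through an incident dual vertex and back to its other primal neighbors, amounts to enumerating the constraints of $\calI$ incident to some variable $v'$ and reading off the variables and coefficients in each such constraint, which costs at most $t$ calls to $\calO_\calI$; a step outward from a $\bimu_{P,\beta}$-type primal vertex costs only $O(1)$ since $P$ is already known. Iterating for $k$ alternating rounds therefore uses at most $\Delta^{2k}=\exp(\poly(qstw/\epsilon))$ queries per request. The main obstacle I foresee is the careful bookkeeping required in (i) matching the form of~\cite{KMW06} to this specific packing LP and extracting the quantitative round complexity $\poly(qstw/\epsilon)$ (in particular confirming that the auxiliary variables $\overline{\bix},\overline{\bimu}$ do not inflate the effective width), and in (ii) translating one round of the bipartite distributed model into one BFS layer of the CSP hypergraph exposed by $\calO_\calI$ without double-counting or missing any coefficients.
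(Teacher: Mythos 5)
Your proposal follows the same overall strategy as the paper — cast LP~\eqref{lp:blp-relaxed-again} as a packing LP, invoke the distributed algorithm of~\cite{KMW06}, and simulate it locally via BFS over the bipartite constraint graph — and your degree and BFS-cost bookkeeping is correct. But there is one step you handle too loosely, and it is precisely the spot you yourself flag at the end.

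You claim the constants $C$ ``live only in the objective, not in any constraint, so they affect neither the width nor the round complexity.'' This is not justified by the version of the KMW06 result that the paper uses. The cited lemma applies to packing LPs in the restricted form $\max\,\mathbf{1}^T\biz$ s.t.\ $A^T\biz\le\bic$, with every nonzero entry of $A$ at least $1$; the round bound is $O(\log\Gamma_p\log\Gamma_d/\epsilon^4)$ where $\Gamma_p,\Gamma_d$ are defined from $A$ and $\bic$. LP~\eqref{lp:blp-relaxed-again} has the non-unit objective $\biw^T\bimu+\biC^T(\bix+\overline{\bix})+\biC^T(\bimu+\overline{\bimu})$, so to bring it into the required form the paper first rescales each variable by its objective coefficient (e.g.\ $\bimu_{P,\beta}\mapsto\bimu_{P,\beta}/(\biw_P P(\beta)+C)$, $\bix_{v,a}\mapsto\bix_{v,a}/C$), and then rescales each inequality so all nonzero coefficients are $\ge 1$. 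After this transformation, $C$ shows up squarely inside $c_{\max}$, $\Gamma_p$ and $\Gamma_d$; e.g.\ the paper gets $\Gamma_p=O((s+t)(w+C)\cdot C(w+q^s))$. So the constant does inflate the width; your assertion that it does not is not established. (One could salvage your shortcut by observing that the objective coefficients lie in $[C,C+w]$ and hence have ratio $1+w/C\approx 1$, so the normalization is nearly benign — but you do not make that argument.)

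That said, this is a gap in rigor rather than in the final answer: since $C=q^{O(s)}\poly(tw)/\epsilon^2$, we still have $\log\Gamma_p,\log\Gamma_d=\poly(s,\log q,\log t,\log w,\log(1/\epsilon))$, the round count is $\poly(qstw/\epsilon)$, and $(\Delta_p\Delta_d)^{O(\log\Gamma_p\log\Gamma_d/\epsilon^4)}$ is still $\exp(\poly(qstw/\epsilon))$. To make your write-up complete you should either carry out the normalization explicitly (as the paper does) and track how $C$ enters the width parameters, or explicitly appeal to a version of~\cite{KMW06} that handles a general positive objective vector and state how its round complexity depends (or does not depend) on the objective scale.
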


\begin{proof}[Proof of Theorem~\ref{thr:lp}]
  Let $(\bix,\overline{\bix},\bimu,\overline{\bimu})$ be a feasible $(1-\epsilon',0)$-approximate solution obtained by Lemma~\ref{lmm:packing-lp},
  where $\epsilon'$ is a constant determined later.
  For notational simplicity,
  we write the objective function as $\biw^T \bimu + \biC^T(\biz+\overline{\biz})$ where $\biz=(\bix,\bimu)$.
  Let $(\biz^*,\overline{\biz}^*)$ be the optimal solution to LP~\eqref{lp:blp-relaxed-again}.
  From Lemma~\ref{lmm:fs97}, $\biz^*+\overline{\biz}^*=\mathbf{1}$.
  Also, let $N\leq qn+q^s\cdot tn= (q+tq^s)n$ be the number of variables in LP~\eqref{lp:blp-relaxed-again}.
  Then, we have
  \begin{eqnarray*}
    \biw^T \bimu +  \biC^T(\biz+\overline{\biz}) \geq (1-\epsilon')(\biw^T\bimu^*+\biC^T(\biz^*+\overline{\biz}^*)) = (1-\epsilon')(\biw^T\bimu^*+C N).
  \end{eqnarray*}
  Thus,
  \begin{eqnarray*}
    & \biC^T(\biz+\overline{\biz}) &
    \geq
    (1-\epsilon')CN + (1-\epsilon')\biw^T\mu^* - \biw^T\mu
    \geq 
    (1-\epsilon')CN - \biw_{\calI}
    \geq 
    (1-\epsilon' - w/C)CN, \\
    &\biw^T\bimu&
    \geq
    (1-\epsilon')\biw^T\bimu^* + (1-\epsilon')(CN-\biC^T(\biz+\overline{\biz})) - \epsilon' \biC^T(\biz+\overline{\biz})
    \geq
    (1-\epsilon')\biw^T\bimu^* - \epsilon'C N. 
  \end{eqnarray*}
  In the former inequality, we used the fact that $\biw_{\calI} \leq wN$.
  In the latter inequality, we used the fact that $CN \geq \biC^T(\biz+\overline{\biz})$.

  From the former inequality, we have
  \begin{eqnarray*}
    \mathbf{1}^T(\mathbf{1}-\biz-\overline{\biz}) \leq (\epsilon'+w/C)N.
  \end{eqnarray*}
  Let $S$ be the set of variables $\biz_i$ ($= \bix_{v,a}$ or $\bimu_{P,\beta}$) such that $(1-\biz_i-\overline{\biz_i}) \geq \epsilon''$ where $\epsilon''$ is a constant determined later.
  From Markov's inequality,
  we have $|S|\leq (\epsilon'+w/C)/\epsilon''\cdot N$.
  Let $S_{\bix}= S\cap \{\bix_{v,a}\}_{v\in V,a\in [q]}$ and $S_{\bimu}=S\cap \{\bimu_{P,\beta}\}_{P\in \calP,\beta\in [q]^{V(P)}}$.
  The variables in $S_{\bix}$ and $S_{\bimu}$ are problematic since constraints in LP~\eqref{lp:blp-relaxed-again} involving them are far from being satisfied.
  Thus, in what follows,
  we modify these variables and obtain nearly feasible solution to LP~\eqref{lp:blp-relaxed-again}.

  First, we construct variables $\{\bix'_{v,a}\}_{v\in V, a\in [q]}$ by setting $\bix'_{v,a}=\bix_{v,a}$ if none of $\{\bix_{v,a'}\}_{a'\in[q]}$ is in $S_{\bix}$ and $\bix'_{v,a}=1/q$ if otherwise.
  Then, we construct variables $\{\bimu'_{P,\beta}\}_{P\in V(P), \beta\in [q]^{V(P)}}$ as follows.
  If none of $\{\bix_{v,a}\}_{v\in V(P), a\in[q]}$ was modified in the previous step,
  we set $\bimu'_{P,\beta}=\bimu_{P,\beta}$.
  If otherwise, 
  we set the values of $\{\bimu'_{P,\beta}\}_{\beta\in [q]^{V(P)}}$ in such a way that the distribution $\bimu'_P$ becomes consistent with the product distribution determined by $\{\bix'_{v}\}_{v\in P}$.
  Note that each modification to $\bix$ in the previous step involves at most $2tq^s$ modifications to $\bimu$.

  We calculate the decrease of the objective function.
  The decrease caused by the modification to $\bix_{v,a}$ is at most $\sum_{P\ni v}\biw_P\leq tw$,
  and the decrease caused by the modification to $\bimu_{P,\beta}$ is at most $\biw_P \leq w$.
  Thus, the total decrease is at most $tw|S|+2twq^s|S| \leq (\epsilon'+w/C)/\epsilon''\cdot(1+2q^s)tw N$.
  
  Note that for each unmodified variable $\biz_i$ ($= \bix_{v,a}$ or $\bimu_{P,\beta}$), 
  $\biz'_i+\overline{\biz}'_i \geq 1-\epsilon''$ holds.
  Thus, $(\bix',\overline{\bix}',\bimu',\overline{\bimu}')$ is an $\epsilon''$-infeasible solution with value at least 
  \begin{eqnarray*}
    \biw^T \bimu' 
    &\geq&
    (1-\epsilon')\biw^T\bimu^*- \epsilon' C N-(\epsilon'+w/C)/\epsilon''\cdot (1+2q^s)tw N\\
    &\geq &
    (1-\epsilon')\biw^T\bimu^* - (\epsilon' C + (\epsilon'+w/C)/\epsilon''\cdot (1+2q^s)tw ) (q+tq^s)n.
  \end{eqnarray*}
  Thus, $(\bix',\bimu')$ is an $\epsilon''$-infeasible $(1-\epsilon',(\epsilon' C + (\epsilon'+w/C)/\epsilon''\cdot (1+2q^s)tw ) (q+tq^s)n)$-approximate solution.
  By choosing $\epsilon'=\epsilon^3/(q^{O(s)}\poly(tw))$ and $\epsilon''=\epsilon$,
  we have an $\epsilon$-infeasible $(1-\epsilon,\epsilon n)$-approximate solution.

  We need to look at $q$ variables $\{\bix_{v,a}\}_{a\in [q]}$ to decide the value of $\bix'_{v,a}$,
  and we need to look at at most $qs$ variables $\{\bix_{v,a}\}_{v\in V(P), a\in [q]}$ to decide the value of $\bimu'_{P,\beta}$.
  Thus, the number of queries performed to $\calO_{\calI}$ is at most 
  \(
  \max(q,qs)\exp(\poly(qstw/\epsilon'))=\exp(\exp(\poly(qstw/\epsilon)))
  \).
\end{proof}

\section{Optimal Rounding of \blp}\label{sec:round}
In this section, using $\calO_{\lp}$, we give an algorithm described in Theorem~\ref{thr:upper}.
Let $\calI=(V,\calP,\biw)$ be a $\Lambda$-CSP instance.
For a mapping $\phi:V\to V'$, 
we define a new $\Lambda$-CSP instance $\calI/\phi=(V',\calP',\biw')$ on the variable set $V'$ by identifying variables of $\calI$ that get mapped to the same variable in $V'$.
For each constraint $P\in \calP$ on the variable set $\{v_1,\ldots,v_k\}$ with weight $\biw_P$,
we have a constraint $P'\in \calP'$ on the variable set $\{\phi(v_1),\ldots,\phi(v_k)\}$ with weight $\biw_P$.
For $x\in [0,1]$, 
we define $x^\epsilon=(k+1)\epsilon$ where $k$ is the positive integer such that $k\epsilon < x \leq (k+1)\epsilon$.
We define $x^\epsilon=0$ when $x=0$.
In what follows, we assume that $1/\epsilon$ is an integer.
If not, we slightly decrease $\epsilon$ until $1/\epsilon$ become an integer.
Let $(\bix,\bimu)$ be an LP solution for $\calI$.
We identify variables $v$ of $\calI$ that have the same values $\{\bix_{v,a}^\epsilon\}_{a\in [q]}$.
Formally, we consider another $\Lambda$-CSP instance $\calI/\phi_{\bix}$ 
where $\phi_{\bix}:V\to \{0,\ldots,1/\epsilon\}^q$ is defined as \( \phi_{\bix}(v)=(\bix_{v,1}^\epsilon,\ldots,\bix_{v,q}^\epsilon) \).
We have following two lemmas, the proofs of which are in Appendix~\ref{apx:round-appendix}.
\begin{lemma}\label{lmm:discretize}
  Let $\calI$ be a $\Lambda$-CSP instance and $(\bix,\bimu)$ be an $\epsilon$-infeasible LP solution for $\calI$.
  Then, $(\bix^\epsilon,\bimu)$ is a $(q+1)\epsilon$-infeasible LP solution for $\calI$.
\end{lemma}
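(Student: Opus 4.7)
The plan is a direct book-keeping argument that tracks how each constraint of \blp is affected by replacing $\bix$ with $\bix^\epsilon$ while leaving $\bimu$ untouched. The single observation that drives everything is the pointwise bound coming from the definition of $x^\epsilon$: for every $v \in V$ and $a \in [q]$,
\[
\bix_{v,a} \leq \bix_{v,a}^\epsilon \leq \bix_{v,a} + \epsilon,
\]
with $\bix_{v,a}^\epsilon = 0$ precisely when $\bix_{v,a} = 0$. Once this is in hand, the rest is routine verification of each constraint of \blp.

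First I would check the non-negativity constraints. Since $\bix_{v,a}^\epsilon \geq \bix_{v,a} \geq 0$ and $\bimu$ is unchanged, the constraints $\bix_{v,a}^\epsilon \geq 0$ and $\bimu_{P,\beta} \geq 0$ are satisfied exactly. Next, for the variable-distribution constraint $\sum_{a \in [q]} \bix_{v,a} = 1$, summing the pointwise bound over the $q$ choices of $a$ gives
\[
\sum_{a \in [q]} \bix_{v,a} \;\leq\; \sum_{a \in [q]} \bix_{v,a}^\epsilon \;\leq\; \sum_{a \in [q]} \bix_{v,a} + q\epsilon.
\]
Combining this with $\bigl|\sum_{a \in [q]} \bix_{v,a} - 1\bigr| \leq \epsilon$, which holds because $(\bix,\bimu)$ is $\epsilon$-infeasible, yields $\bigl|\sum_{a \in [q]} \bix_{v,a}^\epsilon - 1\bigr| \leq (q+1)\epsilon$.

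Finally, for the consistency constraint $\sum_{\beta : \beta_v = a} \bimu_{P,\beta} = \bix_{v,a}$, since $\bimu$ is unchanged the only new discrepancy comes from the right-hand side being replaced by $\bix_{v,a}^\epsilon$. By the triangle inequality, the violation is at most the original $\epsilon$ plus the rounding error $|\bix_{v,a}^\epsilon - \bix_{v,a}| \leq \epsilon$, i.e., at most $2\epsilon \leq (q+1)\epsilon$ whenever $q \geq 1$. Taking the worst of these three bounds shows that $(\bix^\epsilon, \bimu)$ is $(q+1)\epsilon$-infeasible, which is exactly what is claimed.

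There is essentially no substantive obstacle here; the proof is entirely mechanical once one observes the per-coordinate bound $\bix_{v,a}^\epsilon - \bix_{v,a} \leq \epsilon$. The only mildly delicate point is to notice that the factor $q+1$ in the statement is tight because the variable-distribution constraint aggregates $q$ rounded coordinates, while each consistency constraint involves only a single rounded value on the right-hand side.
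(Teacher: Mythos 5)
Your proof is correct and follows essentially the same approach as the paper's: the key bound $0 \le \bix_{v,a}^\epsilon - \bix_{v,a} \le \epsilon$, then a per-constraint accounting showing the distribution constraint picks up at most $q\epsilon$ extra error and each consistency constraint at most $\epsilon$ extra, giving $(q+1)\epsilon$ and $2\epsilon$ respectively. You spell out the details slightly more explicitly (including the non-negativity check), but the argument is identical in substance.
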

\begin{lemma}\label{lmm:compression}
  Let $\calI$ be a $\Lambda$-CSP instance and $(\bix,\bimu)$ be an $\epsilon$-infeasible $(1-\epsilon,\epsilon)$-approximate LP solution for $\calI$,
  where $\epsilon>0$ is a small constant.
  Then, the variable folding $\calI/\phi_{\bix}$ satisfies that
  \begin{itemize}
    \setlength{\itemsep}{0pt}
  \item $\lp(\calI/\phi_{\bix})\geq \lp(\calI)- \epsilon\cdot \poly(qstw) n$,
  \item The variable set of $\calI/\phi_{\bix}$ has a cardinality $\exp(\poly(q/\epsilon))$.
  \end{itemize}
\end{lemma}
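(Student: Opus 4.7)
The plan is to treat the two conclusions separately. The cardinality bound is immediate: since $(\bix,\bimu)$ is $\epsilon$-infeasible, each coordinate of $\bix$ lies in $[0,1+\epsilon]$, so $\bix^{\epsilon}_{v,a}$ takes one of at most $\lceil 1/\epsilon\rceil+1$ values, and hence $|V'|\leq(1/\epsilon+1)^{q}=\exp(\poly(q/\epsilon))$.

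For the LP-value bound, the strategy is to lift $(\bix,\bimu)$ to a nearly feasible LP solution $(\bix',\bimu')$ for $\calI/\phi_{\bix}$ of essentially the same objective value, and then invoke Lemma~\ref{lmm:robust} to absorb the small infeasibility. Set $\bix'_{v',a}=v'_a=\bix^{\epsilon}_{v,a}$, which is well-defined independent of the choice of representative $v\in\phi_{\bix}^{-1}(v')$. For each $P\in\calP$ with image $P'\in\calP'$, view $\bimu_P$ as a distribution on assignments $\beta\in[q]^{k}$ to the $k$ positions of the predicate, and set $\bimu'_{P',\beta}=\bimu_{P,\beta}$. Then the simplex constraints of $(\bix',\bimu')$ are violated by at most $(q+1)\epsilon$ by Lemma~\ref{lmm:discretize}, and the marginalization constraints satisfy
$$\left|\sum_{\beta:\beta_i=a}\bimu'_{P',\beta}-\bix'_{\phi(v_i),a}\right|\leq\left|\sum_{\beta:\beta_i=a}\bimu_{P,\beta}-\bix_{v_i,a}\right|+\left|\bix_{v_i,a}-\bix^{\epsilon}_{v_i,a}\right|\leq 2\epsilon,$$
so the lifted solution is $O(q\epsilon)$-infeasible. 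Because positions, predicates, and weights are preserved, the objective matches exactly: $\val(\calI/\phi_{\bix},\bix',\bimu')=\val(\calI,\bix,\bimu)$. Applying Lemma~\ref{lmm:robust} with $\biw_{\calI/\phi_{\bix}}=\biw_{\calI}\leq wtn$ and using the $(1-\epsilon,\epsilon)$-approximate hypothesis $\val(\calI,\bix,\bimu)\geq(1-\epsilon)\lp(\calI)-\epsilon$ then gives
$$\lp(\calI/\phi_{\bix})\geq\val(\calI,\bix,\bimu)-O(q\epsilon)\cdot\poly(qs)\cdot wtn\geq\lp(\calI)-\epsilon\cdot\poly(qstw)\cdot n,$$
as required.

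The main conceptual subtlety is the LP formulation for $\calI/\phi_{\bix}$ when $\phi_{\bix}$ is not injective on the variable sequence of some constraint $P$. If one insisted on indexing $\bimu'_{P'}$ by assignments to the distinct folded variables, then a constraint whose two variables collapse to the same $v'$ would be forced onto the diagonal $\beta_i=\beta_j$ and could lose a large amount of LP mass (for example, a \textsf{2-XOR} constraint between two variables that fold together would become unsatisfiable). The correct reading, implicit in \blp, is to index $\bimu'_{P'}$ by position-based assignments, so that the predicate still sees two independent inputs at two positions even when those positions carry the same folded variable. Under this convention the lifting $\bimu'=\bimu$ preserves the objective exactly, Lemma~\ref{lmm:robust} cleanly swallows the $O(q\epsilon)$ residual infeasibility from discretization, and the remainder of the argument is routine arithmetic.
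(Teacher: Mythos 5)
Your proof takes the same route as the paper's: lift $(\bix^\epsilon,\bimu)$ to the folded instance (with $\bix'$ constant on $\phi_{\bix}$-fibers and $\bimu'$ unchanged), check $O(q\epsilon)$-infeasibility via Lemma~\ref{lmm:discretize} and a triangle inequality, then apply Lemma~\ref{lmm:robust} together with $\biw_{\calI}\leq twn$; the cardinality argument is identical. Your closing paragraph does flag something real that the paper glosses over. The paper's one-line assertion ``$(\bix^\epsilon,\bimu)$ is also an LP solution for $\calI/\phi_{\bix}$'' tacitly ignores the case where $\phi_{\bix}$ identifies two variables occurring in the same constraint. Under the literal set-based indexing $\beta\in[q]^{V(P')}$ that the paper writes (it defines $V(P)$ as a \emph{set}), a collapsed constraint is restricted to a diagonal, and --- as your \textsf{2-XOR} example illustrates --- if every $\bix_v$ is the uniform distribution then $\phi_{\bix}$ folds all variables to a single one, $\lp(\calI/\phi_{\bix})$ drops to $0$ while $\lp(\calI)$ can be $\Omega(n)$, and the claimed inequality actually fails. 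Your position-based reindexing of $\bimu'_{P'}$ does make the lifting go through, so it is a genuine repair; just be aware that it is a reformulation of \blp rather than, as you suggest, something already ``implicit'' in it, and that a fully careful account would also need to confirm that this reinterpreted $\lp(\calI/\phi_{\bix})$ is the quantity that is later plugged into $S_\Lambda(\cdot)$ in the proof of Theorem~\ref{thr:upper}, since $S_\Lambda$ is defined via the standard \blp.
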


\begin{proof}[Proof of Theorem~\ref{thr:upper}]
  Let $\epsilon'$ be a constant determined later and $(\bix,\bimu)$ be an $\epsilon'$-infeasible $(1-\epsilon',\epsilon')$-approximate solution for $\calI$.
  Consider a folded instance $\calI'=\calI/\phi_{\bix}$ on the variable set $V' := \{0,\ldots,1/\epsilon\}^q$.
  Since there are at most $\exp(\poly(q/\epsilon'))$ variables in $V'$,
  there are at most $N:=\exp(\exp(\poly(q/\epsilon')))$ assignments to $V'$.
  For each assignment $\beta' \in [q]^{V'}$, 
  we estimate the value $\val(\calI',\beta')$ as follows.
  First, we note that $\beta'$ can be unfolded to an assignment $\beta \in [q]^V$ to $\calI$ with the same value.
  Then, for each variable $v\in V$,
  we associate a value $f_v=\sum_{P\ni v}P(\beta)/|P|$.
  It is clear that $0\leq f_v \leq tw$ and $\sum_{v\in V}f_v=\val(\calI,\beta)=\val(\calI',\beta')$.
  Also, we can calculate the value $f_v$ by querying $\calO_{\lp}$ at most $qst$ times.
  Thus, using the algorithm given in Lemma~\ref{lmm:stat},
  we get a $(1,\epsilon n/2)$-approximation to $\val(\calI',\beta')$ with probability at least $1-1/3N$ by querying $\calO_{\lp}$ at most $\poly(qstw/\epsilon)O(\log N)$ times.

  By the union bound, 
  with probability at least $2/3$,
  we obtain a $(1,\epsilon n/2)$-approximation to $\val(\calI',\beta')$ for every assignment $\beta'$.
  Let $\beta'^* \in [q]^{V'}$ be the assignment that takes the maximum value among those assignments.
  Then, $\val(\calI',\beta'^*)$ is a $(1,\epsilon n/2)$-approximation to $\opt(\calI')$.
  The number of queries performed to $\calO_{\lp}$ is at most $\poly(qstw/\epsilon)O(N\log N)$.

  Let $\beta^* \in [q]^V$ be the unfolded assignment of $\beta'^*$.
  We can safely assume that $\biw_{\calI'} = \biw_{\calI} \geq \epsilon n$.
  If not, $\val(\calI,\beta^*)$ is indeed a $(1,\epsilon n)$-approximation to $\opt(\calI)$.
  When $\biw_{\calI'} = \biw_{\calI} \geq \epsilon n$, it holds that
  \begin{eqnarray*}
    \val(\calI,\beta^*) 
    &\geq &
    \opt(\calI')-\frac{\epsilon n }{2}
    \geq
    S_\Lambda(\ollp(\calI'))\biw_{\calI} - \frac{\epsilon n}{2}\\
    &\geq&
    S_{\Lambda}\left(\ollp(\calI)-\frac{\epsilon' \cdot \poly(qstw)n}{\biw_{\calI}}\right)\biw_{\calI} - \frac{\epsilon n}{2} \quad \text{(using Lemma~\ref{lmm:compression})}\\
    &\geq&
    S_{\Lambda}\left(\ollp(\calI)-\frac{\epsilon'\cdot \poly(qstw) }{\epsilon}\right)\biw_{\calI} - \frac{\epsilon n}{2} \quad \text{(using $\biw_{\calI} \geq \epsilon n$)}
  \end{eqnarray*}
  We are done by setting $\epsilon'=\epsilon^2/\poly(qstw)$.
  The number of queries performed to $\calO_{\calI}$ is at most
  \(
  \poly(qstw/\epsilon)O(N \log N)\cdot \exp(\exp(\poly(qstw/\epsilon'))) = \exp(\exp(\poly(qstw/\epsilon))) 
  \).
  Once we have fixed $\beta^*$, given a variable $v \in V$, we can compute $\beta^*_v$ by accessing $\calO_{\lp}$ $q$ times.
  The query complexity is at most $\exp(\exp(\poly(qstw/\epsilon)))$.
\end{proof}

\section{Lower Bounds}\label{sec:lower}
In this section, we prove Theorem~\ref{thr:lower}.
As we described in the introduction, we utilize Yao's minimax principle.
That is, we construct two distributions of instances such that they have much different optimal values and also it is hard to distinguish them in constant time.
We fix a $\Lambda$-CSP instance $\calI=(V,\calP,\biw)$ with the optimal LP solution $(\bix^*,\bimu^*)$ throughout this section.
To convert the LP integrality gap $\overline{\opt}(\calI)/\overline{\lp}(\calI)$ of $\calI$ to hardness results,
we construct two distributions $\calD_{N,T}^{\opt}$ and $\calD_{N,T}^{\lp}$ using $\calI$ and $(\bix^*,\bimu^*)$.
Here, $N$ and $T$ will determine the number of variables and the maximum degree of instances generated by $\calD_{N,T}^{\opt}$ and $\calD_{N,T}^{\lp}$, respectively.
We show that, by taking $T$ as a large constant (independent of $N$),
almost all instances $\calJ$ in $\calD_{N,T}^{\opt}$ satisfy that $\overline{\opt}(\calJ) \leq \overline{\opt}(\calI)+\epsilon$.
Also, we show that all instances $\calJ$ in $\calD_{N,T}^{\lp}$ satisfy that $\overline{\opt}(\calJ) \geq \overline{\lp}(\calI)$.
Finally, we define $\calD_{N,T}^\star$ as the distribution that chooses $\calD_{N,T}^{\opt}$ or $\calD_{N,T}^{\lp}$ randomly and outputs an instance generated by the chosen distribution.
Then, given an oracle access $\calO_{\calJ}$ to an instance $\calJ$ generated by $\calD_{N,T}^\star$,
a deterministic algorithm is supposed to guess the original distribution ($\calD_{N,T}^{\opt}$ or $\calD_{N,T}^{\lp}$) of $\calJ$ with probability at least $2/3$.
By showing that such an algorithm requires $\Omega(\sqrt{N})$ queries,
we conclude that any randomized algorithm that,
given an instance $\calJ$,
distinguishes the case $\olopt(\calJ) \geq \ollp(\calI)$ from the case $\olopt(\calJ) \leq \olopt(\calI)+\epsilon$
requires $\Omega(\sqrt{N})$ queries.
By choosing as $\calI$ an instance with the worst integrality gap,
we have the desired result.
\begin{figure}[t]
  \begin{center}
    \includegraphics[scale=0.5]{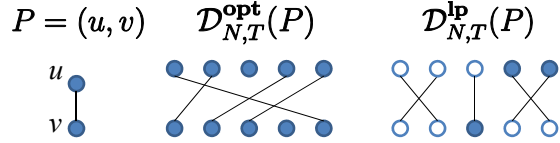}
    \caption{Construction of $\calD_{N,T}^{\opt}(P)$ and $\calD_{N,T}^{\lp}(P)$.
      Here, the alphabet size $q=2$, 
      and we choose $N=5$ and $T=1$.
      Also, $\bimu^{*}_{P,00}=0.4$, $\bimu^{*}_{P,01}=0.2$, $\bimu^{*}_{P,10}=0.4$, and $\bimu^*_{P,11}=0$.
      It follows that $\bix^{*}_{u,0}=0.6$, $\bix^{*}_{u,1}=0.4$, $\bix^{*}_{v,0}=0.8$, and $\bix^*_{v,1}=0.2$.
      White (resp., black) variables in $\calD_{N,T}^{\lp}(P)$ indicate that they are assigned to $0$ (resp., $1$).
    }
    \label{fig:lower}
  \end{center}
\end{figure}
\paragraph{Construction of $\calD_{N,T}^{\opt}$:}
Before stating the construction of $\calD_{N,T}^{\opt}$,
we introduce a distribution $\calD_{N,T}^{\opt}(P)$ for a constraint $P\in \calP$ applied to a sequence of variables $\{v_1,\ldots,v_k\}$ (see Fig.~\ref{fig:lower}). 
An instance $\calJ_P$ of $\calD_{N,T}^{\opt}(P)$ is generated as follows.
The variable set of $\calJ_P$ is $\{v_1,\ldots,v_k\} \times [N]$.
We naturally regard that the set of variables $V_i = \{(v_i,j) \mid j\in [N]\}$ corresponds to a variable $v_i$.
Next, we create $TN$ constraints among those variables.
To this end, after splitting each variable of $\calJ_P$ into $T$ copies,
we take random perfect $k$-partite matching in such a way that each matching takes one variable from each $V_i$.
For each such matching $\{u_1,\ldots,u_k\}$ where $u_i$ is of the form $(v_i,j_i)$,
we create a constraint $P(u_1,\ldots,u_k)$ of weight $\biw_P$.
Finally, we merge the split variables.

We define the distribution $\calD_{N,T}^{\opt}$ using $\calD_{N,T}^{\opt}(P)$.
An instance $\calJ$ of $\calD_{N,T}^{\opt}$ is generated as follows.
For each $P\in \calP$,
we create an instance $\calJ_P$ according to the distribution $\calD_{N,T}^{\opt}(P)$.
Then, $\calJ$ is a union of $\{\calJ_P\}_{P\in \calP}$ obtained by merging variable sets as follows.
Let $P_1,\ldots,P_\ell \in \calP$ be the set of constraints containing a variable $v\in V$.
We let $V_i (i \in [\ell])$ denote the set of variables in $\calJ_{P_i}$ corresponding to $v$.
Then, we take random perfect $\ell$-partite matching among $V_1,\ldots,V_\ell$,
and we merge variables in each matching.
We repeat the same process for every $v\in V$.
We note that the variable set of $\calJ$ is $V\times [N]$,
and the number of constraints in $\calJ$ is $|\calP|TN$.
Now, we decide the indices of constraints, 
which are used as arguments of the oracle access $\calO_{\calJ}$.
We use the following rule.
Suppose that $P$ is the $i$-th constraint where $v\in P$ appears (in the sense of $\calI$),
then for a variable $(v,j) (j \in [N])$, 
we randomly assign $T$ indices $\{(T-1)i+1,\ldots,Ti\}$ to designate $T$ constraints made by $\calJ_P$.
Finally, labels of vertices are randomly permuted.

\paragraph{Construction of $\calD_{N,T}^{\lp}$:}
Before stating the construction of $\calD_{N,T}^{\lp}$,
again we introduce another distribution $\calD_{N,T}^{\lp}(P)$ for a constraint $P\in \calP$ applied to a sequence of variables $\{v_1,\ldots,v_k\}$. (see Fig.~\ref{fig:lower}).
An instance $\calJ_P$ is generated as follows.
The variable set of $\calJ_P$ is $\{v_1,\ldots,v_k\}\times [N]$.
We naturally regard that the set of variables $V_i = \{(v_i,j) \mid j\in [N]\}$ corresponds to a variable $v_i$.
For each $\beta\in [q]^{V(P)}$,
we take a $\bimu^*_{P,\beta}$-fraction of variables from each $V_i$, and let $V_{i,\beta}\subseteq V_i$ denote the set of such variables.
Variables in $V_{i,\beta}$ are said to be \textit{assigned to $\beta_{v_i} \in [q]$}.
A subtlety here is that $\bimu_{P,\beta}N$ may not be an integer.
We ignore this issue for simplicity since we can make the error arbitrarily small by choosing $N$ large enough.
Next, we create $\bimu_{P,\beta}TN$ constraints among $V_{1,\beta},\ldots,V_{k,\beta}$.
To this end, after splitting each variable into $T$ copies,
we take random perfect $k$-partite matching in such a way that each matching takes one variable from each $V_{i,\beta}$.
For each matching $\{u_1,\ldots,u_k\}$ where $u_i$ is of the form $(v_i,j_i)$,
we create a constraint $P(u_1,\ldots,u_k)$ of weight $\biw_P$.
Finally, we merge the split variables again.
We note that, for any variable $v_i \in V(P)$,
an $\bix^*_{v_i,a}$-fraction of variables of $V_i$ is assigned to $a$.

We define the distribution $\calD_{N,T}^{\lp}$ using $\calD_{N,T}^{\lp}(P)$.
An instance $\calJ$ of $\calD_{N,T}^{\lp}$ is generated as follows.
For each $P\in \calP$,
we create an instance $\calJ_P$ according to the distribution $\calD_{N,T}^{\lp}(P)$.
Then, $\calJ$ is a union of $\{\calJ_P\}_{P\in \calP}$ obtained by merging variable sets as follows.
Let $P_1,\ldots,P_\ell \in \calP$ be the set of constraints containing a variable $v\in V$.
We let $V_{i,a} (i\in [\ell], a\in [q])$ denote the set of variables in $\calJ_{P_i}$ that correspond to $v$ and are assigned to $a$.
Note that the sizes of $V_{i,a} (i\in [\ell])$ are the same from the construction of $\calD_{N,T}^{\lp}(P_i)$.
We take random perfect $\ell$-partite matching among $V_{1,a},\ldots,V_{\ell,a}$ and we merge vertices in each matching.
We repeat the same process for every $v\in V$ and $a\in [q]$.
Note that the variable set of $\calJ$ is $V\times [N]$ and the number of constraints in $\calJ$ is $|\calP|TN$.
To decide the indices of constraints and labels of vertices, we use the same rule as $\calD_{N,T}^{\opt}$.

We have the following three lemmas, 
the proofs of which are in Appendix~\ref{apx:lower-appendix}.
\begin{lemma}\label{lmm:less-than-opt}
  For every $\epsilon>0$, there is a $T>0$ satisfying the following.
  Let $\calJ$ be a $\Lambda$-CSP instance generated by $\calD_{N,T}^{\opt}$.
  With probability $1-o(1)$,
  $\olopt(\calJ) \leq \olopt(\calI)+\epsilon$.
\end{lemma}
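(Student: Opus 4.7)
The plan is to fix an arbitrary assignment $\beta'\in [q]^{V\times [N]}$ to $\calJ$, compute $\E[\olval(\calJ,\beta')]$ over the randomness of $\calD_{N,T}^{\opt}$, establish concentration around this expectation via the permutation form of McDiarmid's inequality, and conclude by a union bound over the $q^{nN}$ choices of $\beta'$. Note that once $\beta'$ is fixed, the random variable merging across different $\calJ_P$'s does not affect $\val(\calJ,\beta')$ (it merely relabels variables), so the only randomness that matters is the $|\calP|$ independent random matchings defining the instances $\calJ_P$.

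For each $v\in V$ and $a\in [q]$, define the empirical marginal $\bix_{v,a} := \tfrac{1}{N}|\{j\in[N]:\beta'(v,j)=a\}|$, so each $\bix_v$ is a distribution on $[q]$. For $P\in\calP$ on variables $(v_1,\ldots,v_k)$, the $k$-partite matching used to construct $\calJ_P$ is determined by $k-1$ i.i.d.\ uniform permutations of $[TN]$; a direct calculation via linearity of expectation shows that the number $X_P$ of $\calJ_P$-constraints satisfied by $\beta'$ obeys
\begin{equation*}
\E[X_P] = TN\cdot\sum_{\beta\in [q]^{V(P)}}P(\beta)\prod_{i=1}^k \bix_{v_i,\beta_{v_i}}.
\end{equation*}
Summing $\biw_P X_P$ over $P\in\calP$ and dividing by $\biw_\calJ = TN\biw_\calI$ yields
\begin{equation*}
\E[\olval(\calJ,\beta')] = \E_\gamma[\olval(\calI,\gamma)]\leq \olopt(\calI),
\end{equation*}
where $\gamma\in [q]^V$ is the random integer assignment with $\gamma_v\sim\bix_v$ drawn independently for each $v$.

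Next, swapping any two values in one of the random permutations defining some $\calJ_P$ changes $\val(\calJ,\beta')$ by at most $2w$. Applying McDiarmid's concentration inequality for independent uniform random permutations (equivalently, a Doob martingale that reveals matched pairs sequentially) yields
\begin{equation*}
\Pr\Bigl[\val(\calJ,\beta')>\E[\val(\calJ,\beta')] + \tfrac{\epsilon}{2}\biw_\calJ\Bigr]\leq \exp\bigl(-\Omega\bigl(\epsilon^2 TN/C(\calI)\bigr)\bigr),
\end{equation*}
where $C(\calI)=O(|\calP|^2 w^2)$ depends only on $\calI$.

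Finally, a union bound over the $q^{nN}$ assignments $\beta'$ gives total failure probability at most $q^{nN}\exp(-\Omega(\epsilon^2 TN/C(\calI)))$; choosing $T$ large enough that $\epsilon^2 T/C(\calI)$ exceeds a constant multiple of $n\log q$ makes this $o(1)$ as $N\to\infty$. Hence with probability $1-o(1)$, every $\beta'\in[q]^{V\times[N]}$ satisfies $\olval(\calJ,\beta')\leq \olopt(\calI)+\epsilon/2\leq\olopt(\calI)+\epsilon$, which proves $\olopt(\calJ)\leq\olopt(\calI)+\epsilon$. The main obstacle is the concentration step: since the random matching is not a product of independent coordinates, standard McDiarmid does not apply directly, and one must invoke the permutation version (with bounded-differences constant $O(w)$ independent of $N$) to obtain the $\exp(-\Omega(N))$ tail needed to absorb the $q^{nN}$ union bound.
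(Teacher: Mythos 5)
Your proposal is correct and takes essentially the same route as the paper: fix an assignment, express the expected value via the empirical product marginals $\bix_{v,a}$ to get $\E[\olval(\calJ,\beta')]\leq\olopt(\calI)$, concentrate via a swap/switching argument of bounded difference $O(w)$ over the random matchings (the paper cites Theorem~2.19 of Wormald, which is exactly the permutation form of Azuma/McDiarmid you invoke), and union-bound over all $q^{nN}$ assignments with $T$ chosen large enough to beat the entropy term. The only cosmetic difference is that the paper applies the concentration bound to each $\val(\calJ_P,\cdot)$ separately and then union-bounds over $P\in\calP$, whereas you concentrate $\val(\calJ,\cdot)$ jointly; both yield a tail of the form $\exp(-\Omega(\epsilon^2 TN))$ with an instance-dependent constant, which suffices.
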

\begin{lemma}\label{lmm:more-than-lp}
  Let $\calJ$ be a $\Lambda$-CSP instance generated by $\calD_{N,T}^{\lp}$.
  Then, $\olopt(\calJ) \geq \ollp(\calI)$ holds.
\end{lemma}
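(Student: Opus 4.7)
The plan is to exhibit an explicit assignment $\beta^\star$ for $\calJ$ whose objective value equals $TN\cdot\lp(\calI)$, and then divide by $\biw_\calJ = TN\cdot\biw_\calI$ to obtain the desired inequality. The assignment is built directly from the labels provided by the construction itself: during the generation of $\calJ_P$, every vertex of $V_i = V_{i,\bullet}$ belongs to some $V_{i,\beta}$ and is by definition ``assigned to $\beta_{v_i}\in[q]$.'' This gives a tentative labeling on the union $\bigsqcup_P\calJ_P$ before any merges are performed.

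The first step is to check that this labeling descends to a well-defined assignment on the merged instance $\calJ$. This is exactly what the merging rule ensures: for each $v\in V$ and each $a\in[q]$, the construction takes a random perfect $\ell$-partite matching among the sets $V_{1,a},\ldots,V_{\ell,a}$ (whose sizes are all equal to $\bix^\star_{v,a}N$ by the marginal property of $(\bix^\star,\bimu^\star)$), and only merges variables inside a common label class $a$. Hence whenever two vertices in $\bigsqcup_P\calJ_P$ are identified, they already carry the same label, and the labeling $\beta^\star\in[q]^{V\times[N]}$ is well-defined on $\calJ$.

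The second step is to evaluate $\val(\calJ,\beta^\star)$. By construction, every constraint in $\calJ_P$ is produced from a matching that picks one vertex $u_i\in V_{i,\beta}$ for some fixed $\beta\in[q]^{V(P)}$, so $\beta^\star(u_i)=\beta_{v_i}$ and the constraint evaluates to $P(\beta)$. The number of such constraints in $\calJ_P$ is $\bimu^\star_{P,\beta}TN$, each of weight $\biw_P$ (ignoring the integrality slack already acknowledged in the construction, which can be absorbed by letting $N\to\infty$). Summing,
\begin{eqnarray*}
  \val(\calJ,\beta^\star)
  = \sum_{P\in\calP}\sum_{\beta\in[q]^{V(P)}}\biw_P\,P(\beta)\,\bimu^\star_{P,\beta}\,TN
  = TN\cdot\lp(\calI).
\end{eqnarray*}
Since $\biw_\calJ = \sum_{P\in\calP}TN\cdot\biw_P = TN\cdot\biw_\calI$, dividing gives $\olopt(\calJ)\ge \val(\calJ,\beta^\star)/\biw_\calJ = \ollp(\calI)$.

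There is essentially no obstacle beyond bookkeeping; the construction of $\calD_{N,T}^{\lp}$ was engineered precisely so that the fractional LP solution lifts to an integral one by reading off the construction labels. The only mildly delicate point is confirming that the merges across different $\calJ_{P_i}$ never collapse differently-labeled vertices, and this is immediate from the fact that the cross-instance matchings are done inside each label class $V_{i,a}$ separately, whose equal sizes are guaranteed by the LP marginal constraint $\sum_{\beta:\beta_v=a}\bimu^\star_{P,\beta}=\bix^\star_{v,a}$.
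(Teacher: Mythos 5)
Your proof is correct and matches the paper's argument: both exhibit the ``natural'' assignment read off from the label classes $V_{i,\beta}$ in the construction of $\calD_{N,T}^{\lp}$ and observe that it achieves value exactly $TN\cdot\lp(\calI)$. You spell out the well-definedness of the labeling across merges more carefully than the paper does, but this is bookkeeping the paper leaves implicit rather than a different route.
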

\begin{lemma}\label{lmm:distinguish}
  In the bounded-degree model,
  any deterministic algorithm that,
  given an oracle access to $\calO_{\calJ}$ generated by $\calD_{N,T}^\star$,
  correctly guesses the original distribution of $\calJ$ with probability at least $3/5$ requires at least $\Omega(\sqrt{N})$ queries.
\end{lemma}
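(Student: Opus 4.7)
The plan is to apply Yao's minimax principle: it suffices to bound the total variation distance between the algorithm's view under $\calD_{N,T}^{\opt}$ and under $\calD_{N,T}^{\lp}$ by $o(1)$ whenever it performs $q=o(\sqrt{N})$ queries. I would do this by constructing an explicit coupling under which the two views coincide on a high-probability ``no-collision'' event, and then bound the collision probability by a birthday-paradox argument.

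First, I would use that both constructions end with a uniformly random permutation of the variable labels, and that the predicate returned by a query $(v,i)$ is a deterministic function of $(v,i)$ alone (the block of $T$ indices at $v$ encodes which underlying constraint $P\in\calP$ of $\calI$ the returned copy belongs to). Hence the only information a deterministic algorithm extracts after $q$ queries is the pattern of coincidences among the labels returned in those $q$ responses. Let $E_q$ denote the event that, throughout the $q$ queries, every slot-variable of a returned constraint that the algorithm did not query directly is fresh, so that the explored hypergraph is a forest. On $E_q$ the algorithm's view is a deterministic function of its query pattern only, and therefore carries no information about which of the two distributions generated the instance.

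Second, I would couple $\calD_{N,T}^{\opt}$ and $\calD_{N,T}^{\lp}$ on $E_q$ by sampling the fresh slot-labels from the same uniform measure in both processes. The key point is that, marginally, a single fresh slot-variable of a freshly queried constraint of type $P$ in slot $v_i$ is uniform over $\{(v_i,j):j\in[N]\}$ under both distributions: for $\calD_{N,T}^{\lp}(P)$ this follows from $\sum_{\beta:\beta_{v_i}=a}\bimu^{*}_{P,\beta}=\bix^{*}_{v_i,a}$ combined with $|V_{i,a}|=\bix^{*}_{v_i,a}N$, both coming from the \blp feasibility of $(\bix^{*},\bimu^{*})$. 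The joint $\bimu^{*}$-correlations between slots of a single constraint, and the color-respecting gluing of per-constraint instances $\calJ_P$ across a common $v\in V$, only become observable when two exploration paths meet at a shared variable, which is precisely $\neg E_q$.

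Third, I would bound $\Pr[\neg E_q]$ by a birthday calculation. After $q$ queries the algorithm has revealed at most $qs$ distinct variables, each of which is a uniformly random element of a support of size $\Omega(N)$ (the constant depending only on $q,s,t,w,T$ and on the strictly positive entries of $\bix^{*}$). Conditioned on the past, the probability that a new query causes any collision is $O(qs^2/N)$, hence $\Pr[\neg E_q]=O(q^2 s^2/N)$. For $q=c\sqrt{N}$ with $c$ sufficiently small this is below $1/10$, and the probability that a deterministic algorithm correctly guesses the source distribution therefore differs by at most $o(1)$ between the two, so it cannot exceed $1/2+o(1)<3/5$. The main technical obstacle I anticipate is the clean formalization of the coupling at the gluing stage, where each $v\in V$ is shared across several $\calJ_{P_i}$ and the $\ell$-partite matchings in $\calD_{N,T}^{\lp}$ respect the color partitions $V_{i,a}$; one must leverage both the \blp consistency to guarantee marginal uniformity of fresh samples and the fact that the random label permutation is the final operation to ensure the hidden $\beta$-labels stay invisible on $E_q$.
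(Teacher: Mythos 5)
Your proposal is correct and takes essentially the same approach as the paper's proof: both reduce to a birthday-paradox bound showing that an algorithm making $o(\sqrt{N})$ queries sees a ``collision'' (a revisited variable reached via a second path) with probability $o(1)$, and that conditional on no collision the transcripts are distributed identically under $\calD_{N,T}^{\opt}$ and $\calD_{N,T}^{\lp}$ because fresh slot-variables are marginally uniform over the relevant blocks. The paper packages this as a pair of lazy-sampling processes $\calP^{\opt},\calP^{\lp}$ in the Goldreich--Ron style rather than an explicit coupling, and the controlling constant in the denominator is $\mu=\min\{\bimu^*_{P,\beta}:\bimu^*_{P,\beta}>0\}$ (the smallest positive local-distribution entry, which lower-bounds the size of each color class $V_{i,\beta}$) rather than the positive entries of $\bix^*$ as you wrote---a minor slip, since $\bix^*_{v,a}\geq\mu$ anyway---but the argument is otherwise the same.
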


\begin{proof}[Proof of Theorem~\ref{thr:lower}]
  Let us fix $c\in [0,1]$ and $s=S_{\Lambda}(c)$.
  Then, there exists a $\Lambda$-CSP instance $\calI$ such that $\ollp(\calI)=c$ and $\olopt(\calI)$ is arbitrarily close to $s$.
  Suppose that there exists a deterministic algorithm $\calA$ with query complexity $o(\sqrt{n})$ that, 
  given an instance $\calJ$ of $n$ variables,
  with probability at least $2/3$,
  distinguishes the case $\opt(\calJ) \geq c\biw_{\calJ}$ from the case $\opt(\calJ) \leq (s+\epsilon)\biw_{\calJ}-\epsilon n$.
  Let $T$ be a constant given by Lemma~\ref{lmm:less-than-opt} by replacing $\epsilon$ with $\epsilon/2$.

  Suppose that $\calJ$ is generated by $\calD_{N,T}^{\opt}$.
  Then, from Lemma~\ref{lmm:less-than-opt}, 
  with probability at least $1-o(1)$, 
  it holds that $\opt(\calJ) \leq (s+\epsilon/2)\biw_{\calJ} = (s+\epsilon)\biw_{\calJ}-\epsilon \biw_{\calJ}/2 \leq (s+\epsilon)\biw_{\calJ} - \epsilon n$.
  In the last inequality, we use the fact that $\biw_{\calJ}/2 \geq \biw_{\calJ}/T \geq n$ when $\epsilon$ is small.
  Suppose that $\calJ$ is generated by $\calD_{N,T}^{\lp}$.
  Then, from Lemma~\ref{lmm:more-than-lp},
  it holds that $\opt(\calJ) \geq c\biw_{\calJ}$.

  Thus, in total, the algorithm outputs the correct answer with probability at least
  \( 1/2 \cdot (1-o(1)) \cdot 2/3 + 1/2\cdot 2/3 = 2/3-o(1) \).
  This contradicts Lemma~\ref{lmm:distinguish}.
\end{proof}

\newpage

\newpage

\appendix
\noindent {\bf\Large \appendixname}

\section{Robustness of \blp}\label{apx:robust}
In this section, we give a proof of Lemma~\ref{lmm:robust}.
Our strategy is transforming $(\bix,\bimu)$ to a feasible solution without decreasing the LP value much.
In the first step,
we construct $\bix'$ from $\bix$ that satisfies $\sum_{a\in [q]}\bix'_{v,a}=1$ for every $v\in V$.
\begin{lemma}\label{lmm:surgery}
  Let $(\bix,\bimu)$ be an $\epsilon$-infeasible LP solution for a $\Lambda$-CSP instance $\calI$ where $\epsilon>0$ is a small constant.
  Then, $\bix$ can be transformed to $\bix'$ so that
  \begin{eqnarray}
    \sum_{a\in[q]}\bix'_{v,a}&=&1 \quad \forall v\in V, \label{eq:surgery-1}\\
    |\bix'_{v,a}-\bix_{v,a}| &\leq& 2\epsilon \quad \forall v\in V, a\in [q]. \label{eq:surgery-2}
  \end{eqnarray}
  In particular, 
  $(\bix',\bimu)$ is a $3\epsilon$-infeasible LP solution that satisfies $\sum_{a\in [q]}\bix'_{v,a}=1$ for every $v\in V$.
\end{lemma}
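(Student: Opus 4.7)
The plan is to operate on each variable $v \in V$ independently, since the sum constraint $\sum_{a}\bix_{v,a}=1$ couples only the $q$ coordinates $\{\bix_{v,a}\}_{a\in[q]}$ attached to a single vertex. Set $s_v = \sum_{a\in[q]}\bix_{v,a}$; by $\epsilon$-infeasibility we have $|s_v - 1|\le \epsilon$ and $\bix_{v,a}\ge 0$. The strategy is: if $s_v \le 1$ we push mass into the distribution to reach total mass $1$, and if $s_v > 1$ we shrink the distribution multiplicatively. Both operations preserve non-negativity and move each coordinate by at most $2\epsilon$, yielding (\ref{eq:surgery-1}) and (\ref{eq:surgery-2}).

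Concretely, in the first case ($s_v \le 1$) I would define $\bix'_{v,a} = \bix_{v,a} + (1-s_v)/q$. Then $\sum_a \bix'_{v,a}=1$, $\bix'_{v,a}\ge \bix_{v,a}\ge 0$, and $|\bix'_{v,a}-\bix_{v,a}|=(1-s_v)/q \le \epsilon/q \le 2\epsilon$. In the second case ($s_v > 1$) I would set $\bix'_{v,a} = \bix_{v,a}/s_v$. Then $\sum_a \bix'_{v,a}=1$, non-negativity is preserved, and
\[
|\bix'_{v,a}-\bix_{v,a}| \;=\; \bix_{v,a}\,\frac{s_v-1}{s_v} \;\le\; (1+\epsilon)\cdot\epsilon \;\le\; 2\epsilon,
\]
using $\bix_{v,a}\le s_v \le 1+\epsilon$ and that $\epsilon$ is small. (One could instead subtract $(s_v-1)/q$ uniformly, but that may break non-negativity at coordinates where $\bix_{v,a}=0$; multiplicative scaling avoids this issue cleanly.)

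It then remains to check that $(\bix',\bimu)$ is $3\epsilon$-infeasible. The non-negativity constraints are untouched, and the equalities $\sum_a \bix'_{v,a}=1$ now hold exactly. For the consistency constraints, we already have $\bigl|\sum_{\beta:\beta_v=a}\bimu_{P,\beta} - \bix_{v,a}\bigr|\le \epsilon$ for every $P$, $v\in V(P)$, and $a\in[q]$, and (\ref{eq:surgery-2}) gives $|\bix'_{v,a}-\bix_{v,a}|\le 2\epsilon$, so by the triangle inequality
\[
\Bigl|\sum_{\beta:\beta_v=a}\bimu_{P,\beta} - \bix'_{v,a}\Bigr| \;\le\; 3\epsilon.
\]
This finishes the proof. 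There is no real obstacle; the only thing to be a bit careful about is not using a uniform additive correction that could drive a zero coordinate negative, which is why I split into the two cases above.
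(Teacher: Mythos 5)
Your proof is correct and takes essentially the same approach as the paper: the paper simply normalizes $\bix'_{v,a}=\bix_{v,a}/\sum_{b\in[q]}\bix_{v,b}$ in all cases, which already preserves non-negativity even when $\sum_b\bix_{v,b}<1$ (it just inflates the coordinates), so your case split for $s_v\le 1$ is harmless but unnecessary. The final triangle-inequality step establishing $3\epsilon$-infeasibility is correct and is the one piece of the ``in particular'' claim the paper leaves implicit.
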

\begin{proof}
  We define $\bix'_{v,a}=\bix_{v,a}/\sum_{a\in [q]}\bix_{v,a}$.
  The condition (\ref{eq:surgery-1}) clearly holds.
  From the $\epsilon$-infeasibility of $\bix$, 
  $|\sum_{a\in [q]} \bix_{v,a}-1|\leq \epsilon $ holds.
  It follows that $|\bix'_{v,a}-\bix_{v,a}|\leq \epsilon/(1-\epsilon) \leq 2\epsilon$ when $\epsilon$ is small.
\end{proof}

In the second step,
we construct $\bimu'$ that satisfies $\sum_{\beta\in [q]^{V(P)},\beta_v=a}\bimu'_{P,\beta}=\bix'_{v,a}$ for all $P\in \calP, v\in V(P)$.
\begin{lemma}\label{lmm:smoothing}
  Let $(\bix,\bimu)$ be an $\epsilon$-infeasible solution for a $\Lambda$-CSP instance $\calI$ satisfying $\sum_{a\in [q]}\bix_{v,a}=1$ for every $v\in V$.
  Then, $\bimu$ can be transformed to $\bimu'$ so that
  \begin{eqnarray*}
    \Pr_{\beta\sim \bimu'_P}[\beta_v=a]&=&(1-\delta)\bix_{v,a}+\frac{\delta}{q} \quad \forall P\in \calP, v\in V(P), a\in [q], \\
    ||\bimu_P-\bimu'_P||_1&\leq& 2\delta \quad \forall P\in \calP.
  \end{eqnarray*}
  where $\delta=kq^3\epsilon$.
\end{lemma}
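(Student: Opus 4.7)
The plan is to realize $\bimu'_P$ as a convex combination
\[ \bimu'_P \;=\; (1-\delta)\bimu_P + \delta\,\nu_P \]
for a carefully chosen distribution $\nu_P$ on $[q]^{V(P)}$. This form immediately delivers the $\ell_1$ bound, since $\|\bimu_P - \bimu'_P\|_1 = \delta\|\nu_P - \bimu_P\|_1 \leq 2\delta$ whenever both are probability distributions (modulo an $O(q\epsilon)$ correction from normalizing $\bimu_P$, which is absorbed by $\delta = kq^3\epsilon$). So all the work reduces to picking $\nu_P$ correctly.

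\textbf{Choosing the marginals of $\nu_P$.} The marginal of $\bimu'_P$ at $(v,a)$ equals $(1-\delta)\bimu_P^{v,a} + \delta\,\nu_P^{v,a}$, where $\bimu_P^{v,a} := \sum_{\beta:\beta_v=a}\bimu_{P,\beta}$. Equating this to the target $(1-\delta)\bix_{v,a} + \delta/q$ forces
\[ \nu_P^{v,a} \;=\; \frac{1}{q} + \frac{1-\delta}{\delta}\bigl(\bix_{v,a} - \bimu_P^{v,a}\bigr), \]
i.e.\ $\nu_P$ should be a joint distribution whose one-dimensional marginals are the uniform distribution nudged by the (small) marginal error of $\bimu_P$. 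By $\epsilon$-infeasibility $|\bix_{v,a} - \bimu_P^{v,a}| \leq \epsilon$, so each $\nu_P^{v,a}$ is non-negative as long as $\delta \gtrsim q\epsilon$, which is comfortably met by $\delta = kq^3\epsilon$. Summing over $a$ gives $1$ because $\sum_a \bix_{v,a}=1$ by hypothesis and $\sum_a \bimu_P^{v,a}=1 \pm O(q\epsilon)$ after renormalizing; the renormalization error is again $o(\delta)$.

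\textbf{Existence of $\nu_P$ and main obstacle.} Once the prescribed one-dimensional marginals are valid (non-negative and summing to one), a joint distribution realizing them is easy to produce---for example the product measure $\nu_P = \prod_{v\in V(P)} \nu_P^{v,\cdot}$ works. The main technical nuisance is handling the $\epsilon$-infeasibility of $\bimu_P$ itself: before the convex-combination argument is applied, one must treat $\bimu_P$ as a genuine probability distribution by renormalizing, and keep careful track of how this renormalization perturbs both the marginal errors (from $\epsilon$ up to $O(q\epsilon)$) and the $\ell_1$ distance. All of these perturbations are of order $q\epsilon$, which is $o(\delta)$ under the hypothesis $\delta=kq^3\epsilon$, so they can be charged to the overall constants without damaging either conclusion of the lemma, and the factor $q^3$ in $\delta$ is exactly the slack needed to absorb them.
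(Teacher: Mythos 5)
Your approach is correct in outline and genuinely different from the paper's. The paper works in Fourier space: writing $\bimu_P$ as a function $f:[q]^k\to\bbR$, it expands $f$ in a tensor-product orthonormal basis, replaces the coefficients $\widehat{f}(\sigma)$ supported on singletons by the corresponding coefficients of the single-variable functions $g_i(a)=\bix_{i,a}$, obtains $f'$ with the right marginals and $\|f-f'\|_\infty\le\delta/q^k$, and then takes $h=(1-\delta)f'+\delta U$ to restore nonnegativity. You avoid Fourier analysis entirely: you mix $\bimu_P$ directly with a product distribution $\nu_P$ whose one-dimensional marginals are designed to cancel the marginal error $\bix_{v,a}-\sum_{\beta:\beta_v=a}\bimu_{P,\beta}$ at scale $1/\delta$. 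The marginal identity then falls out of linearity and the $\ell_1$ bound from the small mixing weight. Both arguments trade the same currency---marginal errors of size $O(\epsilon)$ against positive mass of size $\Omega(1/q)$ or $1/q^k$ in the corrector---so both need $\delta\gtrsim\poly(q)\cdot\epsilon$. Your route is shorter and more elementary, and I find the product-measure construction more transparent than the Fourier surgery.

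There is, however, a gap in the bookkeeping that is more than cosmetic. You write the main formula $\bimu'_P = (1-\delta)\bimu_P + \delta\nu_P$ and solve for $\nu_P^{v,a}$ \emph{before} renormalizing $\bimu_P$, but the renormalization must happen first: with the raw $\bimu_P$, the quantities $\nu_P^{v,a}$ you solve for satisfy $\sum_a\nu_P^{v,a}=1+\frac{1-\delta}{\delta}\bigl(1-\sum_\beta\bimu_{P,\beta}\bigr)\neq1$, so $\nu_P=\prod_v\nu_P^{v,\cdot}$ is not a probability measure, its $v$-marginal is $\nu_P^{v,a}$ multiplied by a $(k-1)$-fold product of these sums, and the marginal identity you verified does not hold. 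The fix is to set $\tilde{\bimu}_P=\bimu_P/\sum_\beta\bimu_{P,\beta}$ and run the whole construction with $\tilde{\bimu}_P$ in place of $\bimu_P$; then $\sum_a\nu_P^{v,a}=1$ exactly and everything goes through, at the cost that the marginal error becomes $|\bix_{v,a}-\tilde{\bimu}_P^{v,a}|=O(q\epsilon)$, so nonnegativity of $\nu_P^{v,a}$ needs $\delta\gtrsim q^2\epsilon$ rather than $q\epsilon$. Finally, your claim that the $O(q\epsilon)$ normalization cost is ``$o(\delta)$'' is false: $q\epsilon=\delta/(kq^2)$ is a fixed constant fraction of $\delta$, not asymptotically smaller. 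The $\ell_1$ bound you actually get is $\|\bimu_P-\bimu'_P\|_1\le\|\bimu_P-\tilde{\bimu}_P\|_1+\delta\|\tilde{\bimu}_P-\nu_P\|_1\le q\epsilon+2\delta$, so the stated $2\delta$ is only recovered after inflating $\delta$ by a constant factor. This is harmless for the downstream use in Lemma~\ref{lmm:robust}, which only needs $O(\delta)$, but it means the exact constant in the lemma as stated is not reproduced by your argument.
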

\begin{proof}
  Let us fix a predicate $P\in \calP$ and $S=V(P)$.
  We may assume $S=\{1,\ldots,k\}$ where $k\leq s$.
  We can think of $\bimu_P$ as a function $f:[q]^k\to \bbR$ such that $f(\beta)$ is the probability of the assignment $\beta$ under the distribution $\bimu_P$.

  Let $\chi_1,\ldots,\chi_q$ be an orthonormal basis of the vector space $\{f:[q]\to \bbR\}$ such that $\chi_1\equiv 1$.
  Here, orthonormal means that $E_{a\in[q]}[\chi_i(a)\chi_j(a)]=\delta_{ij}$ for all $i,j\in [q]$ where $\delta$ is Kronecker's delta.
  By tensoring this basis, 
  we obtain the orthonormal basis $\{\chi_\rho\}_{\rho\in[q]^k}$ of the vector space $\{f:[q]^k\to \bbR\}$.
  That is,
  for $\rho\in [q]^k, \beta\in [q]^k$,
  we have $\chi_{\rho}(\beta)=\chi_{\rho_1}(\beta_1)\cdots \chi_{\rho_k}(\beta_k)$.
  For a function $f:[q]^k\to \bbR$,
  we define $\hat{f}(\sigma)=\sum_{\beta\in [q]^k}f(x)\chi_{\sigma}(\beta)$.
  Note that $f(\beta)=E_{\sigma\in [q]^k}[\hat{f}(\sigma)\chi_{\sigma}(\beta)]$.
  Therefore, 
  if we let $f$ again be the function corresponding to $\bimu_P$, 
  we have
  \begin{eqnarray*}
    \Pr_{\beta\sim \bimu_P}[\beta_i=a]
    =
    \sum_{\beta\in [q]^k, \beta_i=a}\E_{\sigma\in [q]^k}\left[\widehat{f}(\sigma)\chi_\sigma(\beta)\right]
    =
    \E_{\sigma\in [q]}\left[\widehat{f_i}(\sigma)\chi_\sigma(a)\right].
  \end{eqnarray*}
  Here, $\widehat{f}_i(s)=\widehat{f}(\sigma)$ where $\sigma_i=s$ and $\sigma_r=1$ for all $r\in [k]\setminus\{i\}$.
  In the second inequality, we used that for every $\sigma$ with $\sigma_r\neq 1$ for some $r\in [k]\setminus\{i\}$,
  the sum over the values of $\chi_\sigma$ vanishes.

  We let $g_i:[q]\to \bbR$ be the function $g_i(a)=\bix_{i,a}$.
  We define a function $f':[q]^k\to \bbR$ as follows.
  \begin{eqnarray*}
    \widehat{f'}(\sigma)=\left\{
    \begin{array}{ll}
      \widehat{g}_i(s) & \mbox{if }\sigma_i=s\mbox{ and }\sigma_r=1\mbox{ for all }r\in [k]\setminus \{i\}, \\  
      \widehat{f}(\sigma) & \mbox{otherwise}.      
    \end{array}
    \right.
  \end{eqnarray*}
  This is well-defined since for any $i\in [k]$, 
  it holds that $\widehat{g}_i(1)=\sum_{a\in [q]}g_i(a)=\sum_{a\in [q]}\bix_{i,a}=1$.
  Therefore, the function $f'$ satisfies $\sum_{\beta\in [q]^k}f'(\beta)=\widehat{f}(1)=1$,
  Then, we can define a distribution $\bimu'_P$ corresponding to $f'$,
  and we have
  \begin{eqnarray*}
    \Pr_{\beta\sim \bimu'_P}[\beta_i=a]=\E_{\sigma\in [q]}\left[\widehat{f'_i}(\sigma)\chi_\sigma(a)\right]=\bix_{v,a}.
  \end{eqnarray*}
  Thus, it looks that the $\bimu'_P$ is the desired distribution.
  However, in general, 
  the function $f'$ might take negative values.
  We will show that these values cannot be too negative and that the function can be made to a proper distribution by smoothing.
  
  Let $K$ be an upper bound on the values of the functions $\chi_1,\ldots,\chi_q$.
  From the orthonormality of the functions, 
  it follows that $K\leq \sqrt{q}$.
  Let $f_i(a)=\Pr_{\beta\sim \bimu_P}[\beta_i=a]$.
  Since the LP solution $(\bix,\bimu)$ is $\epsilon$-infeasible,
  we have
  \begin{eqnarray*}
    \left|\widehat{g}_i(s)-\widehat{f}_i(s)\right|
    =
    \left|\sum_{a\in [q]}g_i(a)\chi_s(a)-\sum_{a\in [q]}f_i(a)\chi_s(a)\right|
    \leq
    Kq\epsilon.
  \end{eqnarray*}
  Therefore, $|\widehat{f'}(\sigma)-\widehat{f}(\sigma)|\leq Kq\epsilon$ for all $\sigma\in [q]^k$.
  Recall that $|\widehat{f'}(\sigma)-\widehat{f}(\sigma)|=0$ for $\sigma\in [q]^k$ if there are $i\neq j$ such that $\sigma_i\neq 1, \sigma_j\neq 1$.
  Thus,
  \begin{eqnarray}
    |f'(\beta)-f(\beta)| = \left|\E_{\sigma\in [q]^k}\left[\widehat{f}'(\sigma)\chi_\sigma(\beta) - \widehat{f}(\sigma)\chi_\sigma(\beta)\right]\right|
    \leq
    \delta/q^k,
    \label{eq:f-f'-close}
  \end{eqnarray}
  where $\delta=K^2kq^2\epsilon$.
  Hence, if we let $h=(1-\delta)f'+\delta U$,
  where $U:[q]^k\to\bbR$ is the uniform distribution $U \equiv 1/q^k$,
  then
  \begin{eqnarray*}
    h(x)=(1-\delta)f'(x)+\delta/q^k\geq(1-\delta)f(x)\geq 0.
  \end{eqnarray*}
  It follows that $h$ corresponds to another distribution $\bimu'_P$ over assignments $[q]^k$.
  Furthermore, it holds 
  \begin{eqnarray*}
    \Pr_{\beta\sim \bimu'_P}[\beta_i=a]=(1-\delta)\bix_{i,a}+\frac{\delta}{q}.
  \end{eqnarray*}
  Finally, let us estimate the statistical distance between the distributions $\bimu_P$ and $\bimu'_P$.
  \begin{eqnarray*}
    ||f-h||_1
    =
    ||(1-\delta)(f-f')+\delta(f-U)||_1
    \leq
    ||f-f'||_1+\delta
    \leq
    2\delta.
  \end{eqnarray*}
  The first inequality is from the triangle inequality and the second inequality is from~\eqref{eq:f-f'-close}.
\end{proof}

\begin{proof}[Proof of Lemma~\ref{lmm:robust}]
  Let us consider an $\epsilon$-infeasible LP solution $(\bix,\bimu)$ for a $\Lambda$-CSP instance $\calI$ of value $c\biw_{\calI}$.
  First, we construct vector $\bix'$ as in Lemma~\ref{lmm:surgery}.
  These variables together with the original local distributions $\bimu$ form an $3\epsilon$-infeasible LP solution for $\calI$.
  Next, we construct local distributions $\bimu'$ as in Lemma~\ref{lmm:smoothing}.
  Define new variables
  \begin{eqnarray*}
    \bix''_{i,a}=(1-\delta)\bix'_{i,a}+\delta/q.
  \end{eqnarray*}
  It follows that $(\bix'',\bimu')$ is a feasible LP solution for $\calI$.
  The LP value of this solution is
  \begin{eqnarray*}
    \sum_{P\in \calP}\biw_P\E_{\beta\sim \bimu'_P}[P(\beta)]
    &=&
    c\biw_{\calI}-\sum_{P\in \calP}\biw_P\sum_{\beta\in [q]^{V(P)}}P(\beta)\left(\bimu_{P,\beta}-\bimu'_{P,\beta}\right)\\
    &\geq&
    c\biw_{\calI}-\sum_{P\in \calP}\biw_P||\bimu_P-\bimu'_P||_1\\
    &\geq&
    c\biw_{\calI}-\epsilon \cdot \poly(kq) \biw_{\calI}.
  \end{eqnarray*}
  We used $|P(x)|\leq 1$ for the first inequality, and the second inequality follows from Lemma~\ref{lmm:smoothing}.
\end{proof}

\section{Proof of Lemma~\ref{lmm:packing-lp}}\label{sec:packing-lp}
In this section, we give a proof of Lemma~\ref{lmm:packing-lp}.
We consider a more restricted form of a packing LP:
\begin{eqnarray}
  \begin{array}{ll}
    \max & \mathbf{1}^T\biz \\ 
    \mbox{s.t.} & A^T\biz \leq \bic \\
    & \biz \geq 0,
  \end{array}
  \label{lp:packing-restricted}
\end{eqnarray}
where $A\in \bbR_+^{m \times n}$ is a non-negative matrix such that $a_{ji}=0$ or $a_{ji}\geq 1$ for any $j\in [m], i\in [n]$,
and $\bic\in \bbR_+^{n}$ is a non-negative vector.

Define
\begin{eqnarray*}
  c_{\max} = \max_{i}c_i, \quad
  \Gamma_p = \max_i \frac{c_{\max}}{c_i}\sum_{j=1}^m a_{ji}, \quad
  \Gamma_d = \max_j \sum_{i=1}^na_{ji}.
\end{eqnarray*}
Then, there is a distributed algorithm that solves this packing LP.
\begin{lemma}[\cite{KMW06}]\label{lmm:kmw06}
  For sufficiently small $\epsilon>0$,
  there exists a deterministic distributed algorithm that computes a feasible $(1-\epsilon,0)$-approximate solution to LP~\eqref{lp:packing-restricted} in $O(\log \Gamma_p\log \Gamma_d/\epsilon^4)$ rounds.
  \qed
\end{lemma}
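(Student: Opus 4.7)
}
The plan is to build the distributed algorithm around a primal-dual multiplicative-weights scheme, executed locally at the primal and dual vertices of the bipartite LP graph. Recall that the dual of LP~\eqref{lp:packing-restricted} is a covering LP of the form $\min\,\bic^T\biy$ subject to $A\biy \geq \mathbf{1}$, $\biy\geq 0$. I would associate one dual vertex $v_j^d$ with each dual variable $\biy_j$ and one primal vertex $v_i^p$ with each $\biz_i$, connected whenever $a_{ji}>0$, so that one communication round corresponds to each vertex reading the current values of its neighbours. The algorithm maintains $\biy^{(t)}$ of the form $\biy_j^{(t)} = \tfrac{1}{c_j}\exp(\eta \sum_{s<t}\mathrm{slack}_j^{(s)})$ for a step size $\eta=\Theta(\epsilon^2)$, where $\mathrm{slack}_j^{(s)}$ records how badly constraint $j$ was violated at step $s$; symmetrically each $\biz_i$ is set at round $t$ to the maximum value compatible with the local ``price'' $\sum_j a_{ji}\biy_j^{(t)}$.

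The key steps in order: (i) show that after normalising, the expected primal-dual gap decreases geometrically in $t$, using the standard Hedge/MWU potential $\Phi^{(t)}=\sum_j c_j \biy_j^{(t)}$; (ii) bucket the primal variables into $O(\log \Gamma_p/\epsilon)$ geometric scales according to the magnitude of $\biz_i^{(t)}/c_{\max}$, and the dual constraints into $O(\log \Gamma_d/\epsilon)$ scales according to $\sum_i a_{ji}\biz_i^{(t)}/c_j$; (iii) show that within one scale a constant number of MWU phases suffice to bring that scale to within a $(1\pm\epsilon)$ factor of optimum, exploiting $a_{ji}\in\{0\}\cup[1,\infty)$ to bound the per-scale contribution; (iv) cascade the scales outward-in, so that after $O(\log \Gamma_p \cdot \log \Gamma_d / \epsilon^4)$ rounds in total every constraint is satisfied and the objective is within $(1-\epsilon)$ of the optimum; (v) derandomise by using deterministic ``symmetry breaking'' inside each local neighbourhood, which is possible because the structural radius is $O(\log\Gamma_p\log\Gamma_d/\epsilon^4)$ and each ball has bounded complexity in $\Gamma_p,\Gamma_d$.

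The main obstacle is step~(iii)--(iv): the $\log n$ factor that standard MWU would pay must be replaced by $\log \Gamma_p \log \Gamma_d$. The idea I would pursue is an amortised potential argument in which one shows that, once scale $k$ has stabilised, the primal mass on scales $>k$ can be written as a locally computable weighted matching between over-loaded duals and under-utilised primals, and that one additional $O(\log\Gamma_d/\epsilon^2)$-round MWU phase suffices to push the gap at scale $k+1$ below $\epsilon \cdot 2^{-k}$. Because there are only $O(\log\Gamma_p/\epsilon^2)$ scales, this yields the claimed $O(\log\Gamma_p\log\Gamma_d/\epsilon^4)$ round complexity. A secondary but nontrivial point is verifying that all computations a vertex must perform depend only on values inside its $r$-neighbourhood with $r=O(\log\Gamma_p\log\Gamma_d/\epsilon^4)$, which follows because updates at round $t$ propagate by at most one hop per round and the step-size / scale choices keep every intermediate quantity a function of a bounded local neighbourhood of the bipartite LP graph.
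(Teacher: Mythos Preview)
The paper does not prove this lemma. Note the \texttt{\textbackslash qed} immediately after the statement and the citation \cite{KMW06} in the lemma header: the result is quoted verbatim from Kuhn--Moscibroda--Wattenhofer and is treated as a black box. There is therefore nothing in the paper to compare your proposal against.

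As to the proposal itself: the high-level picture (primal--dual multiplicative weights on the bipartite LP graph, with geometric bucketing to replace the global $\log n$ by $\log\Gamma_p\log\Gamma_d$) is indeed the flavour of the KMW algorithm, so you are on the right track conceptually. But the sketch is quite loose at the crucial step, namely your (iii)--(iv): you assert that ``once scale $k$ has stabilised'' one more $O(\log\Gamma_d/\epsilon^2)$-round phase suffices for scale $k+1$, and that there are $O(\log\Gamma_p/\epsilon^2)$ scales, but you give no actual invariant or potential that witnesses this cascade, nor do you justify why the $a_{ji}\in\{0\}\cup[1,\infty)$ assumption yields precisely the stated bound rather than, say, an extra $\log$ factor. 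The actual argument in \cite{KMW06} is more delicate than a generic Hedge analysis and relies on a specific coupling of primal increments with dual weight growth across the scales; your proposal does not yet contain the idea that makes the round count independent of $n$. If you want to reconstruct a full proof you should consult \cite{KMW06} directly, since for the purposes of the present paper a citation suffices.
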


In order to apply Lemma~\ref{lmm:kmw06} to LP~\eqref{lp:blp-relaxed-again},
we transform it to the form LP~\eqref{lp:packing-restricted}.
Note that, in the objective function,
the coefficient of $\bimu_{P,\beta}$ is $\biw_PP(\beta)+C$ and the coefficients of $\bix_{v,a},\overline{\bix}_{v,a},\overline{\bimu}_{P,\beta}$ are $C$.
Thus, by replacing $\bimu_{P,\beta}$ with $\bimu_{P,\beta}/(\biw_PP(\beta)+C)$ and replacing $\bix_{v,a},\overline{\bix}_{v,a},\overline{\bimu}_{P,\beta}$ with $\bix_{v,a}/C,\overline{\bix}_{v,a}/C,\overline{\bimu}_{P,\beta}/C$, respectively, we obtain the following LP.
\begin{eqnarray*}
  \begin{array}{lll}
    \max & \mathbf{1}^T(\bix+\overline{\bix}+\bimu+\overline{\bimu}) \\
    \mbox{s.t.} & \sum\limits_{a\in[q]} \frac{\bix_{v,a}}{C} \leq q-1 + \epsilon & \forall v\in V\\    
    & \sum\limits_{a\in[q]}\frac{\overline{\bix}_{v,a}}{C} \leq 1+\epsilon & \forall v\in V\\    
    & \frac{\bix_{v,a}}{C}+\sum\limits_{\beta \in [q]^{V(P)}, \beta_v=a}\frac{\bimu_{P,\beta}}{\biw_PP(\beta)+C} \leq 1+\epsilon & \forall P\in \calP, v\in V(P), a\in [q]\\
    & \frac{\overline{\bix}_{v,a}}{C}+\sum\limits_{\beta \in [q]^{V(P)}, \beta_v=a}\frac{\overline{\bimu}_{P,\beta}}{C} \leq q^{V(P)-1}+\epsilon & \forall P\in \calP, v \in V(P), a\in [q]\\
    & \frac{\bix_{v,a}}{C}+\frac{\overline{\bix}_{v,a}}{C} \leq 1, \quad \bix_{v,a}\geq 0, \quad \overline{\bix}_{v,a}\geq 0 & \forall v \in V\\
    & \frac{\bimu_{P,\beta}}{\biw_PP(\beta)+C}+\frac{\overline{\bimu}_{P,\beta}}{C} \leq 1, \quad \bimu_{P,\beta}\geq 0, \quad \overline{\bimu}_{P,\beta}\geq 0 & \forall P\in \calP, \beta\in [q]^{V(P)}.\\
  \end{array}
\end{eqnarray*}

We multiply each constraint in order to make every coefficient in the LHS at least $1$.
Then, we have the following LP.
\begin{eqnarray}
  \begin{array}{lll}
    \max & \mathbf{1}^T(\bix+\overline{\bix}+\bimu+\overline{\bimu}) \\
    \mbox{s.t.} & \sum\limits_{a\in[q]}\bix_{v,a} \leq C(q-1 + \epsilon) & \forall v \in V\\    
    & \sum\limits_{a\in[q]}\overline{\bix}_{v,a} \leq C(1+\epsilon) & \forall v\in V\\    
    & \frac{w+C}{C}\bix_{v,a}+\sum\limits_{\beta \in [q]^{V(P)}, \beta_v=a}\frac{w+C}{\biw_PP(\beta)+1}\bimu_{P,\beta} \leq (1+\epsilon)(w+C) & \forall P\in \calP, v\in V(P), a\in [q]\\
    & \overline{\bix}_{v,a}+\sum\limits_{\beta \in [q]^{V(P)}, \beta_v=a}\overline{\bimu}_{P,\beta} \leq C(q^{V(P)-1}+\epsilon) & \forall P\in \calP, v\in V(P), a\in [q]\\
    & \bix_{v,a}+\overline{\bix}_{v,a} \leq C, \quad \bix_{v,a}\geq 0, \quad \overline{\bix}_{v,a}\geq 0 & \forall v \in V\\
    & \frac{w+C}{\biw_PP(\beta)+1}\bimu_{P,\beta}+\frac{w+C}{C}\overline{\bimu}_{P,\beta} \leq w+C, \quad \bimu_{P,\beta}\geq 0, \quad \overline{\bimu}_{P,\beta}\geq 0 & \forall P\in \calP, \beta\in [q]^{V(P)}.\\
  \end{array}
  \label{lp:blp-super-relaxed}
\end{eqnarray}

\begin{proof}[Proof of Lemma~\ref{lmm:packing-lp}]
  Note that LP~\eqref{lp:blp-super-relaxed} is of the form LP~\eqref{lp:packing-restricted}.
  After a calculation, we have
  \begin{eqnarray*}
    c_{\max}  = O(C(w+q^s)), 
    \quad
    \Gamma_p  = O((s+t)(w+C)\cdot C(w+q^s)),  
    \quad
    \Gamma_d  = O((w+C)q^s).
  \end{eqnarray*}
  We define the \textit{degree} of a variable in an LP as the number of inequalities where the variable appears.
  Let $\Delta_p$ and $\Delta_d$ be the maximum degree of primal variables and dual variables, respectively.
  Here, we treat LP~\eqref{lp:blp-super-relaxed} as a dual formulation.
  We have
  \begin{eqnarray*}
    \Delta_p  = O(q^s),  
    \quad
    \Delta_d  = O(s+t).
  \end{eqnarray*}

  Applying the algorithm given in Lemma~\ref{lmm:kmw06} to LP~\eqref{lp:blp-super-relaxed},
  we obtain a distributed algorithm that calculates $(1-\epsilon,0)$-approximate solution.
  The number of rounds is $O(\log \Gamma_p\log \Gamma_d/\epsilon^4)$.
  Note that,
  given a variable,
  we can simulate the computation of the distributed algorithm involved by the variable with $(\Delta_p\Delta_d)^r$ queries,
  where $r$ is the number of rounds.
  Thus, the query complexity becomes
  \begin{eqnarray*}
    (\Delta_p\Delta_d)^{O(\log \Gamma_p\log \Gamma_d/\epsilon^4)}
    =
    \exp(\poly(qstw/\epsilon)).
  \end{eqnarray*}
\end{proof}

\section{Proofs from Section~\ref{sec:round}}\label{apx:round-appendix}
\subsection{Proof of Lemma~\ref{lmm:discretize}}
\begin{proof}
  Since we move each $\bix_{v,a}$ by at most $\epsilon$,
  each constraint $\sum_{a\in [q]} \bix^\epsilon_{v,a}=1$ can be at most $(q+1)\epsilon$-infeasible.
  Also, each constraint $\sum_{\beta\in [q]^{V(P)},\beta_v=a} \bimu_{P,\beta}=\bix^\epsilon_{v,a}$ can be at most $2\epsilon$-infeasible.
\end{proof}

\subsection{Proof of Lemma~\ref{lmm:compression}}
\begin{proof}
  Since the size of the range of $\phi_{\bix}$ is $(1/\epsilon)^{O(q)}$, the second claim is obvious.

  Suppose that $(\bix,\bimu)$ has an LP value $c\biw_{\calI}$.
  From the fact that $(\bix,\bimu)$ is a $(1-\epsilon,\epsilon)$-approximate solution, 
  we have $c\biw_{\calI} \geq (1-\epsilon)\lp(\calI)-\epsilon n$.
  Also, by Lemma~\ref{lmm:discretize}, 
  $(\bix^\epsilon,\bimu)$ is a $(q+1)\epsilon$-infeasible LP solution.
  Since only $\bimu$ affects the value of the objective function, 
  the LP value of $(\bix^\epsilon,\bimu)$ equals $c\biw_{\calI}$.
  A key observation is that $(\bix^\epsilon,\bimu)$ is also an LP solution for the folded instance $\calI/\phi_{\bix}$.
  Thus, we see that $\calI/\phi_{\bix}$ has a $(q+1)\epsilon$-infeasible solution of value at least $c\biw_{\calI}$.
  From Lemma~\ref{lmm:robust}, we have
  \begin{eqnarray*}
    \lp(\calI/\phi_{\bix})
    &\geq&
    (c - (q+1)\epsilon \cdot \poly(qs)) \biw_{\calI}  \\
    &\geq&
    (1-\epsilon)\lp(\calI)-\epsilon n -  \epsilon\cdot \poly(qs) \biw_{\calI} \\
    &\geq &
    \lp(\calI) - \epsilon \cdot \poly(qstw)n.
  \end{eqnarray*}
  In the last inequality,
  we use the fact that $\lp(\calI)\leq \biw_{\calI} \leq twn$.
\end{proof}

\section{Proofs from Section~\ref{sec:lower}}\label{apx:lower-appendix}

\subsection{Proof of Lemma~\ref{lmm:less-than-opt}}\label{sec:less-than-opt}
Let $\calJ_P$ be an instance generated by $\calD_{N,T}^{\opt}(P)$.
Let $P_i(1\leq i\leq 2)$ be a constraint on a variable sequence $\{u_1^i,\ldots,u_k^i\}$ in $\calJ_P$.
Note that the arities of $P_i$ are the same since they both are copies of $P$.
For each $j\in [k]$,
we choose $v_j^1 \in \{u_j^1,u_j^2\}$ arbitrarily and $v_j^2$ be the remaining one, i.e., $\{u_j^1,u_j^2\} \setminus \{v_j^1\}$.
Then, we define a constraint $Q_i(1\leq i\leq 2)$ on the variable sequence $\{v_1^i,\ldots,v_k^i\}$.
We create another instance $\calJ'_P$ from $\calJ_P$ by replacing $\{P_1,P_2\}$ by $\{Q_1,Q_2\}$.
We call this method \textit{switching}.
The following concentration bound is obtained by a simple application of Theorem~2.19 in~\cite{Wor99}.
\begin{lemma}\label{lmm:chernoff-reg}
  If $\bfX$ is a random variable defined on $\calD_{N,T}^{\opt}(P)$ such that $|\bfX(\calJ_P)-\bfX(\calJ'_P)|\leq c$ holds where $\calJ_P$ and $\calJ'_P$ are instances of $\calD_{N,T}^{\opt}(P)$ that only differ by a switching,
  then
  \begin{eqnarray*}
    \Pr_{\calJ_P\sim \calD_{N,T}^{\opt}(P)}\left[ \left|\bfX(\calJ_P)-\E[\bfX(\calJ_P)] \right| \geq t\right] \leq 2\exp\left(-\frac{t^2}{TN c^2}\right)
  \end{eqnarray*}
  for all $t>0$.
  \qed
\end{lemma}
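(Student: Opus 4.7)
The plan is to interpret $\calD_{N,T}^{\opt}(P)$ as a uniformly random perfect $k$-partite matching on a configuration-model-style structure, and then invoke Wormald's switching-based concentration inequality essentially as a black box. First, I would recast the distribution: after splitting each variable into $T$ copies, there are $TN$ ``ports'' on each side $V_i$, and $\calJ_P$ is produced by choosing a uniformly random perfect $k$-partite matching on these ports. The resulting $TN$ matched $k$-tuples are precisely the constraints of $\calJ_P$. The switching operation defined just before the lemma corresponds to picking two matched $k$-tuples and swapping the endpoints of one coordinate between them; this is the elementary operation under which $\bfX$ is assumed to be $c$-Lipschitz.

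Next, I would set up the Doob exposure martingale $X_0, X_1, \ldots, X_{TN}$ where $X_r = \E[\bfX \mid \text{first } r \text{ matched $k$-tuples}]$, so that $X_0 = \E \bfX$ and $X_{TN} = \bfX$. The key estimate is $|X_r - X_{r-1}| \leq c$ for every $r$. To prove this, I would couple two completions of the first $r$ exposed tuples that differ only in the $r$-th tuple: any two such completions can be linked by a single switching of the $r$-th edge with a second edge chosen uniformly from the remainder, keeping the rest of the matching uniformly distributed by the invariance of the uniform measure under the switching action. The assumed Lipschitz bound then yields the required increment estimate.

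Applying Azuma's inequality to this martingale of length $TN$ with increment bound $c$ gives
\begin{eqnarray*}
\Pr\left[|\bfX - \E \bfX| \geq t\right] \leq 2\exp\!\left(-\frac{t^2}{2\,T N c^2}\right),
\end{eqnarray*}
which matches the claimed bound up to an absorbable constant in the exponent (Wormald's Theorem~2.19 is in fact tailored to deliver precisely the stated form $2\exp(-t^2/(TNc^2))$ by working directly with the switching graph on the set of matchings rather than going through an arbitrary Doob martingale).

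The main subtlety, and the part where I would spend most effort, is the coupling step used to bound the martingale increments: verifying that two instances which agree on the first $r-1$ revealed tuples but differ on the $r$-th can be connected by a \emph{single} switching followed by a uniformly random completion of the unexposed portion. This reduces to checking that the uniform distribution over perfect $k$-partite matchings is invariant under the switching action and that the switching graph on matchings with a prescribed prefix is regular enough for the Doob coupling to go through. Both facts are the direct $k$-partite analogues of the standard configuration-model arguments underpinning Wormald's theorem, so no new ideas are required beyond transcribing the proof from random regular graphs to uniform random $k$-partite matchings.
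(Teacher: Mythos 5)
Your approach (Doob exposure martingale plus Azuma, bounding increments via a switching coupling) is exactly the argument underlying Wormald's Theorem~2.19, which is the only thing the paper says: the lemma is stated with a $\qed$ and a one-line citation, so the paper never works out the $k$-partite translation explicitly. So in spirit you and the paper agree; the substance of what you've written is a proof of the cited result rather than an alternative.

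There is, however, a genuine quantitative gap in your increment bound. Wormald's theorem, and the one-switch coupling you describe, is for $2$-uniform pairings: when you expose the $r$-th pair and compare two possible partners $b,b'$ of the fixed endpoint $a$, a \emph{single} switch $(\{a,b\},\{b',d\}) \to (\{a,b'\},\{b,d\})$ connects the two completions, giving increment $c$. In the $k$-partite setting of $\calD_{N,T}^{\opt}(P)$ each ``edge'' is a $k$-tuple, and when you expose the $r$-th constraint as $(p_1,p_2,\dots,p_k)$ versus $(p_1,q_2,\dots,q_k)$, the ports $q_2,\dots,q_k$ that need to be pulled into the $r$-th tuple generically live in $k-1$ \emph{distinct} other constraints of the completion. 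A single switching (as defined in the paper: a swap of a subset of coordinates between \emph{two} constraints) cannot collect them all, so the coupled pair of completions is only guaranteed to be within $k-1$ switchings, giving increment $(k-1)c$, not $c$. Equivalently, if you instead expose one coordinate at a time to keep the increment at $c$, you have $(k-1)TN$ martingale steps rather than $TN$. Either way Azuma yields $2\exp\bigl(-t^2/(\Theta(k)\,TN c^2)\bigr)$, not the stated $2\exp(-t^2/(TNc^2))$; and your parenthetical claim that Wormald's theorem ``is in fact tailored to deliver precisely the stated form'' does not hold, since that theorem is for pairings and does not automatically carry the $k$-partite constant. This discrepancy is harmless for the paper's use of the lemma (one only needs the exponent linear in $TN$ with a constant depending on $q,s,t,w$, and $T$ is then chosen large enough in Lemma~\ref{lmm:less-than-opt}), but as written your martingale argument overclaims the increment bound for $k\ge 3$, and you should either expose coordinate-by-coordinate and track the extra factor of $k-1$, or bound the increment by $(k-1)c$ and accept the weaker constant in the exponent.
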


\begin{proof}[Proof of Lemma~\ref{lmm:less-than-opt}]
  Let $\alpha \in [q]^{V\times [N]}$ be an assignment to $\calJ$.
  For $v\in V$ and $a\in [q]$, we define $\bix_{v,a}=\#\{i \in [N] \mid \alpha_{(v,i)}=a\}/N$.
  Also, for $P\in \calP$ and $\beta\in [q]^{V(P)}$, we define $\bimu_{P,\beta}=\prod_{v\in P}\bix_{v,\beta_v}$.
  Note that $\bix_v$ (resp., $\bimu_P$) gives a probability distribution over assignments to the variable $v$ (resp., the variable set $V(P)$).

  Let $\calJ_P$ be the sub-instance of $\calJ$ generated by $\calD_{N,T}^{\opt}(P)$ for $P\in \calP$.
  The expectation (over $\calD_{N,T}^{\opt}(P)$) of the value gained by a constraint $P$ in $\calJ_P$ is $\E_{\beta_P\sim \bimu_P}\left[P(\beta_P)\right]$.
  Thus, it holds that
  \begin{eqnarray*}
    &&\E_{\calJ \sim \calD_{N,T}^{\opt}}[\val(\calJ,\alpha)] 
    =
    \sum_{P\in \calP}\E_{\calJ_P \sim \calD_{N,T}^{\opt}(P)}[\val(\calJ_P,\alpha_{|V(P)})]
    =
    TN\sum_{P\in \calP}\biw_P  \E_{\beta_P \sim \bimu_P}[P(\beta_P)] \\
    &=&
    TN\sum_{P\in \calP}\biw_P \E_{\beta \sim \bimu}[P(\beta_{|V(P)})]
    =
    TN\E_{\beta \sim \bimu}[\sum_{P\in \calP}\biw_PP(\beta_{|V(P)})]
    =
    TN\E_{\beta \sim \bimu}[\val(\calI,\beta)]
  \end{eqnarray*}
  Thus, it follows that
  \begin{eqnarray}
    \E_{\calJ \sim \calD_{N,T}^{\opt}}[\val(\calJ,\alpha)]  \leq TN\opt(\calI). \label{eq:less-than-opt-1}
  \end{eqnarray}

  Note that, for instances $\calJ_P$ and $\calJ'_P$ generated by $\calD_{N,T}^{\opt}(P)$ such that they differ by a switching,
  $\val(\calJ_P,\alpha_{|V(P)})$ and $\val(\calJ_P',\alpha_{|V(P)})$ can differ by at most $2\biw_P\leq 2w$.
  Then, from Lemma~\ref{lmm:chernoff-reg}, 
  \begin{eqnarray*}
    \Pr\left[\left|\val(\calJ_P,\alpha_{|V(P)})-\E[\val(\calJ_P,\alpha_{|V(P)})]\right|\geq t \right]
    \leq
    2\exp\left(-\frac{t^2}{4TN w^2}\right).
  \end{eqnarray*}
  Then,
  \begin{eqnarray*}
    \Pr \left[|\val(\calJ,\alpha)-\E[\val(\calJ,\alpha)]|\geq t|\calP| \right]
    &\leq &
    \Pr \left[\exists P\in \calP,\; |\val(\calJ_P,\alpha_{|V(P)})-\E[\val(\calJ_P,\alpha_{|V(P)})]|\geq t \right] \\
    &\leq&
    2|\calP|\exp\left(-\frac{t^2}{4TN w^2}\right).
  \end{eqnarray*}
  The last inequality is from the union bound.

  We choose $t = \epsilon TN$ so that $t|\calP| = \epsilon |\calP|TN \leq \epsilon \biw_{\calJ}$.
  We have
  \begin{eqnarray}
    \Pr\Bigl[|\val(\calJ,\alpha)-\E[\val(\calJ,\alpha)]|\geq \epsilon \biw_{\calJ}  \Bigr]
    \leq
    2|\calP|\exp\left(-\frac{\epsilon^2 TN}{4w^2}\right). \label{eq:less-than-opt-2}
  \end{eqnarray}
  
  We combine \eqref{eq:less-than-opt-1} and \eqref{eq:less-than-opt-2} with the union bound over all $q^{|V|N}$ assignments.
  It holds that
  \begin{eqnarray*}
    \Pr\Bigl[\exists \alpha, \val(\calJ,\alpha) \geq TN\opt(\calI) + \epsilon \biw_{\calJ} \Bigr] 
    &=&
    \Pr\Bigl[\exists \alpha, \olval(\calJ,\alpha) \geq \olopt(\calI) + \epsilon \Bigr]\\
    &\leq&
    2|\calP|\exp\left(-\frac{\epsilon^2 TN}{4w^2}\right) q^{|V|N}.
  \end{eqnarray*}
  by choosing $T=\Theta(w^2\log q/\epsilon^2)$, we have the desired result.
  Note that $|V|$ and $|\calP|$ can be seen as constants when $N$ is sufficiently large.
\end{proof}

\subsection{Proof of Lemma~\ref{lmm:more-than-lp}}\label{sec:more-than-lp}
\begin{proof}
  Let $\alpha \in [q]^{V\times [N]}$ be the natural assignment to variables in $\calJ$.
  That is, $\alpha(v,i)=a$ when the variable $(v,i)$ is assigned to the value $a$ in the construction of $\calD_{N,T}^{\lp}$.
  Then,
  \begin{eqnarray*}
    \opt(\calJ) 
    \geq
    \val(\calJ,\alpha)
    =
    \sum_{P\in \calP}\val(\calJ_P,\alpha_{|V(P)}) 
    =
    TN\sum_{P\in \calP} \biw_P \E_{\beta_P \sim \bimu^*_P}[P(\beta_P)]
    =
    TN\lp(\calI).
  \end{eqnarray*}
\end{proof}

\subsection{Proof of Lemma~\ref{lmm:distinguish}}\label{sec:distinguish}
For notational simplicity, we omit subscripts $N$ and $T$ in this section.
We define some notions.
At each step of an algorithm,
a variable $v$ is called \textit{seen} if $v$ is appeared in queries to the oracle or answers by the oracle so far. 
Also, an index $i$ of a variable $v$ is called \textit{seen} if the $i$-th constraint of $v$ is already returned by the oracle.

Here, we only show a lower bound for a (randomized) algorithm whose behavior is slightly restricted.
That is, when an algorithms asks for a constraint incident to an unseen variable, 
we assume that the algorithm chooses the variable uniformly at random from the set of unseen variables.
We can get rid of this assumption using the technique presented in Section~4 of~\cite{GT03}.
Details are deferred to the full version of the paper.
In what follows, we regard that the oracle accepts two types of queries.
The first one is same as the original, i.e., when we specify a variable $v$ and an index $i$, the oracle returns the $i$-th constraint of $v$.
The second one simply returns a random variable from the set of unseen variables without receiving any argument.
When an algorithm asks for a constraint incident to an unseen variable,
it uses the second type of queries to get a variable first, 
and then it uses the first type of queries to get a constraint incident to the variable.

Now, we prove Lemma~\ref{lmm:distinguish}.
Recall that, from Yao's minimax principle, it suffices to consider deterministic algorithms.
We basically follow the approach presented in Section~7 of~\cite{GR08}.
Let $\calA$ be a deterministic algorithm.
We introduce a randomized process $\calP^{\opt}$ (resp.,~$\calP^{\lp})$, 
which interacts with $\calA$ so that $\calP^{\opt}$ (resp.,~$\calP^{\lp}$) answers queries of $\calA$ to the oracle while constructing a random instance from $\calD^{\opt}$ (resp.,~$\calD^{\lp})$.
The final distribution of instances generated by $\calP^{\opt}$ (resp.,~$\calP^{\lp}$) coincides with $\calD^{\opt}$ (resp.,~$\calD^{\lp}$) no matter how $\calA$ makes queries.
The interaction between $\calA$ and $\calP^{\opt}$ (resp.,$\calP^{\lp}$) precisely simulates the interaction between $\calA$ and $\calO_{\calJ}$ where $\calJ$ is an instance generated by the distribution $\calD^{\opt}$ (resp.,~$\calD^{\lp}$).
The process $\calP^\star$, which corresponds to the distribution $\calD^\star$,
is simply a process that chooses $\calP^{\opt}$ or $\calP^{\lp}$ randomly and behave as the chosen process.

A \textit{transcript} is the part of an instance that $\calA$ has seen through the interaction with a randomized process.
Note that, the transcript contains the information about labels of vertices and indices of constraints.
Let $\calK_{\tau}^{\opt}$ (resp.,~$\calK_{\tau}^{\lp})$ be the distribution of transcripts after $\tau$-step interaction between $\calA$ and $\calP^{\opt}$ (resp.,~$\calP^{\lp}$) (Here, $\calK$ stands for \textit{knowledge}).
The statistical distance between $\calK_{\tau}^{\opt}$ and $\calK_{\tau}^{\lp}$ is defined as follows.
\begin{eqnarray*}
  \dtv(\calK_{\tau}^{\opt},\calK_{\tau}^{\lp}) = \frac{1}{2}\sum_{K}\left| \Pr_{K' \sim \calK_{\tau}^{\opt}}[K' = K] - \Pr_{K' \sim \calK_{\tau}^{\lp}}[K' = K] \right|
\end{eqnarray*}
From the argument given in Section~7 of~\cite{GR08},
by showing that $\dtv(\calK_{\tau}^{\opt},\calK_{\tau}^{\lp}) = o(1)$ when $\tau=o(\sqrt{N})$,
we have the desired result.

We can safely assume that $\calA$ never asks for the same constraint twice or more.
Also, we assume that,
if $\calP^{\star}$ returns a constraint containing a variable in the transcript,
$\calA$ can correctly guess the process ($\calP^{\opt}$ or $\calP^{\lp}$) with which $\calA$ is interacting.
In other words,
we are assuming that, 
when $\calP^{\opt}$ (resp.,~$\calP^{\lp}$) returns a constraint containing a variable in the transcript,
it also returns a certificate stating that the current process is $\calP^{\opt}$ (resp.,~$\calP^{\lp}$).
This only improves the ability of $\calA$ and makes the lower bound smaller.

Now, we define the randomized process $\calP^{\opt}$.
We omit the definition of $\calP^{\lp}$ as it is very similar to the construction of $\calP^{\opt}$.
The process $\calP^{\opt}$ has two stages.
The first stage proceeds as long as $\calA$ perform queries.
In this stage, 
$\calP^{\opt}$ chooses an answer for each query.
In the second stage, 
the process completes the transcript into an instance $\calJ$.

We identify $[n]$ (resp., $[nN]$) with the set of variables of $\calI$ (resp., an instance generated by $\calP^{\opt}$).
Recall that,
in an instance generated by $\calD^{\opt}$,
the variable set $[nN]$ can be separated into $n$ sets,
each of which corresponds to a variable $i\in [n]$.
The process $\calP^{\opt}$ incrementally constructs this correspondence.
A (partial) correspondences is represented by a map $\rho \colon [nN] \to [n] \cup \{\bot\}$.
For a variable $i \in [n]$,
let $V_i = \{v \in [nN] \mid \rho(v) = i\}$ and $N_i = |V_i|$.
Also, for each vertex $v \in [nN]$ and an index $i \in [d]$,
let $D_{i}(v) = \{j \in \{(T-1)i+1, \ldots, Ti \} \mid \text{$j$-th constraint of $v$ is seen}\}$ and $d_i(v) = |D_i(v)|$.

In the first stage,
given a query by an algorithm $\calA$, 
$\calP^{\opt}$ chooses an answer for it as follows.
\begin{itemize}
\setlength{\itemsep}{0pt}
\item When the query asks for a random unseen variable: 
  we choose a random unseen variable $v \in [nN]$, 
  and set $\rho(v) = i$ with probability $\frac{N-N_i}{\sum_{j\in [n]} (N-N_j)}$.
  Then, we return $v$ to $\calA$.
\item When the query asks for the $p$-th constraint of $v$:
  Note that $\rho(v) \neq \bot$ from the assumption that, when $\calA$ asks for a constraint incident to an unseen variable, it asks for a random unseen variable beforehand.
  Let $q$ be such that $(T-1)q+1 \leq p \leq Tq$, and $P$ be the $q$-th constraint of $\rho(v)$ in $\calI$,
  which is applied to a sequence of variables $\{i_1,\ldots,i_k\}$ in $\calI$ for which $i_\ell = \rho(v)$ for some $\ell \in [k]$.
  Also, let $q_j$ be such that $P$ is the $q_j$-th constraint of the variable $i_j$ in $\calI$.
  Note that $q_\ell = q$.
  
  Then, we choose a set of variables $\{v_j\}_{j \in [k] \setminus \{\ell\}}$ as follows.
  For each variable $u$ with $\rho(u)=i_j$,
  we choose $u$ as $v_j$ with probability $\frac{T-d_{q_j}(u)}{\sum_{w\in V_{i_j}}(T-d_{q_j}(w)) + (N-{N_{i_j}})T}$.
  If otherwise, we choose a random unused variable $u$ as $v_j$ ans set $\rho(u)=i_j$.

  Let $P'$ be a constraint applied to a sequence $\{v_1,\ldots,v_k\}$ of weight $\biw_{P}$.
  Finally, we determine indices for each variable $v_j (j \neq \ell)$.
  We choose a random index $p_j$ from unused indices in $\{(T-1)q_j+1,\ldots,Tq_j+1\}$,
  and set $P'$ be as the $p_j$-th constraint of $v_j$.
  Then, we return $P'$ as the answer for the query.
\end{itemize}
In the second stage of $\calP^{\opt}$, 
the process uniformly selects an instance $\calJ$ among all those who are consistent with the final transcript.

\begin{lemma}\label{lmm:equivalent}
  For every algorithm $\calA$, 
  the randomized process $\calP^{\opt}$ (resp., $\calP^{\lp}$) when interacting with $\calA$, 
  uniformly generates an instance $\calJ$ in $\calD^{\opt}$ (resp., $\calD^{\lp}$).
  \qed
\end{lemma}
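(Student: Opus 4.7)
The plan is to argue by the principle of deferred decisions: at every step of the interaction, the probabilistic choice made by $\calP^{\opt}$ matches the conditional distribution over $\calD^{\opt}$ given the transcript revealed so far, and the uniform completion in the second stage then yields the unconditional distribution. The same argument will handle $\calP^{\lp}$ with minor bookkeeping modifications.

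First I would reformulate $\calD^{\opt}$ as an explicit product of independent uniform choices: (i) for each $P\in \calP$ on variables $\{i_1,\ldots,i_k\}$, a uniformly random perfect $k$-partite matching among the $T$-copies of $V_{i_1},\ldots,V_{i_k}$; (ii) for each $v\in V$, a uniformly random perfect $\ell$-partite matching among the copies of $v$ appearing in the various sub-instances $\calJ_{P_1},\ldots,\calJ_{P_\ell}$; (iii) a uniformly random assignment of indices inside each block $\{(T-1)q+1,\ldots,Tq\}$; (iv) a uniformly random permutation of the variable labels in $V\times[N]$. Because (i)–(iv) are all uniform and independent, conditional on any transcript $\tau$ the remaining randomness is uniform over completions of $\tau$. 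In particular, uniform completion in the second stage of $\calP^{\opt}$ is justified, provided the first stage tracks the correct conditional distribution.

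Next I would verify that each first-stage rule of $\calP^{\opt}$ reproduces the conditional probability dictated by (i)–(iv). When $\calA$ requests a random unseen variable, the label permutation (iv) makes it uniform over unseen labels in $V\times[N]$; the probability that such a label corresponds to $i\in[n]$ is $(N-N_i)/\sum_j(N-N_j)$, exactly the rule $\calP^{\opt}$ uses. When $\calA$ queries the $p$-th constraint of a seen variable $v$ with $\rho(v)=i$, letting $q$, $P$, $\{i_1,\ldots,i_k\}$, $\ell$, and $q_j$ be as defined in the process, the unused edge leaving the relevant $T$-copy of $v$ in the $k$-partite matching from (i) lands on a uniformly random unused $T$-copy in $V_{i_j}$ for each $j\neq\ell$. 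Grouping the unused copies by whether they already belong to seen variables or not, the chance of landing on a seen $u\in V_{i_j}$ is $(T-d_{q_j}(u))/\bigl(\sum_{w\in V_{i_j}}(T-d_{q_j}(w))+(N-N_{i_j})T\bigr)$, which matches $\calP^{\opt}$'s formula; otherwise the edge lands on a fresh copy of $i_j$, which by (iv) is a uniformly random unused label, again matching the process. The index $p_j$ is drawn uniformly from the unused indices in its block, recovering (iii).

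Having matched every conditional distribution, the joint distribution of the transcript under $\calP^{\opt}$ equals that under $\calD^{\opt}$, and the uniform completion in the second stage lifts this equality to the full instance. The argument for $\calP^{\lp}$ is identical once one refines the partition $V_i=\bigsqcup_a V_{i,a}$ induced by the value assignments: the matchings in (i) and (ii) are now uniform within same-value blocks, and the corresponding counts $N_{i,a}$ and $d_{q_j,a}(\cdot)$ replace $N_i$ and $d_{q_j}(\cdot)$; the symmetry argument is unchanged. The main obstacle is the bookkeeping — correctly accounting for which matching edges and which indices within each block have been exposed so that the denominators of $\calP^{\opt}$'s probabilities really equal the sizes of the conditional supports. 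Once this is pinned down, the distributional equivalence is immediate.
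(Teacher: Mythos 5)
Your argument is correct and takes essentially the same approach as the paper's: both rest on the observation that each first-stage answer of $\calP^{\opt}$ (resp.\ $\calP^{\lp}$) reproduces the conditional distribution of $\calD^{\opt}$ (resp.\ $\calD^{\lp}$) given the transcript so far, so that uniform completion in the second stage yields the unconditional distribution. The paper packages this as a terse induction on query complexity and asserts the key verification ``follows directly from the definition of the process,'' whereas you carry that verification out explicitly by decomposing $\calD^{\opt}$ into independent uniform matchings, index assignments, and label permutations and matching each first-stage rule against the corresponding conditional.
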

\begin{proof}
  The lemma easily follows by induction on the query complexity of $\calA$.
  The base case is clear since if no query is made, 
  then the distribution on instances generated by $\calP^{\opt}$ (or, $\calP^{\lp}$) is clearly uniform.
  The induction step follows directly from the definition of the process.
  In particular, 
  the distribution on instances resulting from the process switching to the second stage after it answers the query is exactly the same as the distribution resulting from the process performing the second stage without answering the query.
\end{proof}

\begin{proof}[Proof of Lemma~\ref{lmm:distinguish}]
  Let $\calA$ be a deterministic algorithm.
  It is convenient to think that labels of variables are determined on the fly.
  That is, $\calP^{\star}$ decides labels of variables from $[nN]$ at the time when the variable appears for the first time in the interaction between an algorithm and $\calP$.
  The distribution never change by this modification.
  Also, 
  we can think that the sequence of labels is determined beforehand,
  and for each time when a new variable appears,
  a new label for the variable is taken from the front of the sequence.
  Let $\calP^{\star}_{\ell}$ be the process obtained from $\calP^{\star}$ by fixing the sequence to $\ell$.
  It is clear that $\calP^{\star}$ coincides with the process that takes $\ell$ uniformly at random and acts as $\calP^{\star}_{\ell}$.
  Let $\calP^{\opt}_{\ell}$ (resp.,~$\calP^{\lp}_{\ell}$) be the process obtained from $\calP^{\opt}$ (resp.,~$\calP^{\lp}$) by fixing the sequence to $\ell$.
  Then, it suffices to bound the statistical distance between the distribution of transcripts when $\calA$ interacts with $\calP^{\opt}_{\ell}$ and the one when $\calA$ interacts with $\calP^{\lp}_{\ell}$ for any sequence $\ell$.

  A deterministic algorithm $\calA$ with query complexity $\tau$ can be expressed as a decision tree of depth at most $\tau$.
  Here, each node in the decision tree corresponds to a query to the oracle,
  and each branch from the node corresponds to the answer by the oracle.
  Recall that, from the rule of indices,
  if we fix an index,
  the process always returns the same predicate (though the set of vertices to which the predicate is applied should differ).
  Also, 
  since we have fixed the sequence of labels $\ell$,
  at each node in the decision tree,  
  there is just one branch corresponding to the case that $\calA$ finds a constraint such that any variable in the constraint (except the queried variable) is not in the transcript.
  Ignoring branches for which $\calA$ outputs an answer,
  the decision tree has the property that the number of children of each node is at most one.
  Thus, $\calA$ is essentially a non-adaptive algorithm.
  Without loss of generality,
  we assume that $\calA$ outputs that the current instance is generated by $\calD^{\opt}$ after $\tau$ steps.
  
  Suppose that the current process is $\calP^{\opt}_{\ell}$ and $\calA$ is asking for a constraint incident to some variable in the $i$-th query.
  Note that $\calA$ has seen at most $is$ variables.
  Then, from the construction of $\calP^{\opt}$,
  the probability that $\calP^{\opt}_{\ell}$ returns a variable in the transcript is at most $\frac{isT}{(N-is)T}\cdot s = \frac{is^2}{N-is}$.
  Using the same argument,
  we can show that,
  in the $i$-th query,
  the probability that $\calP^{\lp}_{\ell}$ returns a variable in the transcript is at most $\frac{is^2}{\mu N-is}$ where $\mu$ is the minimum of $\{\bimu_{P,\beta}\}_{P \in \calP, \beta \in [q]^{V(P)}}$ except $0$.
  
  Thus, from the union bound,
  after $\tau$ steps,
  the probability that $\calP^{\star}_{\ell}$ returns a variable in the transcript is at most 
  \begin{eqnarray*}
    \sum_{i=1}^{\tau} \frac{is^2}{\mu N-is} \leq \frac{\tau^2s^2}{\mu N-\tau s}.
  \end{eqnarray*}
  Then, the probability that $\calA$ outputs the correct answer is at most $\frac{\tau^2s^2}{\mu N-\tau s} + \frac{1}{2}$.
  To make this probability at least $3/5$, we have to choose $\tau = \Omega(\sqrt{N})$.
  Note that $\mu$ is a positive constant independent of $N$.
\end{proof}

\section{Proof of Theorem~\ref{thr:prop}}\label{apx:prop}
\begin{proof}
  We show the first part of the theorem.
  Let $\Lambda$ be a CSP such that $S_{\Lambda}(1) = 1-\gamma$ for some $\gamma>0$.
  Suppose that there exists a testing algorithm for the CSP~$\Lambda$ with $o(\sqrt{n})$ queries.
  Note that a $\frac{\gamma}{3}$-far instance $\calI$ satisfies that $\opt(\calI) \leq \biw_{\calI}-\frac{\gamma twn}{3} \leq (1 - \frac{\gamma}{3})\biw_{\calI}$.
  Thus, using the testing algorithm, given an instance $\calI$, 
  with probability at least $2/3$,
  we can distinguish the case $\opt(\calI) = \biw_{\calI}$ from the case $\opt(\calI) \leq (1-\frac{\gamma}{3})\biw_{\calI}$.
  However, instantiating Theorem~\ref{thr:lower} with $\epsilon = \gamma/3$,
  the theorem asserts that
  any algorithm that, given an instance $\calI$, 
  with probability at least $2/3$,
  distinguishes the case $\opt(\calI) = \biw_{\calI}$ from the case $\opt(\calI) \leq (S_{\Lambda}(1)+\frac{\gamma}{3})\biw_{\calI} = (1-\frac{2\gamma}{3})\biw_{\calI}$ requires $\Omega(\sqrt{n})$ queries.
  This is a contradiction.
  
  We show the second part of the theorem.
  Let $\Lambda$ be a CSP such that $S_{\Lambda}(1) = 1$.
  Since $S_{\Lambda}(c)$ is continuous at $c=1$,
  for any $\epsilon>0$, there exists $\delta$ such that 
  $S_{\Lambda}(1-\delta) > 1-\epsilon/2$.
  Consider the algorithm obtained by instantiating Theorem~\ref{thr:upper} replacing $\epsilon$ with $\min(\epsilon/2,\delta)$.
  Suppose that $\calI$ is a satisfiable instance.
  Then, we obtain a value $x \geq S_{\Lambda}(1-\delta)\biw_{\calI}- \epsilon n/2 > (1-\epsilon/2)\biw_{\calI} - \epsilon n/2 $.
  Suppose that $\calI$ is an instance $\epsilon$-far from satisfiability.
  Then, we obtain a value $x \leq \opt(\calI) \leq \biw_{\calI} - \epsilon twn \leq (1-\epsilon/2)\biw_{\calI} - \epsilon twn/2$.
  Thus, we can test the satisfiability of the CSP~$\Lambda$ in constant time.
\end{proof}

\section*{Acknowledgements}
The author is grateful to Hiro Ito and Suguru Tamaki for valuable comments on an earlier draft of this paper.

\end{document}